\newcommand{\im}{\mathrm{i}}
\newcommand{\N}{\mathbb{N}}
\newcommand{\R}{\mathbb{R}}
\newcommand{\C}{\mathbb{C}}
\newcommand{\defeq}{:=}
\newcommand{\tens}{\otimes}
\newcommand{\ctens}{\hat{\otimes}}
\newcommand{\ds}{\circ}
\newcommand{\cH}{\mathcal{H}}
\newcommand{\xd}{\mathrm{d}}
\newcommand{\toi}{\hookrightarrow}
\newcommand{\id}{\mathrm{id}}
\newcommand{\hr}{\mathrm{H}}
\newcommand{\sr}{\mathrm{S}}
\newcommand{\src}{\mathrm{S,c}}
\newcommand{\cohn}{\tilde{K}}
\newcommand{\cohs}{K^{\sr}}
\newcommand{\cohns}{\tilde{K}^{\sr}}
\newcommand{\coha}{\hat{K}}
\newcommand{\cohas}{\hat{K}^{\sr}}
\newcommand{\bsta}{\hat{\mathcal{B}}}
\newcommand{\Sq}{S^\mathrm{q}}
\newcommand{\obs}{\mathcal{O}}
\newcommand{\cobs}{\mathcal{C}}
\newcommand{\cobsl}{\mathcal{C}^{\mathrm{lin}}}
\newcommand{\cobsw}{\mathcal{C}^{\mathrm{Weyl}}}
\newcommand{\tord}{\mathbf{T}}
\newcommand{\aglue}{\diamond}
\newcommand{\quant}{\mathcal{Q}}
\newcommand{\one}{\mathbf{1}}
\newcommand{\no}[1]{:#1:}
\newcommand{\ano}[1]{\blacktriangleleft #1 \blacktriangleright}
\theoremstyle{definition}
\newtheorem{dfn}{Definition}[section]
\theoremstyle{plain}
\newtheorem{lem}[dfn]{Lemma}
\newtheorem{prop}[dfn]{Proposition}
\begin{document}

\begin{titlepage}
\title{\textbf{Schrödinger-Feynman quantization\\ and composition of observables in\\ general boundary quantum field theory}}
\author{Robert Oeckl\footnote{email: robert@matmor.unam.mx}\\ \\
Centro de Ciencias Matemáticas,\\
Universidad Nacional Autónoma de México,\\
Campus Morelia, C.P.~58190, Morelia, Michoacán, Mexico}
\date{UNAM-CCM-2012-1\\ 9 January 2012\\ 5 March 2015 (v2)}

\maketitle

\vspace{\stretch{1}}

\begin{abstract}
We show that the Feynman path integral together with the Schrödinger representation gives rise to a rigorous and functorial quantization scheme for linear and affine field theories. Since our target framework is the general boundary formulation, the class of field theories that can be quantized in this way includes theories without a metric spacetime background. We also show that this quantization scheme is equivalent to a holomorphic quantization scheme proposed earlier and based on geometric quantization.
We proceed to include observables into the scheme, quantized also through the path integral. We show that the quantized observables satisfy the canonical commutation relations, a feature shared with other quantization schemes also discussed. However, in contrast to other schemes the presented quantization also satisfies a correspondence between the composition of classical observables through their product and the composition of their quantized counterparts through spacetime gluing. In the special case of quantum field theory in Minkowski space this reproduces the operationally correct composition of observables encoded in the time-ordered product. We show that the quantization scheme also generalizes other features of quantum field theory such as the generating function of the S-matrix.

\end{abstract}

\vspace{\stretch{1}}
\end{titlepage}

\tableofcontents


\section{Introduction}

Quantum field theory is our most successful framework for describing the phenomena of the physical world at the subatomic scale. In particular, any quantum theory used to explain physical phenomena is thought of as derivable from (if not identical to) some quantum field theory. On the other hand, our conceptual framework of what fundamentally constitutes a quantum theory dates from our understanding of non-relativistic quantum phenomena, before the rise of quantum field theory. There are many quantum theories that fit into that framework, but that are not fundamental. On the other hand, quantum field theory can be fit into that framework, too, but it does so with a certain tension. In particular, there are many features of quantum field theory that, while compatible with, appear peculiar from the point of view of that framework. This suggests to rethink what fundamentally constitutes a quantum theory and in doing so take very serious the lessons quantum field theory has taught us. In light of the persistent failure to reconcile general relativity with the traditional framework of quantum theory it is perhaps high time to do so.

Evidently, a background Minkowski spacetime is not one of the features that we propose should play a more fundamental role. Rather it is structural features of quantum field theory that are instrumental to its predictive power, but that are not part of the traditional conceptual framework of a quantum theory. In particular, these are features of the S-matrix (such as crossing symmetry), the particular concept of observable (with its time-ordered product), and spacetime locality features (as conveniently encoded in the Feynman path integral). There is an ongoing effort to abstract these features from their quantum field theoretic context and make them part of a novel foundational framework for quantum theory. This is partly the subject of the present paper. This development can be taken to start in the 1980s with works of E.~Witten, G.~Segal and others, leading to \emph{topological quantum field theory} (TQFT) \cite{Ati:tqft} and with it a whole new branch of algebraic topology. The relevant features of quantum field theory abstracted here are locality properties, in particular the properties of the Feynman path integral. While this has mostly turned into an area of pure mathematics, G.~Segal in particular has developed a version of TQFT as a basis for conformal field theory \cite{Seg:cftproc,Seg:cftdef} and also for 4-dimensional quantum field theory \cite{Seg:fklectures}.

A proposal to take this strand of developments as the starting point of a foundational approach to quantum theory was elaborated in \cite{Oe:gbqft} (after initial suggestions in \cite{Oe:catandclock,Oe:boundary}) under the name of \emph{general boundary formulation} (GBF) of quantum theory. Crucially, the relevant ingredients from TQFT are complemented in this approach by a generalization of the Born rule for the extraction of probabilities for measurement processes. This makes possible a consistent probability interpretation which needs no reference to a metric spacetime background. In spite of this abstraction from metric spacetime, relevant features of quantum field theory are realized as fundamental properties in the GBF precisely in the spirit of our initial comments. In particular, the GBF inherits from TQFT the particular spacetime locality properties of amplitudes in quantum field theory usually encoded with the help of the Feynman path integral. Another feature of quantum theory that attains a more fundamental status in the GBF is crossing symmetry. This was indeed among the original motivations for introducing the GBF \cite{Oe:catandclock} and plays a crucial role in generalizations of the S-matrix based on the GBF \cite{CoOe:spsmatrix,CoOe:smatrixgbf,Col:desitterletter,Col:desitterpaper,CoDo:smatrixcsp}.

In light of the previous remarks the Feynman path integral suggests itself as a tool in the quantization of field theories in the GBF. This is naturally combined with the Schrödinger representation \cite{Oe:boundary,Oe:gbqft}. We shall refer to the quantization scheme determined by this combination as Schrödinger-Feynman quantization. This scheme has been extensively used to quantize field theories in the GBF. However, it has lacked so far a mathematically rigorous footing. To partially remedy this is one of the purposes of the present paper. Obviously, this can only be successful if we restrict to a sufficiently simple class of field theories. To this end we consider here linear field theory and the slightly more general affine field theory.

We start with a brief review of the geometric (Section~\ref{sec:geomax}) and algebraic (Section~\ref{sec:coreaxioms}) core structures of the GBF in Section~\ref{sec:gbfrev}. The rigorous and functorial version of the Schrödinger-Feynman quantization scheme is the subject of Section~\ref{sec:sfaffine}. To this end we first recall in Section~\ref{sec:classaffine} an axiomatization of affine field theory as put forward in \cite{Oe:affine}. In Section~\ref{sec:squant} the Schrödinger quantization on hypersurfaces is carried out, by importing the relevant treatment from \cite{Oe:schroedhol}. The core of this part of the paper is Section~\ref{sec:feynquant} where the Feynman path integral is rigorously defined. The key result here is that the Schrödinger-Feynman quantization scheme defined in this way is equivalent in a precise sense to the holomorphic quantization scheme put forward in \cite{Oe:affine}. This ensures on the one hand the functoriality of the former and serves on the other hand as an a posteriori justification of an ad hoc ingredient in the latter. In Section~\ref{sec:linft} we discuss the special case of linear field theory which is thus equivalent to the holomorphic quantization scheme put forward in \cite{Oe:holomorphic}.

Special features of the concept of observable in quantum field theory and their incorporation into the GBF are the subject of the second part of this paper, consisting of Section~\ref{sec:obs}. A proper concept of quantum observable for the GBF has been elaborated only recently \cite{Oe:obsgbf}. It was already suggested there that observables of quantum field theory fit more naturally into this concept than into the traditional one of an operator on Hilbert space. Moreover, it was suggested there that a striking correspondence between the spacetime composition of classical observables and their quantized counterparts is a key feature of quantum field theory. This was termed there \emph{composition correspondence}.  It is reviewed and elaborated on here in Section~\ref{sec:obsqft}. In Section~\ref{sec:obsgbf} the concept of observable in the GBF and the associated notion of composition is recalled from \cite{Oe:obsgbf} and refined. A minimalistic notion of classical observable is formalized in Section~\ref{sec:cobsax}. In Section~\ref{sec:compcor} quantization axioms are formulated which formalize in particular the notion of composition correspondence.

The core ingredient of this part of the paper is provided in Section~\ref{sec:feynweyl} with the quantization formula for Weyl observables (i.e., observables that are exponentials of imaginary linear observables), derived from the Feynman path integral. It is here where results of the first part of the paper (in particular Section~\ref{sec:feynquant}) are crucially employed. In Section~\ref{sec:classlinobs} an axiomatization of linear field theory with linear observables is given. The main result of this part of the paper is Section~\ref{sec:gbqftlow}, where it is shown that the quantization of the so formalized classical linear field theory satisfies not only the core axioms, but also the observable axioms and the quantization axioms, including composition correspondence. Factorization properties of observable maps are derived in Section~\ref{sec:factorization}. These abstract and generalize the generating function of the S-matrix in quantum field theory. While the treatment is focused on Weyl observables up to this point, a much more general class of observables is made accessible in Section~\ref{sec:genobs}. In Section~\ref{sec:diracquant} the more conventional operator product is derived for infinitesimal regions and it is shown to satisfy the canonical commutation relations. A comparison with other quantization schemes is carried out in Section~\ref{sec:compquant}. Finally, in Section~\ref{sec:conclusions} some conclusions and a brief outlook are presented.


\section{General boundary quantum field theory}
\label{sec:gbfrev}

In this section we briefly recall the geometric setting as well as the core axioms of the GBF in the form given in \cite{Oe:holomorphic,Oe:affine}. We shall call a model satisfying the core axioms a \emph{general boundary quantum field theory} (GBQFT).

\subsection{Geometric data}
\label{sec:geomax}

We recall briefly how in the GBF the structure of spacetime is formalized in terms of a \emph{spacetime system}, involving abstract notions of \emph{regions} and \emph{hypersurfaces}. We follow here closely the presentation in \cite{Oe:affine}. For further discussion of the rationale behind this, see in particular \cite{Oe:gbqft}.

There is a fixed positive integer $d\in\N$, the \emph{dimension} of spacetime. We are given a collection of oriented topological manifolds of dimension $d$, possibly with boundary, that we call \emph{regions}. Furthermore, there is a collection of oriented topological manifolds without boundary of dimension $d-1$ that we call \emph{hypersurfaces}. All manifolds may only have finitely many connected components. When we want to emphasize explicitly that a given manifold is in one of those collections we also use the attribute \emph{admissible}. These collections satisfy the following requirements:
\begin{itemize}
\item Any connected component of a region or hypersurface is admissible.
\item Any finite disjoint union of regions or of hypersurfaces is admissible.
\item Any boundary of a region is an admissible hypersurface.
\item If $\Sigma$ is a hypersurface, then $\overline{\Sigma}$, denoting the same manifold with opposite orientation, is admissible.
\end{itemize}
It will turn out to be convenient to also introduce \emph{slice regions}.\footnote{In previous papers slice regions were called \emph{empty regions}.} A slice region is topologically simply a hypersurface, but thought of as an infinitesimally thin region. Concretely, the slice region associated with a hypersurface $\Sigma$ will be denoted by $\hat{\Sigma}$ and its boundary is defined to be the disjoint union $\partial \hat{\Sigma}=\Sigma\cup\overline{\Sigma}$. There is one slice region for each hypersurface (forgetting its orientation). When an explicit distinction is desirable we refer to the previously defined regions as \emph{regular regions}.

There is also a notion of \emph{gluing} of regions. Suppose we are given a region $M$ with its boundary a disjoint union $\partial M=\Sigma_1\cup\Sigma\cup\overline{\Sigma'}$, where $\Sigma'$ is a copy of $\Sigma$. ($\Sigma_1$ may be empty.) Then, we may obtain a new manifold $M_1$ by gluing $M$ to itself along $\Sigma,\overline{\Sigma'}$. That is, we identify the points of $\Sigma$ with corresponding points of $\Sigma'$ to obtain $M_1$. The resulting manifold $M_1$ might be inadmissible, in which case the gluing is not allowed.

Depending on the theory one wants to model, the manifolds may carry additional structure such as for example a differentiable structure or a metric. This has to be taken into account in the gluing and will modify the procedure as well as its possibility in the first place. Our description above is merely meant as a minimal one. Moreover, there might be important information present in different ways of identifying the boundary hypersurfaces that are glued. Such a case can be incorporated into our present setting by encoding this information explicitly through suitable additional structure on the manifolds.

For brevity we shall refer to a collection of regions and hypersurfaces with the properties given above as a \emph{spacetime system}. A spacetime system can be induced from a global spacetime manifold by taking suitable submanifolds. (This setting was termed a \emph{global background} in \cite{Oe:gbqft}.) On the other hand, a spacetime system may arise by considering regions as independent pieces of spacetime that are not a priori embedded into any global manifold. Indeed, depending on the context, it might be physically undesirable to assume knowledge of, or even existence of, a fixed global spacetime structure.

\subsection{Core axioms}
\label{sec:coreaxioms}

A GBQFT on a spacetime system is a model satisfying the following axioms.

\begin{itemize}
\item[(T1)] Associated to each hypersurface $\Sigma$ is a complex
  separable Hilbert space $\cH_\Sigma$, called the \emph{state space} of
  $\Sigma$. We denote its inner product by
  $\langle\cdot,\cdot\rangle_\Sigma$.
\item[(T1b)] Associated to each hypersurface $\Sigma$ is a conjugate linear
  isometry $\iota_\Sigma:\cH_\Sigma\to\cH_{\overline{\Sigma}}$. This map
  is an involution in the sense that $\iota_{\overline{\Sigma}}\circ\iota_\Sigma$
  is the identity on  $\cH_\Sigma$.
\item[(T2)] Suppose the hypersurface $\Sigma$ decomposes into a disjoint
  union of hypersurfaces $\Sigma=\Sigma_1\cup\cdots\cup\Sigma_n$. Then,
  there is an isometric isomorphism of Hilbert spaces
  $\tau_{\Sigma_1,\dots,\Sigma_n;\Sigma}:\cH_{\Sigma_1}\ctens\cdots\ctens\cH_{\Sigma_n}\to\cH_\Sigma$.
  The composition of the maps $\tau$ associated with two consecutive
  decompositions is identical to the map $\tau$ associated to the
  resulting decomposition.
\item[(T2b)] The involution $\iota$ is compatible with the above
  decomposition. That is,
  $\tau_{\overline{\Sigma}_1,\dots,\overline{\Sigma}_n;\overline{\Sigma}}
  \circ(\iota_{\Sigma_1}\ctens\cdots\ctens\iota_{\Sigma_n}) 
  =\iota_\Sigma\circ\tau_{\Sigma_1,\dots,\Sigma_n;\Sigma}$.
\item[(T4)] Associated with each region $M$ is a linear map
  from a dense subspace $\cH_{\partial M}^\ds$ of the state space
  $\cH_{\partial M}$ of its boundary $\partial M$ (which carries the
  induced orientation) to the complex
  numbers, $\rho_M:\cH_{\partial M}^\ds\to\C$. This is called the
  \emph{amplitude} map.
\item[(T3x)] Let $\Sigma$ be a hypersurface. The boundary $\partial\hat{\Sigma}$ of the associated slice region $\hat{\Sigma}$ decomposes into the disjoint union $\partial\hat{\Sigma}=\overline{\Sigma}\cup\Sigma'$, where $\Sigma'$ denotes a second copy of $\Sigma$. Then, $\tau_{\overline{\Sigma},\Sigma';\partial\hat{\Sigma}}(\cH_{\overline{\Sigma}}\tens\cH_{\Sigma'})\subseteq\cH_{\partial\hat{\Sigma}}^\ds$. Moreover, $\rho_{\hat{\Sigma}}\circ\tau_{\overline{\Sigma},\Sigma';\partial\hat{\Sigma}}$ restricts to a bilinear pairing $(\cdot,\cdot)_\Sigma:\cH_{\overline{\Sigma}}\times\cH_{\Sigma'}\to\C$ such that $\langle\cdot,\cdot\rangle_\Sigma=(\iota_\Sigma(\cdot),\cdot)_\Sigma$.
\item[(T5a)] Let $M_1$ and $M_2$ be regions and $M\defeq M_1\cup M_2$ be their disjoint union. Then $\partial M=\partial M_1\cup \partial M_2$ is also a disjoint union and $\tau_{\partial M_1,\partial M_2;\partial M}(\cH_{\partial M_1}^\ds\tens \cH_{\partial M_2}^\ds)\subseteq \cH_{\partial M}^\ds$. Moreover, for all $\psi_1\in\cH_{\partial M_1}^\ds$ and $\psi_2\in\cH_{\partial M_2}^\ds$,
\begin{equation}
 \rho_{M}\circ\tau_{\partial M_1,\partial M_2;\partial M}(\psi_1\tens\psi_2)= \rho_{M_1}(\psi_1)\rho_{M_2}(\psi_2) .
\label{eq:glueaxa}
\end{equation}
\item[(T5b)] Let $M$ be a region with its boundary decomposing as a disjoint union $\partial M=\Sigma_1\cup\Sigma\cup \overline{\Sigma'}$, where $\Sigma'$ is a copy of $\Sigma$. Let $M_1$ denote the gluing of $M$ with itself along $\Sigma,\overline{\Sigma'}$ and suppose that $M_1$ is a region. Note $\partial M_1=\Sigma_1$. Then, $\tau_{\Sigma_1,\Sigma,\overline{\Sigma'};\partial M}(\psi\tens\xi\tens\iota_\Sigma(\xi))\in\cH_{\partial M}^\ds$ for all $\psi\in\cH_{\partial M_1}^\ds$ and $\xi\in\cH_\Sigma$. Moreover, for any ON-basis $\{\xi_i\}_{i\in I}$ of $\cH_\Sigma$, we have for all $\psi\in\cH_{\partial M_1}^\ds$,
\begin{equation}
 \rho_{M_1}(\psi)\cdot c(M;\Sigma,\overline{\Sigma'})
 =\sum_{i\in I}\rho_M\circ\tau_{\Sigma_1,\Sigma,\overline{\Sigma'};\partial M}\left(\psi\tens\xi_i\tens\iota_\Sigma(\xi_i)\right),
\label{eq:glueaxb}
\end{equation}
where $c(M;\Sigma,\overline{\Sigma'})\in\C\setminus\{0\}$ is called the \emph{gluing anomaly factor} and depends only on the geometric data.
\end{itemize}

For ease of notation we will use the maps $\tau$ implicitly in the following, omitting their explicit mention.


\section{Schrödinger-Feynman quantization of affine field theory}
\label{sec:sfaffine}

Most quantum field theories of physical interest are at least in part based on the \emph{quantization} of a classical field theory. Such a quantization proceeds by transforming ingredients of the classical field theory into ingredients of a quantum field theory, following a more or less heuristic \emph{quantization scheme}.

In the case of the general boundary formulation (GBF) the objects that the quantization scheme has to produce are principally the Hilbert spaces associated to hypersurfaces and the amplitude maps associated to regions. Since these structures differ from those usually taken to define a quantum theory, most quantization schemes are at least not immediately adaptable to the GBF. An exception is the Schrödinger-Feynman quantization scheme. Here, the Hilbert space on each hypersurface is constructed according to the Schrödinger prescription, i.e., as a space of square-integrable functions on the space of field configurations on the hypersurface. The amplitude map for a region is constructed as the Feynman path integral over all field configurations in the region, evaluated with the boundary state inserted. It is fair to say that this quantization scheme was an essential ingredient in the motivations that lead to the emergence of the field of topological quantum field theory (TQFT) in the 1980s. Since the core axioms of the GBF may be seen as a specific variant of TQFT, it is unsurprising that Schrödinger-Feynman quantization is a natural candidate for a quantization scheme putting out quantum theories in GBF form \cite{Oe:boundary,Oe:gbqft}. Indeed, Schrödinger-Feynman quantization has been successfully implemented in the context of the GBF in various examples \cite{Oe:kgtl,Oe:2dqym,CoOe:spsmatrix,CoOe:smatrixgbf,CoOe:2deucl,Col:desitterletter,Col:desitterpaper}. However, even when leading to rigorous results in many cases, the quantization scheme is presented in those papers in a rather heuristic and non-rigorous form. This impedes its wider applicability, especially in the context of more complex geometric situations or for more complicated field theories.

We shall present in this section a fully rigorous and functorial version of Schrödinger-Feynman quantization for the GBF. Unsurprisingly, this comes at the cost of specialization on the field theory side. Indeed, we shall limit our considerations to affine field theory, i.e., where the spaces of local solutions of the field theory are naturally affine spaces. Linear field theory arises as the special case where local spaces of solutions have a special point and are thus linear spaces.

Fortunately, rather than having to construct the quantization scheme from scratch we can rely on the ``hard work'' done elsewhere, namely in the papers \cite{Oe:holomorphic,Oe:affine,Oe:schroedhol}.
Firstly, in \cite{Oe:schroedhol} a rigorous construction of the Schrödinger representation was given. This is precisely suitable to construct the Hilbert spaces associated to hypersurfaces. Secondly, we show that the heuristic Feynman path integral prescription leads to a precise definition of the amplitude map for regions. Thirdly, we recall the rigorous and functorial quantization scheme that was established using the holomorphic representation on hypersurfaces for linear field theory in \cite{Oe:holomorphic} and for affine field theory in \cite{Oe:affine}. Using further results of \cite{Oe:schroedhol} we bring the output of the two quantization schemes into a one-to-one correspondence. This allows us to conclude that the Schrödinger-Feynman quantization scheme indeed yields a GBQFT, i.e., satisfies the GBF core axioms as was proven for the holomorphic quantization scheme in \cite{Oe:holomorphic,Oe:affine}.

\subsection{Encoding classical affine field theory}
\label{sec:classaffine}

Recall that a set $A$ is an \emph{affine space} over a real vector space $L$ if there is a transitive and free abelian group action $L\times A\to A$, called \emph{translation}. As is customary, we shall write this action as addition, i.e, $(l,a)\mapsto l+a$ for $l\in L$ and $a\in A$. Also we shall be indiscriminate about the order, writing $l+a=a+l$. Given a \emph{base point} $e\in A$ we obtain a canonical identification of $L$ with $A$ via $l\mapsto l+e$.

An affine field theory is a field theory such that the local spaces of solutions are naturally affine spaces. We briefly recall from \cite{Oe:affine} how such field theories may be axiomatically formalized given a spacetime system. For each region $M$ we denote the affine space of solutions in $M$ by $A_M$ and the associated real vector space by $L_M$. Note that $L_M$ is canonically identified with the tangent space of $A_M$ at each point. Similarly, for a hypersurface $\Sigma$ we denote the space of (germs of) solutions in a neighborhood of $\Sigma$ by $A_\Sigma$. $L_\Sigma$ denotes the associated real vector space. Given a Lagrangian that induces the field theory this leads to further natural structures. For each region $M$ this is the action $S_M:A_M\to\R$. For each hypersurface $\Sigma$ this is the symplectic potential $\theta_\Sigma:A_\Sigma\times L_\Sigma\to\R$ and its exterior derivative, the symplectic form $\omega_\Sigma:L_\Sigma\times L_\Sigma\to\R$.

The following list of axioms from \cite{Oe:affine} is meant to capture precisely these ingredients of affine field theory and their interrelations in a way that is as universal as possible while being reasonably minimal. We remark that there is one further ingredient in the axioms below that, rather than being part of the classical field theory, already is a seed for its quantization. This is a complex structure $J_\Sigma$ on $L_\Sigma$ for each hypersurface $\Sigma$, which also partially determines the inner products $g_\Sigma$ and $\{\cdot,\cdot\}_\Sigma$. It will be discussed it in the following section.

\begin{itemize}
\item[(C1)] Associated to each hypersurface $\Sigma$ is a complex separable Hilbert space $L_\Sigma$ and an affine space $A_\Sigma$ over $L_\Sigma$ with the induced topology. The latter means that there is a transitive and free abelian group action $L_\Sigma\times A_\Sigma\to A_\Sigma$ which we denote by $(\phi,\eta)\mapsto \phi+\eta$. The inner product in $L_\Sigma$ is denoted by $\{\cdot ,\cdot\}_\Sigma$. We also define $g_\Sigma(\cdot,\cdot)\defeq \Re\{\cdot ,\cdot\}_\Sigma$ and $\omega_\Sigma(\cdot,\cdot)\defeq \frac{1}{2}\Im\{\cdot ,\cdot\}_\Sigma$ and denote by $J_\Sigma:L_\Sigma\to L_\Sigma$ the scalar multiplication with $\im$ in $L_\Sigma$. Moreover we suppose there are continuous maps $\theta_\Sigma:A_\Sigma\times L_\Sigma\to\R$ and $[\cdot,\cdot]_\Sigma:L_\Sigma\times L_\Sigma\to\R$ such that $\theta_\Sigma$ is real linear in the second argument, $[\cdot,\cdot]_\Sigma$ is real bilinear, and both structures are compatible via
\begin{equation}
[\phi,\phi']_\Sigma+\theta_\Sigma(\eta,\phi')=\theta_\Sigma(\phi+\eta,\phi')\qquad\forall \eta\in A_\Sigma, \forall \phi,\phi'\in L_\Sigma .
\end{equation}
Finally we require
\begin{equation}
 \omega_\Sigma(\phi,\phi')=\frac{1}{2} [\phi,\phi']_\Sigma-\frac{1}{2} [\phi',\phi]_\Sigma
\qquad \forall  \phi,\phi'\in L_\Sigma .
\end{equation}
\item[(C2)] Associated to each hypersurface $\Sigma$ there is a homeomorphic involution $A_\Sigma\to A_{\overline{\Sigma}}$ and a compatible conjugate linear involution $L_\Sigma\to L_{\overline\Sigma}$ under which the inner product is complex conjugated. We will not write these maps explicitly, but rather think of $A_\Sigma$ as identified with $A_{\overline{\Sigma}}$ and $L_\Sigma$ as identified with $L_{\overline\Sigma}$. Then, $\{\phi',\phi\}_{\overline{\Sigma}}=\overline{\{\phi',\phi\}_\Sigma}$ and we also require $\theta_{\overline{\Sigma}}(\eta,\phi)=-\theta_\Sigma(\eta,\phi)$ and $[\phi,\phi']_{\overline{\Sigma}}=-[\phi,\phi']_\Sigma$ for all $\phi,\phi'\in L_\Sigma$ and $\eta\in A_\Sigma$.
\item[(C3)] Suppose the hypersurface $\Sigma$ decomposes into a disjoint
  union of hypersurfaces $\Sigma=\Sigma_1\cup\cdots\cup\Sigma_n$. Then,
  there is a homeomorphism $A_{\Sigma_1}\times\dots\times A_{\Sigma_n}\to A_\Sigma$ and a compatible isometric isomorphism of complex Hilbert spaces
  $L_{\Sigma_1}\oplus\cdots\oplus L_{\Sigma_n}\to L_\Sigma$. Moreover, these maps satisfy obvious associativity conditions. We will not write these maps explicitly, but rather think of them as identifications. Also, $\theta_\Sigma=\theta_{\Sigma_1}+\dots+\theta_{\Sigma_n}$ and $[\cdot,\cdot]_\Sigma=[\cdot,\cdot]_{\Sigma_1}+\dots+[\cdot,\cdot]_{\Sigma_n}$.
\item[(C4)] Associated to each region $M$ is a real vector space $L_M$ and an affine space $A_M$ over $L_M$. Also, there is a map $S_M:A_M\to\R$.
\item[(C5)] Associated to each region $M$ there is a map $a_M:A_M\to A_{\partial M}$ and a compatible linear map of real vector spaces $r_M:L_M\to L_{\partial M}$. We denote by $A_{\tilde{M}}$ the image of $A_M$ under $a_M$ and by $L_{\tilde{M}}$ the image of $L_M$ under $r_M$. $L_{\tilde{M}}$ is a closed Lagrangian subspace of the real Hilbert space $L_{\partial M}$ with respect to the symplectic form $\omega_{\partial M}$. We often omit the explicit mention of the maps $a_M$ and $r_M$. We also require $S_M(\eta)=S_M(\eta')$ if $a_M(\eta)=a_M(\eta')$, and
\begin{equation}
S_M(\eta)=S_M(\eta')-\frac{1}{2}\theta_{\partial M}(\eta,\eta-\eta')-\frac{1}{2}\theta_{\partial M}(\eta',\eta-\eta')\qquad\forall\eta,\eta'\in A_M .
\label{eq:actsympot}
\end{equation}
\item[(C6)] Let $M_1$ and $M_2$ be regions and $M\defeq M_1\cup M_2$ be their disjoint union. Then, there is a bijection $A_{M_1}\times A_{M_2}\to A_M$ and a compatible isomorphism of real vector spaces $L_{M_1}\oplus L_{M_2}\to L_M$ such that $a_M=a_{M_1}\times a_{M_2}$ and $r_M=r_{M_1}\times r_{M_2}$. Moreover these maps satisfy obvious associativity conditions. Hence, we can think of them as identifications and omit their explicit mention in the following. We also require $S_M=S_{M_1}+S_{M_2}$.
\item[(C7)] Let $M$ be a region with its boundary decomposing as a disjoint union $\partial M=\Sigma_1\cup\Sigma\cup \overline{\Sigma'}$, where $\Sigma'$ is a copy of $\Sigma$. Let $M_1$ denote the gluing of $M$ to itself along $\Sigma,\overline{\Sigma'}$ and suppose that $M_1$ is a region. Note $\partial M_1=\Sigma_1$. Then, there is an injective map $a_{M;\Sigma,\overline{\Sigma'}}:A_{M_1}\toi A_{M}$ and a compatible injective linear map $r_{M;\Sigma,\overline{\Sigma'}}:L_{M_1}\toi L_{M}$ such that
\begin{equation}
 A_{M_1}\toi A_{M}\rightrightarrows A_\Sigma \qquad L_{M_1}\toi L_{M}\rightrightarrows L_\Sigma
\end{equation}
are exact sequences. Here, for the first sequence, the arrows on the right hand side are compositions of the map $a_M$ with the projections of $A_{\partial M}$ to $A_\Sigma$ and $A_{\overline{\Sigma'}}$ respectively (the latter identified with $A_\Sigma$). For the second sequence the arrows on the right hand side are compositions of the map $r_M$ with the projections of $L_{\partial M}$ to $L_\Sigma$ and $L_{\overline{\Sigma'}}$ respectively (the latter identified with $L_\Sigma$). 
We also require $S_{M_1}=S_M\circ a_{M;\Sigma,\overline{\Sigma'}}$.
Moreover, the following diagrams commute, where the bottom arrows are the projections.
\begin{equation}
\xymatrix{
  A_{M_1} \ar[rr]^{a_{M;\Sigma,\overline{\Sigma'}}} \ar[d]_{a_{M_1}} & & A_{M} \ar[d]^{a_{M}}\\
  A_{\partial M_1}  & & A_{\partial M} \ar[ll]} \qquad
\xymatrix{
  L_{M_1} \ar[rr]^{r_{M;\Sigma,\overline{\Sigma'}}} \ar[d]_{r_{M_1}} & & L_{M} \ar[d]^{r_{M}}\\
  L_{\partial M_1}  & & L_{\partial M} \ar[ll]}
\end{equation}
\end{itemize}

\subsection{Schrödinger quantization on hypersurfaces}
\label{sec:squant}

In this section we consider the first part of the Schrödinger-Feynman quantization scheme which consists in associating to each hypersurface $\Sigma$ a Hilbert space $\cH^\sr_{\Sigma}$ of states in accordance with the Schrödinger prescription. We differ here in our notation from that of the core axioms (Section~\ref{sec:coreaxioms}) to emphasize that $\cH^\sr_{\Sigma}$ arises from a particular quantization scheme. Nevertheless, it is understood that $\cH^\sr_{\Sigma}$ is taking the place of $\cH_{\Sigma}$ in the core axioms.

The basic idea of the Schrödinger prescription is to construct the Hilbert space $\cH^\sr_{\Sigma}$ as a space of square-integrable \emph{wave functions} on the configuration space associated with the hypersurface $\Sigma$. We shall denote this real affine space of configurations by $C_{\Sigma}$ in accordance with the notation used in \cite{Oe:schroedhol}. We recall from \cite{Oe:schroedhol} that $C_\Sigma$ can be obtained in a simple way from the space $A_{\Sigma}$ and the symplectic potential $\theta_{\Sigma}$. Concretely, define the subspaces $M_{\Sigma}\subseteq L_{\Sigma}$ and $N_{\Sigma}\subseteq L_{\Sigma}$ as
\begin{align}
 M_{\Sigma} & \defeq\{\tau\in L_{\Sigma}: [\xi,\tau]_{\Sigma}=0\; \forall \xi\in L_\Sigma\} \\
 N_{\Sigma} & \defeq\{\tau\in L_{\Sigma}: [\tau,\xi]_{\Sigma}=0\; \forall \xi\in L_{\Sigma}\} .
\end{align}
$M_{\Sigma}$ should be thought of as the subspace of momenta, i.e., those (infinitesimal) solutions where field values vanish on $\Sigma$ while their derivatives do not. For this to really make sense we need the additional requirement on the map $[\cdot,\cdot]_{\Sigma}$ that $M_{\Sigma}$ and $N_{\Sigma}$ together generate all of $L_{\Sigma}$ \cite{Oe:schroedhol}. This then even implies $L_{\Sigma}=M_{\Sigma}\oplus N_{\Sigma}$.

The configuration space $C_\Sigma$ is the quotient space
\begin{equation}
C_\Sigma\defeq A_\Sigma/M_\Sigma .
\end{equation}
For later use we denote the quotient map by $c_\Sigma:A_\Sigma\to C_\Sigma$.
The infinitesimal version of $C_M$ is the linear quotient space
\begin{equation}
Q_\Sigma\defeq L_\Sigma/M_\Sigma .
\end{equation}
$C_\Sigma$ is thus an affine space over $Q_\Sigma$. For later use we denote the quotient map by $q_\Sigma:L_\Sigma\to Q_\Sigma$.

In the case of linear field theory (i.e., when $C_\Sigma=Q_\Sigma$) it is well known that determining the Hilbert space of the Schrödinger representation requires the additional datum of a \emph{vacuum state}, see e.g., \cite{Jac:schroedinger}. It is convenient \cite{Oe:schroedhol} to encode this in a symmetric bilinear form $\Omega_\Sigma:Q_\Sigma\times Q_\Sigma\to\C$ with positive definite real part. The vacuum state is then determined by the wave function $\cohs_0:Q_\Sigma\to\C$ given by
\begin{equation}
\cohs_0(\phi)=\exp\left(-\frac{1}{2}\Omega(\phi,\phi)\right) .
\end{equation}
In the more general case of affine field theory there is no special vacuum state. Nevertheless, the Schrödinger representation is still determined by a symmetric bilinear form $\Omega_\Sigma:Q_\Sigma\times Q_\Sigma\to\C$ with positive definite real part \cite{Oe:schroedhol}.

In axiom (C1) we are given the affine space $A_\Sigma$, the associated linear space $L_\Sigma$ and the maps $\theta_\Sigma$, $[\cdot,\cdot]_\Sigma$ as well as $\omega_\Sigma$. In order to define the Schrödinger representation Hilbert space we are then only missing $\Omega_\Sigma$.\footnote{We are simplifying the discussion here by omitting another ingredient missing in axiom (C1). This is the condition $M_{\Sigma}+N_{\Sigma}=L_{\Sigma}$ necessary for the Schrödinger representation to be well defined. However, since the whole quantization scheme will turn out to be equivalent to a holomorphic quantization scheme which does not require this condition, we have omitted it in the axiom in the first place.} However, it was shown in \cite{Oe:schroedhol} that the complex structure $J_\Sigma$, also provided in axiom (C1), precisely gives rise to such a bilinear form $\Omega_\Sigma$. Concretely, define $j_\Sigma:Q_\Sigma\to L_\Sigma$ to be the unique linear map such that $q_\Sigma\circ j_\Sigma=\id_{Q_\Sigma}$ and $j_\Sigma(Q_\Sigma)=J_\Sigma M_\Sigma$. Then,
\begin{equation}
\Omega_\Sigma(\phi,\phi') = g_\Sigma(j_\Sigma(\phi),j_\Sigma(\phi'))
 -\im [j_\Sigma(\phi),j_\Sigma(\phi')]_\Sigma
\label{eq:omegaj}
\end{equation}
is a bilinear form precisely as required. What is more, in \cite{Oe:schroedhol} it was shown that admissible bilinear forms $\Omega_\Sigma$ are in one-to-one correspondence to admissible complex structures $J_\Sigma$ precisely via equation (\ref{eq:omegaj}).\footnote{While the exact definition of \emph{admissible complex structure} is implicit in axiom (C1), we repeat here the exact definition of \emph{admissible bilinear form}. $\Omega_\Sigma:Q_\Sigma\times Q_\Sigma\to\C$ is admissible if it is a symmetric bilinear form such that its real part is a positive definite inner product making $Q_\Sigma$ into a real Hilbert space and such that its imaginary part is continuous with respect to this Hilbert space structure.} Thus, the datum of the complex structure in axiom (C1) is precisely equivalent to the datum of an admissible bilinear form. Hence, the data of axiom (C1) uniquely determine a Schrödinger representation Hilbert space $\cH^\sr_\Sigma$.

The precise nature of the construction of the Hilbert space $\cH^\sr_\Sigma$ is not relevant here and we refer the interested reader to \cite{Oe:schroedhol} for the details. A fact about the Schrödinger representation that we will make use of, however, is that $\cH^\sr_\Sigma$ contains a dense subspace generated by coherent states. We recall from \cite{Oe:schroedhol} that the Schrödinger wave function of the affine coherent state \cite{Oe:affine} associated to the local solution $\zeta\in A_\Sigma$ is given by\footnote{While coherent states can be represented by wave functions in this way, this is not true of all states in the Hilbert space $\cH^\sr_\Sigma$. More generally, $\cH^\sr_\Sigma$ should be thought of as a space of \emph{reduced wave functions} on a linearized and extended version of $C_\Sigma$, see \cite{Oe:schroedhol}. However, as the only concrete states we need to consider are coherent states we may ignore these details for the purposes of the present paper.}
\begin{equation}
 \cohas_\zeta(\varphi)=\exp\left(\im\,\theta_\Sigma(\zeta,\varphi-c_\Sigma(\zeta))-\frac{1}{2}\Omega_\Sigma(\varphi-c_\Sigma(\zeta),\varphi-c_\Sigma(\zeta))\right) .
\label{eq:acohs}
\end{equation}

As already mentioned, we shall make use of the fact that the Schrödinger representation can be brought into correspondence with the holomorphic representation. We recall that the Hilbert space of the holomorphic representation is a space of square-integrable holomorphic functions on $L_\Sigma$. The holomorphic representation arises as a special case of geometric quantization \cite{Woo:geomquant} and depends precisely on an admissible complex structure $J_\Sigma$ on $L_\Sigma$ as exhibited in axiom (C1). Indeed, the ingredients of axiom (C1) precisely determine a Hilbert space $\cH^\hr_\Sigma$ of the holomorphic representation \cite{Oe:affine}. Now, it was shown in \cite{Oe:schroedhol} that the one-to-one correspondence (\ref{eq:omegaj}) between admissible bilinear forms $\Omega_\Sigma$ and admissible complex structures $J_\Sigma$ induces a canonical isometric isomorphism of Hilbert spaces $\bsta_\Sigma:\cH^\sr_\Sigma\to\cH^\hr_\Sigma$. This means that on the level of hypersurfaces and associated Hilbert spaces the present Schrödinger quantization is precisely equivalent to the holomorphic quantization exhibited in \cite{Oe:affine}. In particular, core axioms (T1), (T1b), (T2), and (T2b) are satisfied, as was shown in \cite{Oe:affine} for the holomorphic quantization.

\subsection{Feynman quantization in regions}
\label{sec:feynquant}

In this section we consider the second part of the Schrödinger-Feynman quantization scheme which consists in associating to each region $M$ an amplitude map $\rho^\sr_M:\cH^{\sr\ds}_{\partial M}\to\C$ from a dense subspace of the the boundary Hilbert space $\cH^{\sr}_{\partial M}$ to the complex numbers. We shall proceed by heuristically and non-rigorously following the Feynman path integral prescription. However, this yields a definite and well defined result which we then take as a definition.

If $M$ is a spacetime region and $\psi^\sr$ the wave function of a state in the Schrödinger representation Hilbert space $\cH^\sr_{\partial M}$, its amplitude is given heuristically by the Feynman path integral via
\begin{equation}
\rho^\sr_M\left(\psi^\sr\right)=\int_{K_M} \psi^\sr(\eta) \exp\left(\im S_M(\eta)\right)\,\xd\mu(\eta) .
\label{eq:pathint}
\end{equation}
Here $K_M$ is the ``space of field configurations'' in $M$ and $\mu$ is supposed to be a suitable measure on it. We shall assume at least that $K_M$ is a real vector space and that $\mu$ is translation-invariant. Of course, usually no such measure exists and even the precise definition of the space $K_M$ may be unclear. As a first step to improve the situation we assume that there is a correspondence between field configuration data on the boundary and solutions in the interior, i.e., $K_M$ splits additively into $K_M=A_M\oplus K_M^0$, where $A_M$ is the space of solutions of the equations of motion in $M$ while $K_M^0$ is the space of field configurations in $M$ that vanish on the boundary. Then, (\ref{eq:pathint}) may be rewritten as
\begin{equation}
\rho^\sr_M\left(\psi^\sr\right)=\int_{A_M} \psi^\sr(\eta) \left(\int_{K_M^0} \exp\left(\im S_M(\eta+\Delta)\right)\,\xd\mu(\Delta)\right)\xd\mu(\eta) ,
\label{eq:pathint2}
\end{equation}
where $\Delta\in K_M^0$ and the measure has now been split into one on $A_M$ and one on $K_M^0$.

To further improve the situation we use the fact that we are considering the special case of affine field theory. Thus, $A_M$ is an affine space, the action $S_M$ is a polynomial of degree two on $K_M$ and by the variational principle we obtain
\begin{equation}
S_M(\eta+\Delta)=S_M(\eta)+\Sq_M(\Delta)\quad\text{for}\quad\eta\in A_M,\; \Delta\in K_M^0 .
\end{equation}
Here $\Sq_M$ is the quadratic part of the action. In itself it is the action for the linear field theory with space of solutions in $M$ given by $L_M$. This allows to factorize the inner integrand in (\ref{eq:pathint2}), leading to the expression
\begin{equation}
\rho^\sr_M\left(\psi^\sr\right)= N_M \int_{A_M} \psi^\sr(\eta) \exp\left(\im S_M(\eta)\right)\,\xd\mu(\eta) ,
\label{eq:pathint3}
\end{equation}
where the normalization factor $N_M$ is given by
\begin{equation}
N_M=\int_{K_M^0} \exp\left(\im \Sq_M(\Delta)\right)\,\xd\mu(\Delta) .
\label{eq:nfact}
\end{equation}

Instead of evaluating the amplitude map (\ref{eq:pathint3}) for any possible state it is sufficient to consider coherent states since they generate a dense subspace of $\cH^\sr_{\partial M}$. Thus, consider the affine coherent state $\coha_\zeta$ associated to $\zeta\in A_{\partial M}$ with Schrödinger wave function $\cohas_\zeta$ given by expression (\ref{eq:acohs}). Its amplitude is
\begin{equation}
\rho^\sr_M\left(\cohas_\zeta\right)= N_M \int_{A_M} \cohas_\zeta(\eta) \exp\left(\im S_M(\eta)\right)\,\xd\mu(\eta) .
\label{eq:feynamplcoh}
\end{equation}
If $A_M$ is finite-dimensional this integral is perfectly well-defined and we shall proceed as if this was the case. Of course, usually in field theory $A_M$ is infinite-dimensional.

Inserting the explicit wave function (\ref{eq:acohs}) of the coherent state $\coha_\zeta$ yields,
\begin{multline}
\rho^\sr_M\left(\cohas_\zeta\right)= N_M \int_{A_{\tilde M}} \\
\exp\left(\im\, S_M(\eta)+\im\,\theta_{\partial M}(\zeta,\eta-\zeta)-\frac{1}{2}\Omega_{\partial M}(q_{\partial M}(\eta-\zeta),q_{\partial M}(\eta-\zeta))\right)\xd\mu(\eta) .
\label{eq:feynamplcohaev}
\end{multline}
Here we have also changed the integration from an integration over $A_M$ to an integration over its image $A_{\tilde{M}}$ under $a_M$ since the integrand only depends on the latter.

In order to evaluate this expression further we recall that the space $L_{\partial M}$ decomposes into a generalized direct sum $L_{\partial M}=A_{\tilde{M}}\oplus J_{\partial M} L_{\tilde{M}}$ \cite{Oe:affine}. Applying this to $\zeta$ we obtain $\zeta=\zeta^{\mathrm{R}}+J_{\partial M} \zeta^{\mathrm{I}}$ with $\zeta^{\mathrm{R}}\in A_{\tilde{M}}$ and $\zeta^{\mathrm{I}}\in L_{\tilde{M}}$. Inserting this decomposition into (\ref{eq:feynamplcohaev}), and using translation invariance of the measure $\mu$ to change the integral to one over the new variable $\xi\defeq \eta-\zeta^{\mathrm{R}}\in L_{\tilde M}$ we obtain
\begin{multline}
\rho^\sr_M\left(\cohas_\zeta\right) = N_M \int_{L_{\tilde M}}
\exp\bigg(\im\, S_M\left(\zeta^{\mathrm{R}}+\xi\right)+\im\,\theta_{\partial M}\left(\zeta^{\mathrm{R}}+J_{\partial M}\zeta^{\mathrm{I}},\xi-J_{\partial M} \zeta^{\mathrm{I}}\right) \\
\left. -\frac{1}{2}\Omega_{\partial M}\left(q_{\partial M}\left(\xi-J_{\partial M} \zeta^{\mathrm{I}}\right),q_{\partial M}\left(\xi-J_{\partial M} \zeta^{\mathrm{I}}\right)\right)\right)\xd\mu(\xi) .
\end{multline}
Using property (\ref{eq:actsympot}) of the action in axiom (C5) leads to,
\begin{multline}
\rho^\sr_M\left(\cohas_\zeta\right) = N_M \exp\left(\im\,S_M\left(\zeta^{\mathrm{R}}\right)-\im\,\theta_{\partial M}\left(\zeta^{\mathrm{R}},J_{\partial M} \zeta^{\mathrm{I}}\right)\right)\\
\int_{L_{\tilde M}}
\exp\left(-\frac{\im}{2} \left[\xi,\xi\right]_{\partial M}+\im\left[J_{\partial M}\zeta^{\mathrm{I}},\xi-J_{\partial M} \zeta^{\mathrm{I}}\right]_{\partial M} \right. \\
\left. -\frac{1}{2}\Omega_{\partial M}\left(q_{\partial M}\left(\xi-J_{\partial M} \zeta^{\mathrm{I}}\right),q_{\partial M}\left(\xi-J_{\partial M} \zeta^{\mathrm{I}}\right)\right)\right)\xd\mu(\xi) .
\end{multline}
We use again translation invariance of the measure $\mu$ to shift the integration variable by $\xi\to\xi+z \zeta^{\mathrm{I}}$, where $z\in\R$ is arbitrary. This yields,
\begin{multline}
\rho^\sr_M\left(\cohas_\zeta\right) = N_M \exp\left(\im\,S_M\left(\zeta^{\mathrm{R}}\right)-\im\,\theta_{\partial M}\left(\zeta^{\mathrm{R}},J_{\partial M} \zeta^{\mathrm{I}}\right)\right)\\
\int_{L_{\tilde M}}
\exp\left(-\frac{\im}{2} \left[\xi,\xi\right]_{\partial M}+\im\left[J_{\partial M}\zeta^{\mathrm{I}},\xi-J_{\partial M} \zeta^{\mathrm{I}}\right]_{\partial M} \right. \\
-\frac{1}{2}\Omega_{\partial M}\left(q_{\partial M}\left(\xi-J_{\partial M} \zeta^{\mathrm{I}}\right),q_{\partial M}\left(\xi-J_{\partial M} \zeta^{\mathrm{I}}\right)\right)\\
-\im z \left[\zeta^{\mathrm{I}},\xi\right]_{\partial M}
+\im z \left[J_{\partial M}\zeta^{\mathrm{I}},\zeta^{\mathrm{I}}\right]_{\partial M}
-z\,\Omega_{\partial M}\left(q_{\partial M}\left(\zeta^{\mathrm{I}}\right),q_{\partial M}\left(\xi-J_{\partial M} \zeta^{\mathrm{I}}\right)\right)\\
\left. -\frac{\im z^2}{2}\left[\zeta^{\mathrm{I}},\zeta^{\mathrm{I}}\right]_{\partial M}
- \frac{z^2}{2} \Omega_{\partial M}\left(q_{\partial M}\left(\zeta^{\mathrm{I}}\right),q_{\partial M}\left(\zeta^{\mathrm{I}}\right)\right)
\right)\xd\mu(\xi) .
\end{multline}
We make the simple observation that the integrand as a function of $z\in\C$ is holomorphic. This implies here that the integral is also holomorphic as a function of $z\in\C$. On the other hand, by construction the integral is constant for $z$ on the real line. Then, by the Identity Theorem, the integral has to be constant for all $z\in\C$. We may thus fix any convenient value $z\in\C$ to evaluate the integrand. It turns out that the convenient choice here is to set $z=-\im$, leading to considerable simplifications with the result,
\begin{multline}
\rho^\sr_M\left(\cohas_\zeta\right) = N_M N'_M\\ \exp\left(\im\,S_M\left(\zeta^{\mathrm{R}}\right)-\im\,\theta_{\partial M}\left(\zeta^{\mathrm{R}},J_{\partial M} \zeta^{\mathrm{I}}\right)-\frac{\im}{2}\left[J_{\partial M}\zeta^{\mathrm{I}},J_{\partial M} \zeta^{\mathrm{I}}\right]_{\partial M}-\frac{1}{2}g_{\partial M}\left(\zeta^{\mathrm{I}},\zeta^{\mathrm{I}}\right)\right) .
\label{eq:amplscoh}
\end{multline}
Here, the normalization factor $N'_M$ corresponds to the remaining integral,
\begin{equation}
N'_M=\int_{L_{\tilde M}}\exp\left(-\frac{\im}{2}[\xi,\xi]_\Sigma-\frac{1}{2}\Omega_\Sigma(q_\Sigma(\xi),q_\Sigma(\xi))\right)\xd\mu(\xi) .
\label{eq:npfact}
\end{equation}

Since the overall normalization in the Feynman path integral as considered here is a priori undetermined we may fix it in a convenient way. Indeed, the unique (region independent) normalization compatible with the core axioms turns out to be $N_M N'_M=1$. What is more, this makes the amplitude (\ref{eq:amplscoh}) coincide precisely with the amplitude of the holomorphic quantization scheme as given in Proposition~4.3 of \cite{Oe:affine}. More precisely, denoting by $\cH^\src_{\partial M}$ the subspace of $\cH^\sr_{\partial M}$ spanned by coherent states, we have $\rho_M^\sr=\rho_M^\hr\circ\bsta_{\partial M}$ on $\cH^\src_{\partial M}$. This implies the complete equivalence of the Schrödinger-Feynman quantization scheme considered here to the holomorphic quantization scheme proposed in \cite{Oe:affine}. In particular, the remaining core axioms (T4), (T3x), (T5a), (T5b) are also satisfied.

In the following we shall thus omit superscripts that distinguish the different quantization schemes in question and simply write $\cH_{\Sigma}$ for the Hilbert space associated to the hypersurface $\Sigma$ and $\rho_M$ for the amplitude map associated to the region $M$. We write the explicit expression of the amplitude on a coherent state $\coha_\zeta$, (\ref{eq:amplscoh}) as
\begin{multline}
\rho_M\left(\coha_\zeta\right) = \\ \exp\left(\im\,S_M\left(\zeta^{\mathrm{R}}\right)-\im\,\theta_{\partial M}\left(\zeta^{\mathrm{R}},J_{\partial M} \zeta^{\mathrm{I}}\right)-\frac{\im}{2}\left[J_{\partial M}\zeta^{\mathrm{I}},J_{\partial M} \zeta^{\mathrm{I}}\right]_{\partial M}-\frac{1}{2}g_{\partial M}\left(\zeta^{\mathrm{I}},\zeta^{\mathrm{I}}\right)\right) .
\label{eq:amplcoha}
\end{multline}

\subsection{Linear field theory}
\label{sec:linft}

Linear field theory arises as a special case of affine field theory, when we are given a choice of base point ``$0$'' in each space $A_M$ and $A_{\Sigma}$, in a compatible way.\footnote{Even the existence of such a choice is a non-trivial restriction. Recall the related discussion in \cite{Oe:affine}.} This allows to canonically identify $A_M$ with $L_M$ for every region $M$ and $A_{\Sigma}$ with $L_{\Sigma}$ for every hypersurface $\Sigma$. The axioms of the classical theory can then be considerably simplified by erasing all separate reference to the spaces $A_{\Sigma}$ and $A_M$. Also, we can then consistently set
\begin{equation}
\theta_\Sigma(\cdot,\cdot)\defeq [\cdot,\cdot]_\Sigma\quad\text{and}\quad S_M(\xi)\defeq -\frac{1}{2}[\xi,\xi]_{\Sigma} .
\label{eq:linsimpl}
\end{equation}
Indeed, it turns out that the explicit mention in the axioms of the symplectic potential $\theta_\Sigma$, its linearized version $[\cdot,\cdot]_{\Sigma}$ and the action $S_M$ is no longer required. We are left with the axioms for a linear field theory as given in \cite{Oe:holomorphic}.

Unsurprisingly, restricting to the special case of linear field theory preserves the equivalence of the Schrödinger-Feynman quantization scheme with the holomorphic quantization scheme. In the linear case the latter was first proposed in \cite{Oe:holomorphic}. For completeness, we recall that it is more convenient in the linear case to use the usual ``Fock space'' coherent states, or rather their normalized versions. The Schrödinger wave function for the normalized coherent state $\cohn_\tau\in\cH_\Sigma$ associated to $\tau\in L_{\Sigma}$ is then given by \cite{Oe:schroedhol},
\begin{equation}
\cohns_\tau(\phi)=\exp\left(\im [\tau,\phi]_{\Sigma}-\frac{\im}{2}[\tau,\tau]_{\Sigma}-\frac{1}{2}\Omega_{\Sigma}\left(\phi-q(\tau),\phi-q(\tau)\right)\right)
\label{eq:cohns}
\end{equation}
This is related to (\ref{eq:acohs}) through a choice of base point together with a $\tau$-dependent phase factor, see \cite{Oe:affine,Oe:schroedhol}. Indeed, given a base point $0\in A_{\Sigma}$ to identify $A_{\Sigma}$ with $L_{\Sigma}$ and the respective Hilbert spaces we obtain
\begin{equation}
\cohn_\tau=\coha_{\tau+0}\exp\left(\frac{\im}{2}[\tau,\tau]_{\Sigma}\right)
\label{eq:relcoh}
\end{equation}
One may then verify, either by repeating the calculation of Section~\ref{sec:feynquant} or by using this relation together with (\ref{eq:linsimpl}) on the result (\ref{eq:amplcoha}) that one obtains for the amplitude,
\begin{equation}
\rho_M\left(\cohn_\tau\right)=\exp\left(-\frac{\im}{2} g_{\partial M}\left(\tau^\mathrm{R},\tau^\mathrm{I}\right)-\frac{1}{2} g_{\partial M}\left(\tau^\mathrm{I},\tau^\mathrm{I}\right)\right),
\label{eq:amplcoh}
\end{equation}
in accordance with Proposition~4.2 in \cite{Oe:holomorphic}. Here, $\tau=\tau^\mathrm{R}+J_{\partial M}\tau^\mathrm{I}$ with $\tau^\mathrm{R},\tau^\mathrm{I}\in L_{\tilde{M}}$.


\section{Observables}
\label{sec:obs}

\subsection{Motivation: Observables in quantum field theory}
\label{sec:obsqft}

We recall in this section some basic facts about observables in quantum field theory that motivate much of the following treatment. The presentation here may be largely seen as a summary of the more extensive discussion in \cite{Oe:obsgbf}.

In non-relativistic quantum mechanics quantum observables are simply certain operators on the Hilbert space of the system. They are usually constructed through a quantization scheme from classical observables which are functions on the instantaneous phase space of the system. The operator can be applied at any time, usually indicating a measurement being performed at that time. Crucial information is contained in the commutation relations of these operators, indicating in particular different outcomes of joint measurements when the temporal order of constituent measurements is altered.

In quantum field theory the situation is quite different.\footnote{It should be emphasized that we refer here to the text-book approach to quantum field theory as it is used for example in high energy physics.} The standard elementary observables, such as field values $\phi(t,x)$ are \emph{labeled} not only with a position $x$ in space, but also with a time $t$. They are thus functions not on phase space, but on a larger space of \emph{field configurations} in spacetime. While the quantum observable corresponding to $\phi(x,t)$ is still thought of as an operator on the Hilbert space of the system, the physically relevant operation for combining such quantum observables does not require the full operator algebra structure. Rather, the physically relevant composition of quantum observables is given by the commutative \emph{time-ordered product}, which orders the operators according to their time labels. This strongly suggests that the operator point of view is not the most natural one here.

Indeed, the relation between the time-ordered product and the Feynman path integral suggests a different point of view. Suppose for simplicity that we are working with a real scalar field theory in Minkowski space. Consider a region $M=[t_a,t_b]\times \R^3$, where $t_a<t_b$ and denote by $K_{[t_a,t_b]}$ the space of field configurations in $M$. Consider a classical observable $K_{[t_a,t_b]}\to \R$ that encodes an $n$-point function,
\begin{equation}
 \phi\mapsto \phi(t_1,x_1)\cdots\phi(t_n,x_n),
\label{eq:npoint}
\end{equation}
where $t_1,\dots,t_n\in [t_a,t_b]$. Given an initial state $\psi\in\cH_{t_a}$ at time $t_a$ and a final state $\eta\in\cH_{t_b}$ at time $t_b$, the corresponding matrix element of the time-ordered product of (\ref{eq:npoint}) can be expressed by the Feynman path integral,\footnote{We use on the left-hand side a notation employing the Heisenberg picture, as is usual in standard text books on quantum field theory.}
\begin{multline}
\left\langle \eta, \tord\, \tilde{\phi}(t_1,x_1)\cdots\tilde{\phi}(t_n,x_n)\psi\right\rangle \\
=\int_{K_{[t_1,t_b]}} \psi(\phi|_{t_a})\overline{\eta(\phi|_{t_b})}\, \phi(t_1,x_1)\cdots\phi(t_n,x_n)\, \exp\left(\im S_{[t_a,t_b]}(\phi)\right)\,\xd\mu(\phi) .
\label{eq:tordnfeyn}
\end{multline}
Here, $\psi,\eta$ inside the integral are the Schrödinger wave functions of the respective states, $\tilde{\phi}(t_i,x_i)$ are the usual quantizations of the classical observables $\phi\mapsto \phi(t_i,x_i)$ and $\tord$ signifies time-ordering. When initial and final states are taken to be the vacuum, (\ref{eq:tordnfeyn}) recovers the usual quantum $n$-point function that is at the heart of the predictive power of quantum field theory.

The quantization occurring here may thus be seen as the conversion of a classical observable $F:K_{[t_a,t_b]}\to\R$ to a ``modified evolution operator'' $\hat{F}:\cH_{t_a}\to\cH_{t_b}$ with matrix elements,
\begin{equation}
\left\langle \eta, \hat{F} \psi\right\rangle \\
=\int_{K_{[t_a,t_b]}} \psi(\phi|_{t_a})\overline{\eta(\phi|_{t_b})}\, F(\phi)\, \exp\left({\im S_{[t_a,t_b]}(\phi)}\right)\,\xd\mu(\phi) .
\label{eq:tordfeyn}
\end{equation}
More in line with a GBF perspective we may also view this as a ``modified transition amplitude'' $\rho_{[t_a,t_b]}^F:\cH_{t_a}\tens\cH_{t_b}^*\to\C$,
\begin{equation}
\rho_{[t_a,t_b]}^F(\psi\tens\iota(\eta))=\left\langle \eta, \hat{F} \psi\right\rangle .
\end{equation}

A remarkable property of quantum field theory is a correspondence between the composition of classical and of quantum observables in this quantization prescription. This was termed \emph{composition correspondence} in \cite{Oe:obsgbf} and comes from a generic property of the path integral. Concretely, consider times $t_a<t_b<t_c$. Let $F:K_{[t_a,t_b]}\to\R$ be a classical observable in the region $[t_a,t_b]\times \R^3$ and $G:K_{[t_b,t_c]}\to\R$ be a classical observable in the region $[t_b,t_c]\times \R^3$. We can extend both $F$ and $G$ trivially to classical observables $K_{[t_a,t_c]}\to\R$ in the region $[t_a,t_c]\times \R^3$ and multiply them there as functions. We call the resulting observable $G\cdot F:K_{[t_a,t_c]}\to\R$. The prescription (\ref{eq:tordfeyn}) then leads to the identity
\begin{equation}
 \widehat{G\cdot F}=\hat{G}\circ\hat{F} .
\label{eq:compcorop}
\end{equation}
That is, there is a direct correspondence between the classical composition of observables (via multiplication of functions) and the quantum composition of observables (via multiplication of operators).

Note that \emph{locality properties} of the observables are crucial for the correspondence (\ref{eq:compcorop}). The prescription given above allows only observables with disjoint supports in spacetime to be composed in this way. Conversely, it is easy to see that a correspondence as in (\ref{eq:compcorop}) \emph{requires} the observables to be functions on some spacetime configuration space rather than on phase space (or the space of solutions). For suppose that observables were defined as functions on the space of solutions. Then, in the above example, the composed classical observable $G\cdot F$ ``forgets'' the spacetime localization of $F$ and $G$, leading essentially to an equality of the type $\widehat{G\cdot F}=\hat{F}\circ\hat{G}$ in addition to equation (\ref{eq:compcorop}).\footnote{There are some inessential subtleties to the argument that we are glossing over here, involving additional relative time-translations.} This in turn would essentially imply $\hat{G}\circ\hat{F}=\hat{F}\circ\hat{G}$, meaning that all quantum observables, viewed as operators commute, in contradiction to what we know about quantum (field) theory.

It is important to distinguish the composition correspondence (\ref{eq:compcorop}) from the much more established Dirac quantization condition relating the commutator of observables quantized as operators to the quantization of the Poisson bracket of the observables. In the latter, classical observables are necessarily understood as functions on phase space. For observables $D,E$ this takes the form
\begin{equation}
 \tilde{E}\circ\tilde{D}-\tilde{D}\circ\tilde{E}=-\im\, \widetilde{(E,D)},
\label{eq:diraccond}
\end{equation}
where the bracket indicates here the Poisson bracket and quantization is denoted with a tilde. In quantum field theory elementary observables at equal times can also be viewed as functions on phase space and as such realize the condition (\ref{eq:diraccond}) in the form of the canonical commutation relations.

\subsection{Observables in the GBF}
\label{sec:obsgbf}

The properties of quantum field theory discussed in the previous section are suggestive of a concept of quantum observable, introduced in \cite{Oe:obsgbf}, that naturally integrates into the GBF. We elaborate on this in the following.

A quantum observable $O$ is associated to a spacetime region $M$ and takes the form of a linear map, called \emph{observable map}
\begin{equation}
 O:\cH^\ds_{\partial M}\to\C ,
\end{equation}
similar to the amplitude map for the region $M$. The most important operation performed with observables is \emph{composition}, generalizing the temporal composition discussed in Section~\ref{sec:obsqft}. This is exactly analogous to the composition of amplitudes arising from the gluing of regions in the core axioms (T5a) and (T5b) of Section~\ref{sec:coreaxioms}. Also, to make the concept of observable useful it is necessary to consider the space of observables $\obs_M$ for each spacetime region $M$. This together with a closedness condition under composition is expressed through the following axioms, slightly modified from \cite{Oe:obsgbf}.
\begin{itemize}
\item[(O1)] Associated to each spacetime region $M$ is a real vector space $\obs_M$ of linear maps $\cH_{\partial M}^\ds\to\C$, called \emph{observable maps}. In particular, $\rho_M\in\obs_M$.
\item[(O2a)] Let $M_1$ and $M_2$ be regions and $M=M_1\cup M_2$ be their disjoint union. Then, there is an injective bilinear map $\aglue:\obs_{M_1}\times\obs_{M_2}\toi\obs_{M}$ such that for all $O_1\in\obs_{M_1}$ and $O_2\in\obs_{M_2}$ and $\psi_1\in\cH_{\partial M_1}^\ds$ and $\psi_2\in\cH_{\partial M_2}^\ds$,
\begin{equation}
 O_1\aglue O_2(\psi_1\tens\psi_2)= O_{1}(\psi_1)O_{2}(\psi_2) .
\end{equation}
This operation is required to be associative in the obvious way.
\item[(O2b)] Let $M$ be a region with its boundary decomposing as a disjoint union $\partial M=\Sigma_1\cup\Sigma\cup \overline{\Sigma'}$ and $M_1$ given as in (T5b). Then, there is a linear map $\aglue_{\Sigma}:\obs_{M}\to\obs_{M_1}$ such that for all $O\in\obs_{M}$ and any orthonormal basis $\{\xi_i\}_{i\in I}$ of $\cH_\Sigma$ and for all $\psi\in\cH_{\partial M_1}^\ds$,
\begin{equation}
 \aglue_{\Sigma}(O)(\psi)\cdot c(M;\Sigma,\overline{\Sigma'})
 =\sum_{i\in I}O(\psi\tens\xi_i\tens\iota_\Sigma(\xi_i)) .
\end{equation}
This operation is required to commute with itself and with (O2a) in the obvious way.
\end{itemize}

We note that the maps $\aglue$ and $\aglue_{\Sigma}$ are defined in such a way that equation (\ref{eq:glueaxa}) in axiom (T5a) can be rewritten as $\rho_M=\rho_{M_1}\aglue\rho_{M_2}$. Similarly, equation (\ref{eq:glueaxb}) in axiom (T5b) can be rewritten as $\rho_{M_1}=\aglue_{\Sigma}(\rho_M)$.

\subsection{Classical observables}
\label{sec:cobsax}

We provide in this section a minimal axiomatization of classical observables. We merely suppose that they form a commutative algebra for each region, and are subject to the natural operations coming from the gluing of regions.

\begin{itemize}
\item[(CO1)] Associated to each spacetime region $M$ is a real unital algebra $\cobs_M$, called \emph{observable algebra}.
\item[(CO2a)] Let $M_1$ and $M_2$ be regions and $M=M_1\cup M_2$ be their disjoint union. Then, there are injective algebra homomorphisms $l_{(M_1;M)}:\cobs_{M_1}\toi\cobs_M$ and $l_{(M_2;M)}:\cobs_{M_2}\toi\cobs_M$. This operation is required to be associative in the obvious way.
\item[(CO2b)] Let $M$ be a region with its boundary decomposing as a disjoint union $\partial M=\Sigma_1\cup\Sigma\cup \overline{\Sigma'}$ and $M_1$ given as in (T5b). Then, there is an algebra homomorphism $l_{(M;\Sigma,\overline{\Sigma'})}:\cobs_{M}\to\cobs_{M_1}$. This operation is required to commute with itself and with (CO2a) in the obvious way.
\end{itemize}

\subsection{Quantization axioms and composition correspondence}
\label{sec:compcor}

So far we have taken from Section~\ref{sec:obsqft} only the motivation for the general structure of quantum observables in the GBF. We shall now proceed to ``import'' and generalize the composition correspondence of quantum field theory, expressed in equation (\ref{eq:compcorop}), into the GBF. Indeed, switching from evolution operators to amplitudes, it is quite clear how this should be done. We shall give a formulation in line with the structure of the core axioms (T5a) and (T5b).
\begin{itemize}
\item[(X1)] Associated to each spacetime region $M$ there is a linear map $\quant_M:\cobs_M\to\obs_M$, called \emph{quantization map}. Moreover, $\quant_M(\one)=\rho_M$.
\item[(X2a)] Let $M_1$ and $M_2$ be regions and $M=M_1\cup M_2$ be their disjoint union. Then, the following diagram commutes.
\begin{equation}
\xymatrix{
  \cobs_{M_1}\times\cobs_{M_2} \ar[rr]^{\quant_{M_1}\times\quant_{M_2}} \ar[d]_{l_{(M_1;M)}\times l_{(M_2;M)}} & & \obs_{M_1}\times\obs_{M_2} \ar[dd]^{\aglue}\\
  \cobs_{M}\times\cobs_{M} \ar[d]_{\cdot} \\
  \cobs_{M} \ar[rr]_{\quant_M} & & \obs_{M}}
\end{equation}
\item[(X2b)] Let $M$ be a region with its boundary decomposing as a disjoint union $\partial M=\Sigma_1\cup\Sigma\cup \overline{\Sigma'}$ and $M_1$ given as in (T5b). Then, the following diagram commutes.
\begin{equation}
\xymatrix{
  \cobs_{M} \ar[rr]^{\quant_{M}} \ar[d]_{l_{(M;\Sigma,\overline{\Sigma'})}} & & \obs_{M} \ar[d]^{\aglue_{\Sigma}}\\
  \cobs_{M_1} \ar[rr]_{\quant_{M_1}} & & \obs_{M_1}}
\end{equation}
\end{itemize}

Axiom (X1) merely establishes the existence of a quantization map. The composition correspondence is encoded in the combination of axiom (X2a) and (X2b). In the concrete example of Section~\ref{sec:obsqft} this can be seen as follows. In a first step define $M_1=[t_a,t_b]\times\R^3$, $M_2=[t_b,t_c]\times\R^3$ and choose $F\in\cobs_{M_1}$ and $G\in\cobs_{M_2}$. We apply axiom (X2a). This leads to the formally disjoint union $M=[t_a,t_b]\times\R^3\cup [t_b,t_c]\times\R^3$. In a second step we apply axiom (X2b). That is, we glue $M$ to itself along the hypersurface $\Sigma={t_b}\times\R^3$ and its copy $\Sigma'$. Call the resulting region $M_3=[t_a,t_c]\times\R^3$. This yields the identity
\begin{equation}
 \quant_{M_3}\left(l_{(M;\Sigma,\overline{\Sigma'})}\left(l_{(M_1;M)}(F)\cdot l_{(M_2;M)}(G)\right)\right)=\aglue_{\Sigma}\left(\quant_{M_1}(F)\aglue\quant_{M_2}(G)\right) ,
\end{equation}
which is equivalent to (\ref{eq:compcorop}).

Contrary to superficial appearance, the essence of composition correspondence is contained in axiom (X2b) rather than in axiom (X2a). Indeed, axiom (X2a) may be interpreted as merely stating that quantization in disjoint regions is independent. This is a generic property that should be expected of any quantization scheme. Indeed, it is easy to verify axiom (X2a) explicitly for all quantization schemes for the GBF that were introduced in \cite{Oe:obsgbf}. In particular, this includes quantization schemes based on observables defined on spaces of solutions. As we saw already in Section~\ref{sec:obsqft} these cannot satisfy the composition correspondence. We shall come back to this in Section~\ref{sec:compquant}.

\subsection{Feynman quantization of Weyl observables}
\label{sec:feynweyl}

There is a further element of quantum field theory as discussed in Section~\ref{sec:obsgbf} that we wish to realize in the GBF. This is the concrete prescription for the quantization of observables via the Feynman path integral exhibited in expressions (\ref{eq:tordnfeyn}) and (\ref{eq:tordfeyn}). It is straightforwardly generalized to the GBF.

Let $M$ be a region, $K_M$ be the configuration space in $M$ and $F:K_M\to\R$ a classical observable. The quantization of $F$ is then the linear map $\rho_M^F:\cH_{\partial M}\to\C$ given by
\begin{equation}
\rho_M^F(\psi)=\int_{K_M} \psi\left(\phi|_{\partial M}\right) F(\phi) \exp\left(\im S_M(\phi)\right)\,\xd\mu(\phi) .
\label{eq:fquantobs}
\end{equation}
The difficulty here is of course in making this integral and its ingredients well defined. Recall, however, that we have made sense of just such an integral in Section~\ref{sec:feynquant} in a very special case. To take advantage of this we restrict from here onwards to affine field theory.

The second restriction we shall perform is to a special type of observable. Namely, we consider observables of the form
\begin{equation}
 F(\phi)=\exp\left(\im\, D(\phi)\right),
\label{eq:weylobs}
\end{equation}
where $D:K_M\to\R$ is a linear observable. We shall refer to observables of the form (\ref{eq:weylobs}) as \emph{Weyl observables}.\footnote{We take the name from the Weyl relations, which are about exponentials of imaginary quantized linear observables, see also Section~\ref{sec:diracquant}.} The path integral (\ref{eq:fquantobs}) then takes the form
\begin{equation}
\rho_M^F(\psi)=\int_{K_M} \psi\left(\phi|_{\partial M}\right) \exp\left(\im \left(S_M(\phi)+D(\phi)\right)\right)\,\xd\mu(\phi) .
\label{eq:fquantwobs}
\end{equation}
The peculiarity of this path integral is that we may view the sum $S_M+D$ as a new action. From this point of view we have a path integral without an observable, but for a new theory. However, the action $S_M$ determining an affine field theory is polynomial of degree two, so $S_M+D$ is still polynomial of degree two and determines thus another affine field theory. Thus, the path integral (\ref{eq:fquantwobs}) is just a version of the path integral (\ref{eq:pathint}). However, for $\psi$ a coherent state we have given a precise meaning to the latter in Section~\ref{sec:feynquant}.

Indeed, for an affine coherent state $\coha_\zeta\in\cH_{\partial M}$ with $\zeta\in A_{\partial M}$ the observable map (\ref{eq:fquantwobs}) should thus be given by expression (\ref{eq:amplcoha}), with the corresponding substitution of the action,
\begin{multline}
\rho_M^F\left(\coha_\zeta\right) =  \exp\bigg(\im\,S_M\left(\zeta^{\mathrm{R}}\right)+\im\, D\left(\zeta^{\mathrm{R}}\right) \\
 \left. -\im\,\theta_{\partial M}\left(\zeta^{\mathrm{R}},J_{\partial M} \zeta^{\mathrm{I}}\right)-\frac{\im}{2}\left[J_{\partial M}\zeta^{\mathrm{I}},J_{\partial M} \zeta^{\mathrm{I}}\right]_{\partial M}-\frac{1}{2}g_{\partial M}\left(\zeta^{\mathrm{I}},\zeta^{\mathrm{I}}\right)\right) .
\label{eq:wobscoha}
\end{multline}
However, recall that we fixed a normalization factor $N_M N_M'=1$ to arrive at this result. In Section~\ref{sec:feynquant} this was justified by the fact that neither $N_M$ nor $N_M'$ depended on $\zeta$. Here, however, we also need to take into account a possible dependence of the factors $N_M$ and $N_M'$ on the observable $D$ or on the space $A_M$ which in turn depends on $D$. However, inspecting equations (\ref{eq:nfact}) and (\ref{eq:npfact}) we see that there is no such dependence. ($\Sq_M$ in equation (\ref{eq:nfact}) corresponds to $S_M$ here.) This justifies the normalization also in the present case.

Note that we have assumed implicitly here that the structure of the theory at the boundary is unaltered by the addition of the linear observable $D$ to the action $S_M$. This is justified if the observable vanishes in a neighborhood of the boundary. Since $K_M$ is a space of local field configurations this requirement makes sense. On the other hand, if observables are sufficiently regular it should not matter if they are altered in an arbitrarily small neighborhood, suggesting that this requirement can be dropped. In any case this issue turns out not to matter from the axiomatic perspective taken in the following.

While it is possible to carry through all of the following in affine field theory, we restrict for simplicity to linear field theory. Then, we have the substitutions (\ref{eq:linsimpl}) and switch to the normalized ``Fock'' coherent states (\ref{eq:cohns}) via (\ref{eq:relcoh}). For the normalized coherent state associated to $\tau\in L_{\partial M}$ we thus obtain,
\begin{multline}
\rho_M^F\left(\cohn_\tau\right) =\\
  \exp\left(\im\, S_M\left(\tau^{\mathrm{R}}\right) + \im\, D\left(\tau^{\mathrm{R}}\right)+\frac{\im}{2}[\tau^{\mathrm{R}},\tau^{\mathrm{R}}]_{\partial M}
 -\frac{\im}{2}g_{\partial M}\left(\tau^{\mathrm{R}},\tau^{\mathrm{I}}\right)-\frac{1}{2}g_{\partial M}\left(\tau^{\mathrm{I}},\tau^{\mathrm{I}}\right)\right) .
\label{eq:wobscoh1}
\end{multline}
Despite superficial appearance, the difference to formula (\ref{eq:amplcoh}) is not only in an additional phase factor that only depends on $\tau^{\mathrm{R}}$. Crucially, the decomposition $\tau=\tau^{\mathrm{R}}+J_{\partial M}\tau^{\mathrm{I}}$ depends itself on $D$. This is because $\tau^{\mathrm{R}},\tau^{\mathrm{I}}$ are determined by the requirement $\tau^{\mathrm{R}}\in A_{\tilde{M}}^D$ and $\tau^{\mathrm{I}}\in L_{\tilde{M}}$. And here $A_{\tilde{M}}^D$ is the space of boundary images of solutions in $M$ of the theory with action $S_M+D$.

The expression (\ref{eq:wobscoh1}) can be further simplified by noticing that in the linear setting the action can be defined in terms of the other structures. For the action evaluated on solutions of the linear theory this was already used in Section~\ref{sec:linft} with the corresponding expression given in (\ref{eq:linsimpl}). In the present setting with observables this generalizes to,
\begin{equation}
 S_M(\eta)=-\frac{1}{2}[\eta,\eta]_{\partial M}-\frac{1}{2} D(\eta),
\label{eq:defact}
\end{equation}
where $\eta\in A_M^D$, i.e., $\eta$ is a solution of the theory with action $S_M+D$. We may take this as a definition of the action here, although it is possible to derive this equation under mild assumptions from Lagrangian field theory (compare equation (116) in \cite{Oe:affine}). We may now simplify expression (\ref{eq:wobscoh1}) to obtain,
\begin{equation}
\rho_M^F\left(\cohn_\tau\right) =  \exp\left(\frac{\im}{2} D\left(\tau^{\mathrm{R}}\right)
 -\frac{\im}{2}g_{\partial M}\left(\tau^{\mathrm{R}},\tau^{\mathrm{I}}\right)-\frac{1}{2}g_{\partial M}\left(\tau^{\mathrm{I}},\tau^{\mathrm{I}}\right)\right) .
\label{eq:wobscoh}
\end{equation}

There is an additional datum that we will need to capture axiomatically. Recall that in the affine theory axiom (C5) makes a crucial statement about the action in the form of relation (\ref{eq:actsympot}). In the present context this amounts to a statement not only about the action, but also about the linear observables $D$. Taking into account (\ref{eq:linsimpl}) it is easily seen that (\ref{eq:actsympot}) is here equivalent to,
\begin{equation}
 S_M(\eta+\xi)+D(\xi)=S_M(\eta)+ S_M(\xi)- [\eta,\xi]_{\partial M}\qquad\forall \eta\in A^D_M, \forall \xi\in L_M ,
\label{eq:actshift}
\end{equation}
where $A^D_M$ is the space of solutions in $M$ of the theory with action $S_M+D$. Comparison with expression (\ref{eq:defact}) for the action, evaluated on $\eta$ on the one hand and on $\eta+\xi$ on the other hand yields the identity
\begin{equation}
 D(\xi)=2\omega_{\partial M}(\xi,\eta)\qquad\forall\eta\in A_M^D,\forall \xi\in L_M .
\label{eq:losympl}
\end{equation}
This might look surprising at first since the left hand side does not depend on $\eta$. However, we recall that any other element $\eta'\in A_M^D$ is related to $\eta$ by an element of $L_{M}$ which does not contribute in the symplectic form with another element of $L_M$ since $L_M$ is isotropic in $L_{\partial M}$. It is easy to see that given (\ref{eq:defact}) the relations (\ref{eq:actshift}) and (\ref{eq:losympl}) are actually equivalent.

Note also that property (\ref{eq:losympl}) implies that $D$ evaluated on any configuration that is a solution associated with some linear observable only depends on the boundary image in $L_{\partial M}$ of that configuration. In particular the first term in the exponential on the right-hand side of (\ref{eq:wobscoh}) is well defined. Another consequence of (\ref{eq:losympl}) is a continuity property of $D$: Noting the equality
\begin{equation}
 D(\xi)=2\omega_{\partial M}(\xi,\eta)=g_{\partial M}(\xi,-J_{\partial M}\eta)
\end{equation}
we see that $D$ is continuous on the space $L_M\subseteq K_M$ with topology induced from $L_{\tilde M}$, by the Riesz representation theorem for real Hilbert spaces.

\subsection{Encoding classical linear field theory with linear observables}
\label{sec:classlinobs}

To formalize the setting of Section~\ref{sec:feynweyl} on the classical side we have to add two ingredients to the axioms of linear classical field theory as given in \cite{Oe:holomorphic} (or as obtained by reducing the axioms of Section~\ref{sec:classaffine} according to Section~\ref{sec:linft}). The first is the space of field configurations $K_M$ for each region $M$. The second is a space $\cobsl_M$ of linear observables $K_M\to\R$ for each region $M$. Moreover, recalling the discussion of the observable map (\ref{eq:wobscoh}), we need to somehow assign its space $A_M$ of modified solutions in $M$ to every linear observable. To this end it makes sense to assume that $A_M$ is a subspace of $K_M$. Moreover, recall that $A_M$ is an affine space over the vector space $L_M$ of solutions of the linear theory in $M$. Thus $A_M$, corresponds to a point in the quotient space $K_M/L_M$.

Using generic properties of the action including the variational principle for classical solutions it is easy to see the following: Given a solution $\eta_1$ of the the theory determined by the action $S_M+D_1$ and a solution $\eta_2$ of the theory determined by the action $S_M+D_2$ the configuration $\eta_1+\eta_2$ is a solution of the theory determined by the action $S_M+D_1+D_2$. This implies that the map $\cobsl_M\to K_M/L_M$ considered above is linear.

\begin{itemize}
\item[(C1)] Associated to each hypersurface $\Sigma$ is a complex separable Hilbert space $L_\Sigma$. The inner product in $L_\Sigma$ is denoted by $\{\cdot ,\cdot\}_\Sigma$. We also define $g_\Sigma(\cdot,\cdot)\defeq \Re\{\cdot ,\cdot\}_\Sigma$ and $\omega_\Sigma(\cdot,\cdot)\defeq \frac{1}{2}\Im\{\cdot ,\cdot\}_\Sigma$ and denote by $J_\Sigma:L_\Sigma\to L_\Sigma$ the scalar multiplication with $\im$ in $L_\Sigma$. Moreover we suppose there is a continuous bilinear map $[\cdot,\cdot]_\Sigma:L_\Sigma\times L_\Sigma\to\R$ such that
\begin{equation}
 \omega_\Sigma(\phi,\phi')=\frac{1}{2} [\phi,\phi']_\Sigma-\frac{1}{2} [\phi',\phi]_\Sigma
\qquad \forall  \phi,\phi'\in L_\Sigma .
\end{equation}
\item[(C2)] Associated to each hypersurface $\Sigma$ there is a conjugate linear involution $L_\Sigma\to L_{\overline\Sigma}$ under which the inner product is complex conjugated. We will not write these maps explicitly, but rather think of $L_\Sigma$ as identified with $L_{\overline\Sigma}$. Then, $\{\phi',\phi\}_{\overline{\Sigma}}=\overline{\{\phi',\phi\}_\Sigma}$ and we also require $[\phi,\phi']_{\overline{\Sigma}}=-[\phi,\phi']_\Sigma$ for all $\phi,\phi'\in L_\Sigma$.
\item[(C3)] Suppose the hypersurface $\Sigma$ decomposes into a disjoint
  union of hypersurfaces $\Sigma=\Sigma_1\cup\cdots\cup\Sigma_n$. Then,
  there is an isometric isomorphism of complex Hilbert spaces
  $L_{\Sigma_1}\oplus\cdots\oplus L_{\Sigma_n}\to L_\Sigma$. Moreover, these maps satisfy obvious associativity conditions. We will not write these maps explicitly, but rather think of them as identifications. Also, $[\cdot,\cdot]_\Sigma=[\cdot,\cdot]_{\Sigma_1}+\dots+[\cdot,\cdot]_{\Sigma_n}$.
\item[(C4)] Associated to each region $M$ is a real vector space $K_M$ and a subspace $L_M$. Also, there is a real vector space $\cobsl_M$ of linear maps $K_M\to\R$ as well as an injective linear map $s_M:\cobsl_M\toi K_M/L_M$.
\item[(C5)] Associated to each region $M$ there is a linear map of real vector spaces $k_M:K_M\to L_{\partial M}$. We denote by $L_{\tilde{M}}$ the image of the subspace $L_M$ under $k_M$. $L_{\tilde{M}}$ is a closed Lagrangian subspace of the real Hilbert space $L_{\partial M}$ with respect to the symplectic form $\omega_{\partial M}$. We often omit the explicit mention of the maps $k_M$. We also require for all $D\in\cobsl_M$ that
\begin{equation}
D(\xi)=2\omega_{\partial M}(\xi,\eta)\quad\text{if}\quad \xi\in L_M\quad\text{and}\quad \eta\in s_M(D) .
\label{eq:lobprop}
\end{equation}
\item[(C6)] Let $M_1$ and $M_2$ be regions and $M\defeq M_1\cup M_2$ be their disjoint union. Then, there is an isomorphism of real vector spaces $K_{M_1}\times K_{M_2}\to K_M$ restricting to an isomorphism of subspaces $L_{M_1}\times L_{M_2}\to L_M$ and inducing an isomorphism of complex vector spaces $L_{\partial M_1}\times L_{\partial M_2}\to L_{\partial M}$ such that the following diagram commutes.
\begin{equation}
\xymatrix{
  K_{M_1}\times K_{M_2} \ar[rr] \ar[d]_{k_{M_1}\times k_{M_2}} & & K_{M} \ar[d]^{k_{M}}\\
  L_{\partial M_1}\times L_{\partial M_2}\ar[rr]  & & L_{\partial M}}
\end{equation}
Also, there are induced linear injections $\cobsl_{M_1}\toi\cobsl_{M}$ and $\cobsl_{M_2}\toi\cobsl_{M}$ in the obvious way. Moreover, the following diagram commutes for $i=1,2$.
\begin{equation}
\xymatrix{
  \cobsl_{M_i} \ar[rr]^{s_{M_i}} \ar[d] & & K_{M_i}/L_{M_i} \ar[d] \\
  \cobsl_{M}\ar[rr]_{s_M}  & & K_M/L_{M}}
\end{equation}
Furthermore, all these maps satisfy obvious associativity conditions.
\item[(C7)] Let $M$ be a region with its boundary decomposing as a disjoint union $\partial M=\Sigma_1\cup\Sigma\cup \overline{\Sigma'}$, where $\Sigma'$ is a copy of $\Sigma$. Let $M_1$ denote the gluing of $M$ to itself along $\Sigma,\overline{\Sigma'}$ and suppose that $M_1$ is a region. Note $\partial M_1=\Sigma_1$. Then, there is an injective linear map $k_{M;\Sigma,\overline{\Sigma'}}:K_{M_1}\toi K_{M}$ such that $k_{M;\Sigma,\overline{\Sigma'}}(L_{M_1})\subseteq L_M$ and
\begin{equation}
 K_{M_1}\toi K_{M}\rightrightarrows L_\Sigma\quad\text{as well as}\quad L_{M_1}\toi L_{M}\rightrightarrows L_\Sigma
\label{eq:c7d1}
\end{equation}
are both exact sequences. Here, the arrows on the right hand sides are compositions of the map $k_M$ with the projections of $L_{\partial M}$ to $L_\Sigma$ and $L_{\overline{\Sigma'}}$ respectively (the latter identified with $L_\Sigma$). 
Moreover, the following diagram commutes, where the bottom arrow is the orthogonal projection.
\begin{equation}
\xymatrix{
  K_{M_1} \ar[rr]^{k_{M;\Sigma,\overline{\Sigma'}}} \ar[d]_{k_{M_1}} & & K_{M} \ar[d]^{k_{M}}\\
  L_{\partial M_1}  & & L_{\partial M} \ar[ll]}
\label{eq:c7d2}
\end{equation}
Also, there is an induced linear map $\cobsl_{M}\to\cobsl_{M_1}$ in the obvious way and the following diagram commutes.
\begin{equation}
\xymatrix{
  \cobsl_{M} \ar[rr]^{s_{M}} \ar[d] & & K_{M}/L_{M} \\
   \cobsl_{M_1} \ar[rr]_{s_{M_1}}  & & K_{M_1}/L_{M_1} \ar[u]}
\label{eq:c7d3}
\end{equation}
\end{itemize}

Note that the spaces of linear observables we have considered here do not satisfy the axioms of Section~\ref{sec:cobsax}. This is not surprising since they are not closed under multiplication. However, the observables we actually want to quantize are not the linear observables, but the Weyl observables. We define for each region $M$,
\begin{equation}
\cobsw_M\defeq \left\{\phi\mapsto \exp(\im\, D(\phi)) : D\in\cobsl_M\right\} .
\end{equation}
These are not algebras either, but rather multiplicative groups. We can make them into algebras, however, by allowing linear combinations. Thus, we define
\begin{equation}
\cobs_M\defeq\left\{\sum_{i=1}^n \lambda_i F_i : \lambda_i\in\C, F_i\in \cobsw_M\right\} .
\end{equation}
The spaces $\cobs_M$ so defined do satisfy the axioms (CO1), (CO2a), (CO2b) of Section~\ref{sec:cobsax} with the detail that they are actually complex algebras of complex observables rather than real algebras of real observables.

\subsection{GBQFT of linear field theory with Weyl observables}
\label{sec:gbqftlow}

The quantization of a classical field theory described in terms of the axioms of Section~\ref{sec:classlinobs} is now straightforward. The Hilbert spaces associated to hypersurfaces and the amplitude maps associated to regions are as described in Section~\ref{sec:sfaffine}, specialized to linear field theory, see in particular Section~\ref{sec:linft}. As already shown in \cite{Oe:holomorphic} these satisfy the core axioms of the GBF (Section~\ref{sec:coreaxioms}).

The new ingredients are the observables with their quantization performed as described in Section~\ref{sec:feynweyl}. That is, for each region $M$ we define a linear map $\quant_M:\cobs_M\to\obs_M$ as follows. Since $\cobs_M$ is the space of linear combinations of Weyl observables, it is sufficient to define $\quant_M$ on those and extend it to $\cobs_M$ as a complex linear map. Now, for $F\in\cobsw_M$ there is $D\in\cobsl_M$ such that $F=\exp\left(\im\, D\right)$. We then define $\quant_M(F)\defeq \rho_M^F$, where $\rho_M^{F}:\cH_{\partial M}^\ds\to\C$ is given on coherent states by expression (\ref{eq:wobscoh}) as follows. First note that by axiom (C4) associated to $D$ is an element $s_M(D)$ in the quotient space $K_M/L_M$. As explained previously, this is equivalent to an affine subspace of $K_M$ which we shall denote by $A_M^D$. We denote its image in $L_{\partial M}$ under $k_M$ by $A_{\tilde{M}}^D$. Then, $L_{\partial M}$ can be decomposed as a generalized direct sum $L_{\partial M}=A_{\tilde{M}}^D\oplus J_{\partial M} L_{\tilde M}$ (compare Lemma~3.2 in \cite{Oe:affine}). That is, given $\tau\in L_{\partial M}$, there are unique elements $\tau^{\mathrm{R}}\in A_{\tilde M}^D$ and $\tau^{\mathrm{I}}\in L_{\tilde M}$ such that $\tau=\tau^{\mathrm{R}}+ J_{\partial M} \tau^{\mathrm{I}}$. Then, we define
\begin{equation}
\rho_M^F\left(\cohn_\tau\right) \defeq  \exp\left(\frac{\im}{2} D\left(\tau^{\mathrm{R}}\right)
 -\frac{\im}{2}g_{\partial M}\left(\tau^{\mathrm{R}},\tau^{\mathrm{I}}\right)-\frac{1}{2}g_{\partial M}\left(\tau^{\mathrm{I}},\tau^{\mathrm{I}}\right)\right) .
\label{eq:wobscohdef}
\end{equation}

This yields a definition of the quantization map $\quant_M$ for the region $M$ as well as of the set of observable maps $\obs_M$ as its image. Moreover, the constant function with value $1$, $F=\one$ is a Weyl observable obtained from the linear function with value $0$ for which formula (\ref{eq:wobscohdef}) just yields the ordinary amplitude (\ref{eq:amplcoh}). In particular, $\one\in\cobs_M$ and $\quant_M(\one)=\rho_M$. Thus axioms (O1) and (X1) are satisfied.

Consider now two regions $M_1,M_2$ and their disjoint union $M=M_1\cup M_2$. Let $D_1\in\cobsl_{M_1}$ and $D_2\in\cobsl_{M_2}$. According to axiom (C6) the linear observables obtained by extending the domains of $D_1$ and $D_2$ from $K_{M_1}$ and $K_{M_2}$ to $K_M=K_{M_1}\times K_{M_2}$ in the obvious way are contained in $\cobs_M$ and so is thus their sum, which we shall denote by $D$. Explicitly,
\begin{equation}
 D(\phi_1,\phi_2)=D_1(\phi_1)+D_2(\phi_2)\qquad\forall \phi_1\in K_{M_1},\,\forall\phi_2\in K_{M_2} .
\end{equation}
The corresponding Weyl observables $F_1=\exp\left(\im\, D_1\right)$, $F_2=\exp\left(\im\, D_2\right)$ and $F=\exp\left(\im\, D\right)$ are thus elements of the respective spaces $F_1\in\cobsw_{M_1}$, $F_2\in\cobsw_{M_2}$, $F\in\cobsw_M$ and we have
\begin{equation}
F(\phi_1,\phi_2)=F_1(\phi_1)\cdot F_2(\phi_2)\qquad\forall \phi_1\in K_{M_1},\,\forall\phi_2\in K_{M_2} .
\end{equation}
Now by axiom (C6) we have $A_{\tilde M}^{D}=A_{\tilde M_1}^{D_1}\times A_{\tilde M_2}^{D_2}$. In particular the decomposition $L_{\partial M}=A_{\tilde{M}}^D\oplus J_{\partial M} L_{\tilde M}$ splits into a corresponding decomposition for each of the regions $M_1$, $M_2$ separately. We also recall \cite{Oe:holomorphic} that given $\tau=\tau_1+\tau_2$ with $\tau\in L_{\partial M}$, $\tau_1\in L_{\partial M_1}$, $\tau_2\in L_{\partial M_2}$ the normalized coherent state $\cohn_\tau\in\cH_{\partial M}$ factorizes as
\begin{equation}
\cohn_\tau=\cohn_{\tau_1}\tens\cohn_{\tau_2}
\end{equation}
with $\cohn_{\tau_1}\in\cH_{\partial M_1}$ and $\cohn_{\tau_2}\in\cH_{\partial M_2}$. Thus, the observable map (\ref{eq:wobscohdef}) obtained by quantizing $F$ factorizes completely,
\begin{equation}
\rho_M^F(\cohn_{\tau_1}\tens\cohn_{\tau_2})=\rho_M^F(\cohn_{\tau})
 =\rho_{M_1}^{F_1}(\cohn_{\tau_1})\rho_{M_2}^{F_2}(\cohn_{\tau_2})
 =\left(\rho_{M_1}^{F_1}\aglue\rho_{M_2}^{F_2}\right)(\cohn_{\tau_1}\tens\cohn_{\tau_2}) .
\end{equation}
Since coherent states are dense in the boundary state spaces and Weyl observables generate the observable algebras as vector spaces, this is sufficient to prove that axioms (O2a) and (X2a) hold.

As explained in Section~\ref{sec:compcor} the most interesting and non-trivial axiom is (X2b) as this encodes composition correspondence. We proceed to prove it in the present context, together with (O2b). First we need a special identity.
\begin{lem}
\label{lem:relomampl}
Let $M$ be a region and $D\in\cobsl_M$. Define $F\in\cobsw_M$ as $F\defeq\exp\left(\im\, D\right)$.
Given $\eta\in A_{\tilde M}^D$ and $\xi\in L_{\partial M}$ the following identity holds,
\begin{equation}
 \rho_M^F\left(\cohn_{\eta+\xi}\right)=
 \rho_M\left(\cohn_{\xi}\right)\exp\left(\frac{\im}{2} D(\eta)+\im\,\omega_{\partial M}\left(\xi,\eta\right)\right) .
\end{equation}
\end{lem}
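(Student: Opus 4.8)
The plan is to evaluate both sides on the coherent states using the explicit formulas (\ref{eq:wobscohdef}) and (\ref{eq:amplcoh}), and to reduce the claimed identity to the defining property (\ref{eq:lobprop}) of linear observables together with the compatibility of $g_{\partial M}$, $\omega_{\partial M}$ and $J_{\partial M}$ coming from axiom (C1).

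First I would sort out the two relevant decompositions. Since $A_M^D$ is affine over $L_M$, the affine space $A_{\tilde{M}}^D$ and the linear space $L_{\tilde{M}}$ share the same underlying linear part $L_{\tilde{M}}$; hence the complementary summand $J_{\partial M}L_{\tilde{M}}$ in the twisted generalized direct sum $L_{\partial M}=A_{\tilde{M}}^D\oplus J_{\partial M}L_{\tilde{M}}$ coincides with the one used for the pure amplitude (the case $D=0$, where $A_{\tilde{M}}^0=L_{\tilde{M}}$). I would write $\xi=\xi_L+J_{\partial M}\xi_J$ with $\xi_L,\xi_J\in L_{\tilde{M}}$ via the linear splitting $L_{\partial M}=L_{\tilde{M}}\oplus J_{\partial M}L_{\tilde{M}}$. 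Because $\eta+\xi_L\in A_{\tilde{M}}^D$ and $\xi_J\in L_{\tilde{M}}$, uniqueness of the generalized direct sum identifies the $D$-decomposition of $\tau\defeq\eta+\xi$ as $\tau^{\mathrm{R}}=\eta+\xi_L$, $\tau^{\mathrm{I}}=\xi_J$, while for $\rho_M(\cohn_\xi)$ one simply has $\xi^{\mathrm{R}}=\xi_L$, $\xi^{\mathrm{I}}=\xi_J$.

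Next I would substitute these data into (\ref{eq:wobscohdef}) and (\ref{eq:amplcoh}) and subtract the exponents. The quadratic terms $g_{\partial M}(\xi_J,\xi_J)$ cancel, and expanding $g_{\partial M}(\eta+\xi_L,\xi_J)$ bilinearly and $D(\eta+\xi_L)$ linearly, the exponent difference collapses to $\frac{\im}{2}D(\eta)+\frac{\im}{2}D(\xi_L)-\frac{\im}{2}g_{\partial M}(\eta,\xi_J)$. It then remains to match this with the claimed factor $\frac{\im}{2}D(\eta)+\im\,\omega_{\partial M}(\xi,\eta)$. Here I would invoke the boundary form $D(\xi_L)=2\omega_{\partial M}(\xi_L,\eta)$ of property (\ref{eq:lobprop})---legitimate because $\xi_L$ is the boundary image of an element of $L_M$ and $\eta\in A_{\tilde{M}}^D$---turning $\frac{\im}{2}D(\xi_L)$ into $\im\,\omega_{\partial M}(\xi_L,\eta)$, and the Kähler identity $\omega_{\partial M}(J_{\partial M}\xi_J,\eta)=-\frac{1}{2}g_{\partial M}(\eta,\xi_J)$ (read off from $g_{\partial M}=\Re\{\cdot,\cdot\}_{\partial M}$, $2\omega_{\partial M}=\Im\{\cdot,\cdot\}_{\partial M}$, $J_{\partial M}=\im$ in (C1)), turning $-\frac{\im}{2}g_{\partial M}(\eta,\xi_J)$ into $\im\,\omega_{\partial M}(J_{\partial M}\xi_J,\eta)$. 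Bilinearity of $\omega_{\partial M}$ together with $\xi=\xi_L+J_{\partial M}\xi_J$ then recombines the two terms into $\im\,\omega_{\partial M}(\xi,\eta)$, finishing the computation.

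I expect the main obstacle to be the bookkeeping of the two distinct splittings: $\rho_M$ uses the untwisted ($D=0$) decomposition whereas $\rho_M^F$ uses the $D$-twisted affine one, and the identity works precisely because both share the same imaginary part $L_{\tilde{M}}$. The one genuinely structural input is (\ref{eq:lobprop}), which is exactly what feeds the observable $D$ into the symplectic geometry; the sign in the Kähler identity must be tracked consistently with the conventions of (C1) but is otherwise routine.
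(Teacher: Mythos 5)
Your proposal is correct and follows essentially the same route as the paper's proof: the same observation that $\eta+\xi=(\eta+\xi^{\mathrm{R}})+J_{\partial M}\xi^{\mathrm{I}}$ gives the $D$-twisted decomposition with unchanged imaginary part, followed by substitution of (\ref{eq:wobscohdef}) and (\ref{eq:amplcoh}), the use of (\ref{eq:lobprop}) on $\xi^{\mathrm{R}}$, and the recombination $\im\,\omega_{\partial M}(\xi^{\mathrm{R}},\eta)-\frac{\im}{2}g_{\partial M}(\eta,\xi^{\mathrm{I}})=\im\,\omega_{\partial M}(\xi,\eta)$. The only cosmetic difference is that you subtract exponents at once where the paper factors out $\rho_M(\cohn_\xi)$ step by step.
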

\begin{proof}
Let $\xi=\xi^{\mathrm{R}}+J_{\partial M}\xi^{\mathrm{I}}$ be the decomposition of $\xi$ such that $\xi^{\mathrm{R}}, \xi^{\mathrm{I}}\in L_{\tilde M}$. Then, $\eta+\xi=(\eta+\xi^{\mathrm{R}})+J_{\partial M}\xi^{\mathrm{I}}$ with $\eta+\xi^{\mathrm{R}}\in A_{\tilde{M}}^D$. Thus, with expression (\ref{eq:wobscohdef}) we obtain,
\begin{align}
 \rho_M^F\left(\cohn_{\eta+\xi}\right) & =
 \exp\left(\frac{\im}{2}D\left(\eta+\xi^{\mathrm{R}}\right)-\frac{\im}{2}g_{\partial M}\left(\eta+\xi^{\mathrm{R}},\xi^{\mathrm{I}}\right)-\frac{1}{2}g_{\partial M}\left(\xi^{\mathrm{I}},\xi^{\mathrm{I}}\right)\right) \\
& = \rho_M\left(\cohn_{\xi}\right)\exp\left(\frac{\im}{2} D\left(\eta+\xi^{\mathrm{R}}\right)-\frac{\im}{2}g_{\partial M}\left(\eta,\xi^{\mathrm{I}}\right)\right) \\
& = \rho_M\left(\cohn_{\xi}\right)\exp\left(\frac{\im}{2} D\left(\eta\right)+\im\,\omega_{\partial M}\left(\xi^{\mathrm{R}},\eta\right)-\frac{\im}{2}g_{\partial M}\left(\eta,\xi^{\mathrm{I}}\right)\right) \\
& = \rho_M\left(\cohn_{\xi}\right)\exp\left(\frac{\im}{2} D\left(\eta\right)+\im\,\omega_{\partial M}\left(\xi,\eta\right)\right) .
\end{align}
Here we have used the identity (\ref{eq:lobprop}) of axiom (C5).
\end{proof}

\begin{prop}
Let $M$ be a region with its boundary decomposing as a disjoint union $\partial M=\Sigma_1\cup\Sigma\cup \overline{\Sigma'}$ and $M_1$ given as in (T5b). Let $D\in\cobsl_M$ and define $F\in\cobsw_M$ by $F=\exp\left(\im\, D\right)$. Also we denote by $D_1$ the induced element in $\cobsl_{M_1}$ and define $F_1\in\cobsw_{M_1}$ by $F_1=\exp\left(\im\, D_1\right)$. Given, moreover, an orthonormal basis $\{\xi_i\}_{i\in I}$ of $\cH_{\Sigma}$ we have for all $\psi\in \cH^{\ds}_{\Sigma_1}$,\footnote{We refer to \cite{Oe:holomorphic} for details concerning the integral and related notation.}
\begin{equation}
\rho_{M_1}^{F_1}\left(\psi\right)\cdot c(M;\Sigma,\overline{\Sigma'})
= \sum_{i\in I} \rho_M^F\left(\psi\tens\xi_i\tens\iota_{\Sigma}(\xi_i)\right) .
\label{eq:obsglueid}
\end{equation}
\end{prop}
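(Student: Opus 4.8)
The plan is to reduce the observable gluing identity~(\ref{eq:obsglueid}) to the bare amplitude gluing axiom (T5b) (equation~(\ref{eq:glueaxb})), which already holds for the maps $\rho_M$, by using Lemma~\ref{lem:relomampl} to strip off the $D$-dependence on both sides. Since both sides of~(\ref{eq:obsglueid}) are linear and continuous in $\psi$ and the coherent states are total in $\cH_{\Sigma_1}=\cH_{\partial M_1}$, it suffices to verify the identity for $\psi=\cohn_{\tau_1}$ with $\tau_1\in L_{\partial M_1}$.

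First I would fix a reference solution to linearize the twist produced by Lemma~\ref{lem:relomampl}. Choose $\eta_1\in A^{D_1}_{M_1}$ and let $\eta\in A^D_{\tilde M}\subseteq L_{\partial M}$ be the boundary image of $k_{M;\Sigma,\overline{\Sigma'}}(\eta_1)$, which lies in $A^D_M$ by the gluing compatibility of axiom (C7). Two properties of $\eta$ are crucial. By the exact sequence~(\ref{eq:c7d1}) together with the commuting diagram~(\ref{eq:c7d2}), the $\Sigma_1$-component of $\eta$ is the boundary image $\eta_1^{\mathrm{b}}$ of $\eta_1$, while its $\Sigma$- and $\overline{\Sigma'}$-components coincide, since the image of $K_{M_1}$ is precisely the equaliser of the two projections to $L_\Sigma$. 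By the compatibility diagram~(\ref{eq:c7d3}) and the boundary-image property of linear observables noted after~(\ref{eq:losympl}), we moreover have $D(\eta)=D_1(\eta_1^{\mathrm{b}})$.

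Next I would insert the coherent state $\cohn_\tau$ on $\partial M$ with $\tau=\tau_1+\zeta+\iota_\Sigma(\zeta)$ and apply Lemma~\ref{lem:relomampl} in the form $\tau=\eta+(\tau-\eta)$, giving
\begin{equation}
\rho_M^F\left(\cohn_\tau\right)=\rho_M\left(\cohn_{\tau-\eta}\right)\exp\left(\frac{\im}{2}D(\eta)+\im\,\omega_{\partial M}(\tau,\eta)\right),
\end{equation}
where I used $\omega_{\partial M}(\tau-\eta,\eta)=\omega_{\partial M}(\tau,\eta)$. The key cancellation now occurs: since $\eta$ has equal $\Sigma$- and $\overline{\Sigma'}$-components and $\omega_{\overline{\Sigma'}}=-\omega_\Sigma$ by axiom (C2), the $\Sigma$- and $\overline{\Sigma'}$-contributions to $\omega_{\partial M}(\tau,\eta)$ cancel, leaving $\omega_{\partial M}(\tau,\eta)=\omega_{\partial M_1}(\tau_1,\eta_1^{\mathrm{b}})$. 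Combined with $D(\eta)=D_1(\eta_1^{\mathrm{b}})$, the entire twist becomes independent of $\zeta$ and coincides exactly with the twist that Lemma~\ref{lem:relomampl}, applied to $M_1$, attaches to $\rho_{M_1}(\cohn_{\tau_1-\eta_1^{\mathrm{b}}})$ to produce $\rho_{M_1}^{F_1}(\cohn_{\tau_1})$. I would then pull this $\zeta$-independent factor out of the basis sum, apply the bare gluing axiom (T5b) to the remaining $\sum_i\rho_M(\cdots\tens\xi_i\tens\iota_\Sigma(\xi_i))$, and reassemble via Lemma~\ref{lem:relomampl} on $M_1$, using $\omega_{\partial M_1}(\tau_1-\eta_1^{\mathrm{b}},\eta_1^{\mathrm{b}})=\omega_{\partial M_1}(\tau_1,\eta_1^{\mathrm{b}})$, to obtain $\rho_{M_1}^{F_1}(\cohn_{\tau_1})\cdot c(M;\Sigma,\overline{\Sigma'})$.

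The main obstacle is the step where the bare gluing is invoked: after the twist is removed, the remaining $\Sigma$-argument of $\rho_M(\cohn_{\tau-\eta})$ is displaced from $\zeta$ to $\zeta-\kappa$, where $\kappa$ is the common $\Sigma$/$\overline{\Sigma'}$-value of $\eta$. One must show that this simultaneous diagonal displacement of the $\Sigma$- and $\overline{\Sigma'}$-arguments leaves the contraction $\sum_i\rho_M(\psi\tens\xi_i\tens\iota_\Sigma(\xi_i))$ unchanged, so that the value is still the $\kappa=0$ value furnished by (T5b). I expect to establish this exactly as the constancy in $z$ was established in Section~\ref{sec:feynquant}: passing to the coherent-state (Gaussian) integral representation of the gluing sum used in \cite{Oe:holomorphic}, the integrand is holomorphic in the shift parameter $\kappa$, the integral is therefore holomorphic in $\kappa$, and since it is manifestly constant along a real direction it is constant by the Identity Theorem. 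This displacement-invariance of the diagonal contraction is the genuine content beyond routine bookkeeping; everything else is an application of Lemma~\ref{lem:relomampl}, axiom (C7), and the antisymmetry of $\omega_{\partial M}$ under orientation reversal.
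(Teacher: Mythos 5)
Your proposal follows essentially the same route as the paper: reduce to coherent states, fix a reference solution $\eta_1\in A_{M_1}^{D_1}$ and its gluing image $\eta$ via axiom (C7), strip the observable dependence with Lemma~\ref{lem:relomampl}, observe that the $\Sigma$- and $\overline{\Sigma'}$-contributions to $\omega_{\partial M}(\cdot,\eta)$ cancel so the twist factor is independent of the summation variable, and then invoke the bare gluing axiom (T5b). The one point to flag is the ``displacement invariance'' you single out as the main obstacle: this is exactly Proposition~3.11 of \cite{Oe:holomorphic}, which the paper applies to shift the integration variable \emph{before} invoking Lemma~\ref{lem:relomampl} (a cosmetic reordering relative to your version); your proposed re-derivation, however, asserts that the shifted Gaussian integral is ``manifestly constant along a real direction,'' which is not manifest --- for the rigorous measure $\exp(\tfrac{1}{2}g_\Sigma(\cdot,\cdot))\,\xd\nu_\Sigma$ that constancy \emph{is} the content of the cited proposition (or, alternatively, of the basis-independence of (T5b) applied to a Weyl-displaced orthonormal basis), so you should cite or prove it rather than take it as given.
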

\begin{proof}
Since the space spanned by coherent states is dense in $\cH^{\ds}_{\Sigma_1}$ it will be enough to take $\psi$ to be a coherent state. To this end choose $\eta_1\in A_{M_1}^{D_1}\defeq s_{M_1}(D_1)$. According to diagram (\ref{eq:c7d3}) in axiom (C7) there is $\eta\in A_{M_1}^{D}\defeq s_M(D)$ such that $\eta=k_{M;\Sigma,\overline{\Sigma'}}(\eta_1)$. That is, we choose a solution $\eta_1$ of the theory determined by $S_{M_1}+D_1$ in $M_1$. This induces a solution $\eta$ of the theory with action $S_M+D$ in $M$. This should really be thought of as the same solution, just living in the larger configuration space $K_M$ rather than in $K_{M_1}$. Indeed, diagram (\ref{eq:c7d3}) ensures that the boundary image on $\Sigma_1$ is the same for $\eta$ and $\eta_1$. We denote this boundary image by $\tilde{\eta}_1$. Similarly, we denote the boundary images of $\eta$ on $\Sigma$ and $\Sigma'$ by $\tilde{\eta}_{\Sigma}$ and $\tilde{\eta}_{\Sigma'}$. The left hand diagram of (\ref{eq:c7d1}) then ensures the intuitively obvious equality $\tilde{\eta}_{\Sigma}=\tilde{\eta}_{\Sigma'}$. We then obtain for all $\delta\in L_{\Sigma_1}$ the following equality, equivalent to (\ref{eq:obsglueid}).
\begin{align}
& \sum_{i\in I} \rho_M^F\left(\cohn_{\tilde{\eta}_1+\delta}\tens\xi_i\tens\iota_\Sigma(\xi_i)\right) \label{eq:gid1}\\
& = \int_{\hat{L}_{\Sigma}} \rho_M^F\left(\cohn_{\tilde{\eta}_1+\delta}\tens\cohn_\tau\tens\cohn_\tau\right)\exp\left(\frac{1}{2}g_{\Sigma}(\tau,\tau)\right)\xd\nu_\Sigma(\tau) \label{eq:gid2}\\
& = \int_{\hat{L}_{\Sigma}} \rho_M^F\left(\cohn_{\tilde{\eta}_1+\delta}\tens\cohn_{\tilde{\eta}_\Sigma+\tau}\tens\cohn_{\tilde{\eta}_\Sigma+\tau}\right)\exp\left(\frac{1}{2}g_{\Sigma}(\tau,\tau)\right)\,\xd\nu_\Sigma(\tau) \label{eq:gid3}\\
& = \exp\left(\frac{\im}{2} D\left(\eta\right)+\im\,\omega_{\Sigma_1}\left(\delta,\tilde{\eta}_1\right)\right)\nonumber\\
&\quad
\int_{\hat{L}_{\Sigma}} \rho_M\left(\cohn_{\delta}\tens\cohn_{\tau}\tens\cohn_{\tau}\right)\exp\left(\frac{1}{2}g_{\Sigma}(\tau,\tau)\right)\xd\nu_\Sigma(\tau) \label{eq:gid4}\\
& = \exp\left(\frac{\im}{2} D\left(\eta\right)+\im\,\omega_{\Sigma_1}\left(\delta,\tilde{\eta}_1\right)\right) \sum_{i\in I} \rho_M\left(\cohn_{\delta}\tens\xi_i\tens\iota_\Sigma(\xi_i)\right) \label{eq:gid5}\\
& = \exp\left(\frac{\im}{2} D_1\left(\eta_1\right)+\im\,\omega_{\Sigma_1}\left(\delta,\tilde{\eta}_1\right)\right) \rho_{M_1}\left(\cohn_\delta\right) c(M;\Sigma,\overline{\Sigma'}) \label{eq:gid6}\\
& = \rho_{M_1}^{F_1}\left(\cohn_{\tilde{\eta}_1+\delta}\right) c(M;\Sigma,\overline{\Sigma'}) . \label{eq:gid7}
\end{align}
We recall from \cite{Oe:holomorphic} that the sum over an orthogonal basis in (\ref{eq:gid1}) can be replaced by  an integral over coherent states. For normalized coherent states this takes the form (\ref{eq:gid2}). We refer the reader to \cite{Oe:holomorphic} for more details about the integral involved as well as the notation used. The step from (\ref{eq:gid2}) to (\ref{eq:gid3}) is an application of Proposition~3.11 of \cite{Oe:holomorphic}, that consists in a ``shifting of the integrand''. Applying Lemma~\ref{lem:relomampl} then yields expression (\ref{eq:gid4}). Note here that the contribution from $\Sigma$ and $\Sigma'$ to $\omega_{\partial M}$ cancel each other due to the opposite orientations, leaving only a $\omega_{\Sigma_1}$-term. We proceed to replace the integral with a sum over an orthonormal basis to obtain (\ref{eq:gid5}). Applying now axiom (T5b) yields (\ref{eq:gid6}). Here we have also used $D(\eta)=D_1(\eta_1)$. Finally, applying again Lemma~\ref{lem:relomampl} yields (\ref{eq:gid7}).
\end{proof}

We note that in terms of the notation introduced in axiom (O2b), expression (\ref{eq:obsglueid}) can be written as,
\begin{equation}
\rho_{M_1}^{F_1}
= \aglue_{\Sigma}\left(\rho_M^F\right) .
\label{eq:obsglueax}
\end{equation}
By linearity, this equation holds for all observables $F\in\cobs_{M}$ and not only for the Weyl observables. Thus, equation (\ref{eq:obsglueax}) is precisely equivalent to the commutative diagram of (X2b), demonstrating the validity of the axiom. The validity of the much weaker statement of axiom (O2b) is then implied.

This concludes the rigorous demonstration that the GBQFT with observables obtained by applying the proposed quantization scheme to the axiom system of Section~\ref{sec:classlinobs} satisfies in addition to the core axioms (Section~\ref{sec:coreaxioms}) not only the observable axioms (Section~\ref{sec:obsgbf}), but also the quantization axioms (Section~\ref{sec:compcor}), including the important principle of composition correspondence.

\subsection{Factorization}
\label{sec:factorization}

A powerful tool in deriving properties of quantum field theory (for example Feynman rules) is the ``generating function'' or ``kernel'' of the S-matrix, see e.g.\ \cite{FaSl:gaugeqft,ItZu:qft}. This is essentially the S-matrix of free quantum field theory modified by a source field and evaluated between initial and final coherent states. Using the notation for a real Klein-Gordon theory for simplicity this takes the form
\begin{multline}
\langle \cohn_{\tau_{\mathrm{out}}}, S_\mu\, \cohn_{\tau_{\mathrm{in}}}\rangle
 =\langle \cohn_{\tau_{\mathrm{out}}}, \cohn_{\tau_{\mathrm{in}}}\rangle \\
\exp\left(\im \int \mu(x) \hat{\tau}(x)\,\xd x\right)
 \exp\left(\frac{\im}{2}\int \mu(x) G_F(x,x') \mu(x')\,\xd x\xd x'\right) .
\label{eq:genfsmatrix}
\end{multline}
Here $\mu$ is the source field, $G_F$ is the Feynman propagator and $\hat{\tau}$ is a complex solution of the Klein-Gordon equation determined by initial and final boundary conditions given by $\tau_{\mathrm{in}}$ and $\tau_{\mathrm{out}}$ respectively.

From a GBF point of view $S_\mu$ determines an amplitude $\rho_M^\mu$ of the theory with source. For comparison we denote the amplitude for the theory without source by $\rho_M$. The coherent states can be combined into a single boundary coherent state $\cohn_{\tau}=\cohn_{\tau_{\mathrm{in}}}\tens \cohn_{\tau_{\mathrm{out}}}$. Equality (\ref{eq:genfsmatrix}) thus takes the form
\begin{multline}
 \rho_M^\mu(\cohn_{\tau})
 =\rho_M(\cohn_{\tau}) \\
\exp\left(\im \int \mu(x) \hat{\tau}(x)\,\xd x\right)
 \exp\left(\frac{\im}{2}\int \mu(x) G_F(x,x') \mu(x')\,\xd x\xd x'\right) .
\label{eq:genfaampl}
\end{multline}
Here we may think of $M$ as determined by the asymptotic limit $t_{\mathrm{in}}\to-\infty$, $t_{\mathrm{out}}\to \infty$ of the region $[t_{\mathrm{in}},t_{\mathrm{out}}]\times\R^3$ in Minkowski space.

Intriguingly, it was found in \cite{CoOe:spsmatrix,CoOe:smatrixgbf} that the very same expression (\ref{eq:genfaampl}) describes the (asymptotic limit for $R\to\infty$ of the) amplitude for a region of the form $M=\R\times B^3_R$. Here, $B^3_R$ denotes the ball of radius $R$ in space, centered at the origin. This was at the heart there of the proof that this asymptotic amplitude is equivalent to the usual S-matrix in Minkowski space.

Indeed, as was already shown in \cite{Oe:affine} (from a slightly different perspective than the one we shall take here) the identity (\ref{eq:genfaampl}) is merely the incarnation of a much more general identity in the GBF. To see this we think of the source field $\mu$ as determining a linear function from field configurations to the real numbers,
\begin{equation}
D(\phi)\defeq \int \mu(x) \phi(x)\,\xd x .
\label{eq:srcobs}
\end{equation}
While in \cite{Oe:affine} this was considered as modifying the action, here we consider it as giving rise to a Weyl observable $F\defeq \exp(\im\, D)$. Of course, we know from Section~\ref{sec:feynweyl} that both points of view are intimately related. Indeed, the following identity is essentially equivalent to equation (119) in \cite{Oe:affine}.

\begin{prop}
\label{prop:sfowfp}
Let $M$ be a region, $D\in\cobsl_M$, $F\defeq \exp(\im\, D)$, and $\tau\in L_{\partial M}$. Define $\hat{\tau}\in L_{\partial M}^\C$ as $\hat{\tau}\defeq \tau^{\mathrm{R}}-\im\tau^{\mathrm{I}}$, where $\tau=\tau^{\mathrm{R}}+J_{\partial M}\tau^{\mathrm{I}}$ and $\tau^{\mathrm{R}},\tau^{\mathrm{I}}\in L_{\tilde M}$. Then,
\begin{equation}
 \rho_M^F\left(\cohn_{\tau}\right)=\rho_M(\cohn_{\tau}) F\left(\hat{\tau}\right) \rho_M^F\left(\cohn_0\right) .
\label{eq:sfofactid}
\end{equation}
Moreover, we have
\begin{equation}
\rho_M^F\left(\cohn_0\right)=\exp\left(\frac{\im}{2}D(\eta_D)-\frac{1}{2}g_{\partial M}(\eta_D,\eta_D)\right) ,
\label{eq:sfovev}
\end{equation}
where $\eta_D\in A_{\tilde M}^D\cap J_{\partial M}L_{\partial M}$ is unique.
\end{prop}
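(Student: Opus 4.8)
The plan is to reduce both assertions to Lemma~\ref{lem:relomampl} together with the explicit coherent-state amplitude (\ref{eq:amplcoh}), passing between the source-free and source-dependent decompositions through a single distinguished element. First I would pin down $\eta_D$. Since $L_{\tilde M}$ is a closed Lagrangian subspace, we have the generalized direct sum $L_{\partial M}=L_{\tilde M}\oplus J_{\partial M}L_{\tilde M}$, while $A_{\tilde M}^D$ is a coset of $L_{\tilde M}$ inside $L_{\partial M}$. Hence $A_{\tilde M}^D$ meets $J_{\partial M}L_{\tilde M}$ in exactly one point $\eta_D$: existence by splitting any representative, uniqueness because $L_{\tilde M}\cap J_{\partial M}L_{\tilde M}=\{0\}$. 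I write $\eta_D=J_{\partial M}\chi$ with $\chi\in L_{\tilde M}$, and will repeatedly use the standing identities $g_{\partial M}(J_{\partial M}\cdot,J_{\partial M}\cdot)=g_{\partial M}(\cdot,\cdot)$, $\omega_{\partial M}(J_{\partial M}\cdot,J_{\partial M}\cdot)=\omega_{\partial M}(\cdot,\cdot)$, and $\omega_{\partial M}(\phi,\phi')=-\tfrac12 g_{\partial M}(\phi,J_{\partial M}\phi')$ (the latter read off from the relation $D(\xi)=2\omega_{\partial M}(\xi,\eta)=g_{\partial M}(\xi,-J_{\partial M}\eta)$ recorded at the end of Section~\ref{sec:feynweyl}).

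For the vacuum value (\ref{eq:sfovev}) I would apply Lemma~\ref{lem:relomampl} with $\eta=\eta_D\in A_{\tilde M}^D$ and $\xi=-\eta_D$, so that $\eta+\xi=0$. Antisymmetry of $\omega_{\partial M}$ annihilates the term $\omega_{\partial M}(-\eta_D,\eta_D)$, leaving $\rho_M^F(\cohn_0)=\rho_M(\cohn_{-\eta_D})\exp(\tfrac{\im}{2}D(\eta_D))$. Because $-\eta_D=J_{\partial M}(-\chi)$ has vanishing component in $L_{\tilde M}$, formula (\ref{eq:amplcoh}) gives $\rho_M(\cohn_{-\eta_D})=\exp(-\tfrac12 g_{\partial M}(\chi,\chi))$, and the $J_{\partial M}$-isometry of $g_{\partial M}$ turns $g_{\partial M}(\chi,\chi)$ into $g_{\partial M}(\eta_D,\eta_D)$, which is precisely (\ref{eq:sfovev}).

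For the factorization (\ref{eq:sfofactid}) I would invoke Lemma~\ref{lem:relomampl} again, now with $\eta=\eta_D$ and $\xi=\tau-\eta_D$, yielding $\rho_M^F(\cohn_\tau)=\rho_M(\cohn_{\tau-\eta_D})\exp(\tfrac{\im}{2}D(\eta_D)+\im\,\omega_{\partial M}(\tau,\eta_D))$, where $\omega_{\partial M}(\eta_D,\eta_D)=0$ was used. The rest is exponent bookkeeping. Using $\tau=\tau^{\mathrm R}+J_{\partial M}\tau^{\mathrm I}$ and $\eta_D=J_{\partial M}\chi$, formula (\ref{eq:amplcoh}) expands $\rho_M(\cohn_{\tau-\eta_D})$ against $\rho_M(\cohn_\tau)$; the Lagrangian condition $\omega_{\partial M}(\tau^{\mathrm I},\chi)=0$ together with the $J_{\partial M}$-identities collapses $\omega_{\partial M}(\tau,\eta_D)$ to $\tfrac12 g_{\partial M}(\tau^{\mathrm R},\chi)$. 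Finally I would compute $F(\hat\tau)=\exp(\im D(\hat\tau))$ by extending $D$ complex-linearly over $L_{\tilde M}^{\C}$ and invoking (\ref{eq:lobprop}) in the form $D(\zeta)=2\omega_{\partial M}(\zeta,\eta_D)=g_{\partial M}(\zeta,\chi)$ for $\zeta\in L_{\tilde M}$, so that $\im D(\hat\tau)=\im\,g_{\partial M}(\tau^{\mathrm R},\chi)+g_{\partial M}(\tau^{\mathrm I},\chi)$. Assembling the right-hand side $\rho_M(\cohn_\tau)F(\hat\tau)\rho_M^F(\cohn_0)$ and matching exponents term by term against the left-hand side then closes the identity.

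The genuinely delicate point — and the only real obstacle — is keeping straight the two distinct splittings of $L_{\partial M}$ in play: the source-dependent decomposition $A_{\tilde M}^D\oplus J_{\partial M}L_{\tilde M}$ underlying the definition (\ref{eq:wobscohdef}) of $\rho_M^F$, versus the source-free decomposition $L_{\tilde M}\oplus J_{\partial M}L_{\tilde M}$ used in (\ref{eq:amplcoh}) and in the very definition of $\hat\tau$. Everything hinges on routing through the distinguished solution $\eta_D$ so that Lemma~\ref{lem:relomampl} translates cleanly between the two; once this is arranged, the remaining manipulation is a routine, if slightly lengthy, application of the Lagrangian property of $L_{\tilde M}$ and the $J_{\partial M}$-compatibility of $g_{\partial M}$ and $\omega_{\partial M}$.
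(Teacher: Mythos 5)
Your proof is correct and follows essentially the same route as the paper: the same identification of $\eta_D$ as the unique point of $A_{\tilde M}^D\cap J_{\partial M}L_{\tilde M}$ via the two decompositions of $L_{\partial M}$, and the same two applications of Lemma~\ref{lem:relomampl} (with $\eta=\eta_D$, $\xi=-\eta_D$ for the vacuum value (\ref{eq:sfovev}) and $\eta=\eta_D$, $\xi=\tau-\eta_D$ for the factorization (\ref{eq:sfofactid})), finished off with (\ref{eq:amplcoh}) and (\ref{eq:lobprop}). The only cosmetic difference is that you carry out the exponent bookkeeping in terms of $\chi=-J_{\partial M}\eta_D$ and $g_{\partial M}$, whereas the paper keeps everything expressed through $\omega_{\partial M}(\cdot,\eta_D)$ before invoking (\ref{eq:lobprop}) at the end.
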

\begin{proof}
We start with the second identity (\ref{eq:sfovev}) and the special element $\eta_D$. The existence and uniqueness of $\eta_D$ can be seen by taking any element $\phi\in A_{\tilde M}$ and decomposing it as $\phi=\phi^{\mathrm{R}}+J_{\partial M}\phi^{\mathrm{I}}$ with $\phi^{\mathrm{R}},\phi^{\mathrm{I}}\in L_{\partial M}$. Then, it is easy to see that $\eta_D=J_{\partial M}\phi^{\mathrm{I}}$ is unique. We apply now Lemma~\ref{lem:relomampl} with $\eta\defeq\eta_D$ and $\xi\defeq-\eta_D$. Combining this with the amplitude map (\ref{eq:amplcoh}) yields (\ref{eq:sfovev}).

We turn to the first identity (\ref{eq:sfofactid}). We set in Lemma~\ref{lem:relomampl} $\eta\defeq\eta_D$ and $\xi\defeq\tau-\eta_D$. Thus,
\begin{align}
\rho_M^F\left(\cohn_{\tau}\right) & =\rho_M\left(\cohn_{\tau-\eta_D}\right)
 \exp\left(\frac{\im}{2} D(\eta_D)+\im\,\omega_{\partial M}\left(\tau-\eta_D,\eta_D\right)\right)\\
 & =\rho_M\left(\cohn_{\tau}\right)
 \exp\left(\im\,\omega_{\partial M}\left(\tau^{\mathrm{R}},J_{\partial M}\tau^{\mathrm{I}}\right)
 +\frac{1}{2}g_{\partial M}\left(J_{\partial M}\tau^{\mathrm{I}},J_{\partial M}\tau^{\mathrm{I}}\right)\right. \nonumber\\
 &\qquad
 -\im\,\omega_{\partial M}\left(\tau^{\mathrm{R}},J_{\partial M}\tau^{\mathrm{I}}-\eta_D\right)
 -\frac{1}{2}g_{\partial M}\left(J_{\partial M}\tau^{\mathrm{I}}-\eta_D,J_{\partial M}\tau^{\mathrm{I}}-\eta_D\right) \nonumber\\
 &\qquad
 \left.
+\frac{\im}{2} D(\eta_D)+\im\,\omega_{\partial M}\left(\tau^{\mathrm{R}},\eta_D\right)\right)\\
 & =\rho_M\left(\cohn_{\tau}\right)
 \exp\left(2\im\,\omega_{\partial M}\left(\tau^{\mathrm{R}},\eta_D\right)
 +2\omega_{\partial M}\left(\tau^{\mathrm{I}},\eta_D\right)\right)
 \nonumber\\
 &\qquad
 \exp\left(\frac{\im}{2} D(\eta_D)-\frac{1}{2}g_{\partial M}(\eta_D,\eta_D)\right)
\end{align}
Here we have used expression (\ref{eq:amplcoh}) for the respective amplitudes in the first step. It remains to observe that with (\ref{eq:sfovev}) and (\ref{eq:lobprop}) from axiom (C5) we obtain expression (\ref{eq:sfofactid}).
\end{proof}

Let us emphasize that the three factors on the right hand side of (\ref{eq:sfofactid}) reproduce precisely the three factors on the right hand side of (\ref{eq:genfaampl}) in the example at hand. In particular, $\hat{\tau}$ in (\ref{eq:genfaampl}) is precisely given by the equation $\hat{\tau}= \tau^{\mathrm{R}}-\im\tau^{\mathrm{I}}$ induced by the decomposition $L_{\partial M}=L_{\tilde M}\oplus J_{\partial M} L_{\tilde M}$. The only factor for which the coincidence is not so obvious is the ``vacuum expectation value'' of the Weyl observable (\ref{eq:sfovev}). To see this, we rewrite (\ref{eq:sfovev}) as follows,
\begin{align}
\rho_M^F\left(\cohn_0\right) & =\exp\left(\frac{\im}{2}D(\eta_D)-\omega_{\partial M}\left(\eta_D,J_{\partial M}\eta_D\right)\right)\\
& = \exp\left(\frac{\im}{2}D\left(\eta_D-\im J_{\partial M}\eta_D\right)\right) .
 \label{eq:sfovevpol}
\end{align}
In the second line we have applied relation (\ref{eq:lobprop}) of axiom (C5) and taken the liberty of extending $D$ complex linearly to complexified configurations. Now, note that $J_{\partial M}\eta_D\in L_{\tilde{M}}$ since $\eta_D\in J_{\partial M} L_{\tilde{M}}$. That is $\eta_D-\im J_{\partial M}\eta_D$ is still a solution, now complex, of the inhomogeneous Klein-Gordon equation with source $\mu$. Moreover, on the boundary of $M$ we have $\eta_D-\im J_{\partial M}\eta_D\in P^+(L_{\partial M})$, where $P^+(L_{\partial M})\subseteq L_{\partial M}^{\C}$ is the polarized subspace of the complexified boundary solution space defined by
\begin{equation}
 P^+(L_{\partial M})\defeq \{\xi-\im J_{\partial M}\xi : \xi\in L_{\partial M}\} .
\end{equation}
This is precisely the Feynman boundary condition. That is, $\eta_D-\im J_{\partial M}\eta_D$ as a spacetime configuration takes the form,
\begin{equation}
 (\eta_D-\im J_{\partial M}\eta_D)(x)=\int G_F(x,x') \mu(x')\,\xd x' .
\end{equation}
Combining this with (\ref{eq:sfovevpol}) and (\ref{eq:srcobs}) yields the third factor in (\ref{eq:genfaampl}). 

It was shown in \cite{Oe:obsgbf} that a related identity holds for a much larger class of observables. This was termed the \emph{coherent factorization property} as it applies to coherent states. Concretely, it was suggested in \cite{Oe:obsgbf} that Schrödinger-Feynman quantization exhibits this property. On the other hand, for two other quantization schemes (Berezin-Toeplitz and normal ordering) it was rigorously shown to hold.

\begin{prop}[Coherent Factorization Property]
\label{prop:sfocfp}
Let $M$ be a regular region, $F\in\cobs_M$ and $\tau\in L_{\partial M}$. Define $\hat{\tau}\in L_{\partial M}^\C$ as $\hat{\tau}\defeq \tau^{\mathrm{R}}-\im\tau^{\mathrm{I}}$, where $\tau=\tau^{\mathrm{R}}+J_{\partial M}\tau^{\mathrm{I}}$ and $\tau^{\mathrm{R}},\tau^{\mathrm{I}}\in L_{\tilde M}$. Define $F'\in\cobs_M$ by
$F'(\xi)\defeq F(\xi+\hat{\tau})$. Then,
\begin{equation}
 \rho_M^F\left(\cohn_{\tau}\right)=\rho_M\left(\cohn_{\tau}\right) \rho_M^{F'}\left(\cohn_0\right) .
\label{eq:sfocfp}
\end{equation}
\end{prop}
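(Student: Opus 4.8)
The plan is to reduce the statement to the case of a single Weyl observable, for which it follows almost immediately from Proposition~\ref{prop:sfowfp}, and then recover the general case by linearity. The two structural facts that make this work are that the assignment $F\mapsto F'$, where $F'(\xi)\defeq F(\xi+\hat{\tau})$, is complex linear in $F$, and that the quantization $F\mapsto\rho_M^F=\quant_M(F)$ is complex linear by construction. Since $\cobs_M$ is by definition the complex span of the Weyl observables in $\cobsw_M$, it therefore suffices to establish (\ref{eq:sfocfp}) for $F\in\cobsw_M$.

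So first I would treat a Weyl observable $F=\exp(\im\,D)$ with $D\in\cobsl_M$, where $D$ (and hence $F$) is extended complex linearly so that $D(\hat{\tau})$ and $F(\xi+\hat{\tau})$ make sense for the complexified argument $\hat{\tau}=\tau^{\mathrm{R}}-\im\,\tau^{\mathrm{I}}$. The crux is the elementary identity
\begin{equation*}
 F'(\xi)=F(\xi+\hat{\tau})=\exp\left(\im\,D(\xi)+\im\,D(\hat{\tau})\right)=F(\hat{\tau})\,F(\xi),
\end{equation*}
which uses only the linearity of $D$. Thus, as an element of $\cobs_M$, we have $F'=F(\hat{\tau})\cdot F$, where $F(\hat{\tau})$ is now read as a complex scalar. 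Applying the complex linear map $\quant_M$ gives $\rho_M^{F'}=F(\hat{\tau})\,\rho_M^F$, and evaluating on $\cohn_0$ yields $\rho_M^{F'}(\cohn_0)=F(\hat{\tau})\,\rho_M^F(\cohn_0)$. Substituting this into the factorization of Proposition~\ref{prop:sfowfp},
\begin{equation*}
 \rho_M^F(\cohn_{\tau})=\rho_M(\cohn_{\tau})\,F(\hat{\tau})\,\rho_M^F(\cohn_0)=\rho_M(\cohn_{\tau})\,\rho_M^{F'}(\cohn_0),
\end{equation*}
establishes (\ref{eq:sfocfp}) in the Weyl case.

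To reassemble the general case I would write $F=\sum_i\lambda_i F_i$ with $\lambda_i\in\C$ and $F_i\in\cobsw_M$. Complex linearity of $F\mapsto F'$ gives $F'=\sum_i\lambda_i F_i'$, while complex linearity of $\quant_M$ gives $\rho_M^F=\sum_i\lambda_i\rho_M^{F_i}$ and $\rho_M^{F'}(\cohn_0)=\sum_i\lambda_i\rho_M^{F_i'}(\cohn_0)$. Applying the Weyl case to each $F_i$ and pulling the common factor $\rho_M(\cohn_{\tau})$ out of the sum then yields the claim. The only genuinely delicate point—and hence the step I would be most careful about—is the interpretation of the complexified evaluations $F(\hat{\tau})$ and $F'=F(\cdot+\hat{\tau})$, together with the verification that $F'$ again lies in $\cobs_M$; all of these rest on the complex linear extension of the linear observables in $\cobsl_M$, exactly as already used in Proposition~\ref{prop:sfowfp}. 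Note that no further analytic input (such as density of the coherent states) is required here, since the asserted identity is an equality of explicit values on coherent states obtained purely algebraically from the already established factorization.
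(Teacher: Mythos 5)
Your proposal is correct and follows essentially the same route as the paper's own proof: decompose $F$ into Weyl observables, use the identity $F_k'=F_k(\hat{\tau})\,F_k$ valid for exponentials of linear observables, and combine the linearity of $\quant_M$ with the factorization identity (\ref{eq:sfofactid}) of Proposition~\ref{prop:sfowfp}. The only difference is presentational (you isolate the Weyl case first and then sum, while the paper carries the sum through in one chain of equalities), which does not change the substance.
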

\begin{proof}
Since $F\in\cobs_M$ there are $D_1,\dots,D_n\in \cobsl_M$ such that $F=\sum_{k=1}^n \lambda_k F_k$ with $F_k=\exp(\im\, D_k)$. Moreover, we have
\begin{equation}
F'=\sum_{k=1}^n \lambda_k F_k'=\sum_{k=1}^n\lambda_k F_k\left(\hat{\tau}\right) F_k
\end{equation}
since the $F_k$ are Weyl observables. Now by linearity of the quantization map $\quant_M$ together with identity (\ref{eq:sfofactid}) we have
\begin{multline}
\rho_M^{F}\left(\cohn_{\tau}\right)=\sum_{k=1}^n \lambda_k\,\rho_M^{F_k}\left(\cohn_{\tau}\right)=\rho_M\left(\cohn_{\tau}\right)\sum_{k=1}^n \lambda_k\, F_k\left(\hat{\tau}\right)\rho_M^{F_k}\left(\cohn_0\right)\\
=\rho_M\left(\cohn_{\tau}\right)\sum_{k=1}^n \lambda_k\, \rho_M^{F_k'}\left(\cohn_0\right)=\rho_M\left(\cohn_{\tau}\right)\rho_M^{F'}\left(\cohn_0\right) .
\end{multline}
\end{proof}

Note that the argument of Proposition~\ref{prop:sfocfp} is reversible. Using linearity of the quantization map $\quant_M$ we can deduce (\ref{eq:sfofactid}) from (\ref{eq:sfocfp}) for Weyl observables.

\subsection{More general observables}
\label{sec:genobs}

Weyl observables have another advantage besides that exploited in Section~\ref{sec:feynweyl}: They may serve as generators for other types of observables. Indeed, a linear observable $D\in\cobsl_M$ may be obtained as the first derivative of a corresponding Weyl observable with a parameter $\lambda\in\R$ inserted. Concretely,
\begin{equation}
 D=\left. -\im\frac{\partial}{\partial \lambda} \exp\left(\im \lambda D\right) \right|_{\lambda=0}
\end{equation}
Since the quantization map $\quant_M$ is linear we may commute it with the derivative. Defining $F\defeq \exp\left(\im \lambda D\right)$ we thus have,
\begin{equation}
 \rho_M^D=\left. -\im\frac{\partial}{\partial \lambda} \rho_M^F \right|_{\lambda=0} .
\end{equation}
In fact it is easy to evaluate this using Proposition~\ref{prop:sfowfp}. Thus, equations (\ref{eq:sfofactid}) and (\ref{eq:sfovev}) yield,
\begin{align}
 \rho_M^D\left(\cohn_\tau\right) & =\left. -\im\frac{\partial}{\partial \lambda} \rho_M(\cohn_{\tau}) \exp\left(\im\lambda D(\hat{\tau})+\frac{\im}{2}\lambda^2 D(\eta_D)-\frac{1}{2}\lambda^2 g_{\partial M}(\eta_D,\eta_D)\right)\right|_{\lambda=0}\\
& = \rho_M(\cohn_{\tau}) D(\hat{\tau}) .
\label{eq:qlinobscoh}
\end{align}
Not unexpectedly, the quantization of a linear observable is particularly simple.

Polynomial observables can be generated similarly. Thus, suppose we are interested in the product observable $D_1\cdots D_n$, where $D_1,\dots,D_n\in\cobsl_M$ are linear observables. Then,
\begin{equation}
 D_1\cdots D_n=\left. (-\im)^{n}\frac{\partial}{\partial \lambda_1}\cdots\frac{\partial}{\partial \lambda_n} \exp\left(\im \sum_{k=1}^n\lambda_k D_k\right) \right|_{\lambda_1=0,\dots,\lambda_n=0} .
\end{equation}
Setting
\begin{equation}
G\defeq D_1\cdots D_n\quad\text{and}\quad F\defeq \exp\left(\im \sum_{k=1}^n\lambda_k D_k\right) ,
\end{equation}
linearity of the quantization map $\quant_M$ may be used again, yielding
\begin{equation}
\rho_M^G=\left. (-\im)^{n}\frac{\partial}{\partial \lambda_1}\cdots\frac{\partial}{\partial \lambda_n} \rho_M^F\right|_{\lambda_1=0,\dots,\lambda_n=0} .
\end{equation}
To evaluate this, one may again use Proposition~\ref{prop:sfowfp}, although the result will be more complicated due to the contribution from the ``vacuum expectation value'' (\ref{eq:sfovev}) which becomes non-trivial. We merely exhibit the quadratic case $n=2$ here,
\begin{multline}
\rho_M^{D_1 D_2}\left(\cohn_\tau\right)=\rho_M\left(\cohn_\tau\right) \\
\left(D_1(\hat{\tau})D_2(\hat{\tau})-\frac{\im}{2} D_1(\eta_{D_2})-\frac{\im}{2} D_2(\eta_{D_1})
+g_{\partial M}\left(\eta_{D_1},\eta_{D_2}\right)\right) .
\label{eq:quadobs}
\end{multline}
We may also straightforwardly combine Weyl observables with polynomials.

It is easily seen that all the axioms are compatible with an extension of the observable algebra $\cobs_M$ from linear combinations of Weyl observables to, say, all possible products of polynomials with Weyl observables. This comes really all down to the linearity of the quantization map. Worries about the well definedness of observable maps do not arise since, as we have seen, evaluations on coherent states are always well defined. Thus, even though we have confined ourselves thus far mainly to Weyl observables, the principal results (axioms of Sections~\ref{sec:cobsax} and \ref{sec:compcor} and Proposition~\ref{prop:sfocfp}) hold for a much wider class of observables.

As a basic example of the identity (\ref{eq:quadobs}) we consider the relation between the vacuum expectation value of the time-ordered two-point function and the Feynman propagator in quantum field theory. To this end, we specialize the identity (\ref{eq:quadobs}) to the vacuum state and rewrite it, using (\ref{eq:lobprop}) of axiom (C5),
\begin{align}
& \rho_M^{D_1 D_2}\left(\cohn_0\right) = 
-\frac{\im}{2} D_1(\eta_{D_2})-\frac{\im}{2} D_2(\eta_{D_1})
+g_{\partial M}\left(\eta_{D_1},\eta_{D_2}\right) \\
& =
-\frac{\im}{2} D_1(\eta_{D_2})-\frac{\im}{2} D_2(\eta_{D_1})
+\omega_{\partial M}\left(\eta_{D_1},J_{\partial M}\eta_{D_2}\right)
+\omega_{\partial M}\left(\eta_{D_2},J_{\partial M}\eta_{D_1}\right) \\
& =
-\frac{\im}{2} D_1\left(\eta_{D_2}-\im J_{\partial M}\eta_{D_2}\right)
 -\frac{\im}{2} D_2\left(\eta_{D_1}-\im  J_{\partial M}\eta_{D_1}\right) .
\label{eq:quadobsfeyn}
\end{align}
Here we have extended $D_1$ and $D_2$ to complex linear observables on the complexified configuration space.

As in the example of Section~\ref{sec:factorization} we consider a real Klein-Gordon field in Minkowski space and take a region $M=[t,t']\times \R^3$ determined by a time interval $[t,t']$. The observables of interest are now the point-wise evaluations,
\begin{equation}
 D_1(\phi)=\phi(x_1),\qquad D_2(\phi)=\phi(x_2) ,
\end{equation}
where $x_1,x_2\in [t,t']\times \R^3$. Even though these observables are not continuous and thus not elements of $\cobsl_M$, we may treat them in a distributional sense, with certain limitations. The associated inhomogeneous Klein-Gordon equations have $\delta$-function sources at $x_1$ and $x_2$ respectively. Thus, $\eta_{D_1}-\im  J_{\partial M}\eta_{D_1}$ and $\eta_{D_2}-\im J_{\partial M}\eta_{D_2}$ are corresponding fundamental solutions. Moreover, they satisfy precisely the Feynman boundary conditions. (Compare the discussion in Section~\ref{sec:factorization}.) Thus, they are versions of the Feynman propagator,
\begin{equation}
(\eta_{D_1}-\im  J_{\partial M}\eta_{D_1})(x)=G_F(x,x_1),\quad
(\eta_{D_2}-\im  J_{\partial M}\eta_{D_2})(x)=G_F(x,x_2) .
\end{equation}
Inserting this into (\ref{eq:quadobsfeyn}) and taking into account the symmetry of the Feynman propagator yields the expected result,
\begin{equation}
\langle \cohn_0, \tord \phi(x_1)\phi(x_2) \cohn_0\rangle =
\rho_M^{D_1 D_2}(\cohn_0)=-\im\, G_F(x_1,x_2) .
\end{equation}

\subsection{Canonical commutation relations}
\label{sec:diracquant}

So far we have not mentioned the Dirac quantization condition (\ref{eq:diraccond}) of Section~\ref{sec:obsqft}. The main reason for this is that we have focused so far on Weyl observables. In quantum field theory these do not satisfy the Dirac quantization condition. With the considerations of Section~\ref{sec:genobs} we also have at our disposal linear observables, for which the Dirac quantization condition reduces to the \emph{canonical commutation relations}. These are satisfied in quantum field theory. We shall show in this section that they can be made sense of also in the present setting and are indeed satisfied for linear observables.

An immediate observation about the relation (\ref{eq:diraccond}) is that from the GBF perspective it is a statement for ``infinitesimally thin'' regions. This is so because the relation pretends to compose a region (where $D$ lives) with a copy of itself (where $E$ lives) to obtain the same region again (where the observable on the right hand side lives). We recall from Section~\ref{sec:geomax} that these regions are called \emph{slice regions}. Consider the slice region $\hat{\Sigma}$ associated to the hypersurface $\Sigma$ with boundary $\partial \hat{\Sigma}=\Sigma\cup\overline{\Sigma'}$, where $\Sigma'$ is a copy of $\Sigma$. For $D,E\in\cobsl_{\hat{\Sigma}}$ the relation (\ref{eq:diraccond}) can be transcribed as,\footnote{Note that we are committing slight abuse of notation here. The term $\rho_{\hat{\Sigma}}^E\aglue_{\Sigma} \rho_{\hat{\Sigma}}^D$ should strictly speaking be written as $\aglue_{\Sigma}\left(\rho_{\hat{\Sigma}}^E\aglue \rho_{\hat{\Sigma}}^D\right)$ etc.}
\begin{equation}
\rho_{\hat{\Sigma}}^E\aglue_{\Sigma} \rho_{\hat{\Sigma}}^D - \rho_{\hat{\Sigma}}^D\aglue_{\Sigma} \rho_{\hat{\Sigma}}^E=\rho_{\hat{\Sigma}}^K .
\label{eq:diracgbf}
\end{equation}
Here $K$ is given as
\begin{equation}
K=-\im\, (E,D),
\end{equation}
where the bracket is supposed to be the Poisson bracket. This seems to raise a difficulty. The observables here are not functions on the phase space $L_{\hat{\Sigma}}$, where the Poisson bracket is defined, but on the larger space $K_{\hat{\Sigma}}$. Simply restricting the observables $E,D$ to the subspace $L_{\hat{\Sigma}}\subseteq K_{\hat{\Sigma}}$ makes their Poisson bracket $(E,D)$ well defined. However, $(E,D)$ is then a function on $L_{\hat{\Sigma}}$ and not on $K_{\hat{\Sigma}}$. But $E$ and $F$ are linear, so their Poisson bracket is a constant function on $L_{\hat{\Sigma}}$ which we extend to a function with the same constant value on $K_{\hat{\Sigma}}$. So, with these details understood, the canonical commutation relations do make sense.

We take the structure of $L_{\hat{\Sigma}}$ as a symplectic vector space to come from identifying $L_{\hat{\Sigma}}$ with $L_{\Sigma}$ (rather than with $L_{\overline{\Sigma'}}$). For the orderings in (\ref{eq:diracgbf}) to correspond to the orderings (\ref{eq:diraccond}) this means that in $\rho_{\hat{\Sigma}}^E\aglue_{\Sigma} \rho_{\hat{\Sigma}}^D$ the $\Sigma$-side of $\hat{\Sigma}$ in the first term has to be glued to the $\Sigma'$-side of $\hat{\Sigma}$ in the second term.
For a linear observable $D$, the Hamiltonian vector field $x_D$ is translation invariant and may thus be identified with an element in $L_{\hat{\Sigma}}$. Concretely, $x_D$ is defined by the equation,
\begin{equation}
 D(\xi)=2\omega_{\Sigma}(\xi,x_D) \qquad\forall \xi\in L_{\hat{\Sigma}} .
\label{eq:hamlo}
\end{equation}
This equation is reminiscent of equation (\ref{eq:lobprop}) in axiom (C5). Indeed, writing the latter using $\eta_D$ as defined in Proposition~\ref{prop:sfowfp} we obtain a simple relation between $x_D\in L_\Sigma$ and $\eta_D\in L_{\partial\hat{\Sigma}}=L_\Sigma\times L_{\overline{\Sigma'}}$:
\begin{equation}
 \eta_D=\left(\frac{1}{2}x_D,-\frac{1}{2} x_D\right) .
\label{eq:hvectrel}
\end{equation}
To see this relation we first note that while $L_{\tilde{\hat{\Sigma}}}\subseteq L_{\partial\hat{\Sigma}}$ consists of the elements $(\xi,\xi)$, the space $J_{\partial\hat{\Sigma}}L_{\tilde{\hat{\Sigma}}}\subseteq L_{\partial\hat{\Sigma}}$ consists precisely of the elements $(\xi,-\xi)$ for $\xi\in L_{\Sigma}$. Thus, both sides of (\ref{eq:hvectrel}) are elements of $J_{\partial\hat{\Sigma}}L_{\tilde{\hat{\Sigma}}}\subseteq L_{\partial\hat{\Sigma}}$. It remains to verify that (\ref{eq:hamlo}) and (\ref{eq:lobprop}) are equivalent, given (\ref{eq:hvectrel}),
\begin{equation}
 2\omega_{\Sigma}(\xi,x_D)=2\omega_{\Sigma}\left(\xi,\frac{1}{2} x_D\right)
+ 2\omega_{\overline{\Sigma'}}\left(\xi,-\frac{1}{2} x_D\right)
=2\omega_{\partial\hat{\Sigma}}(\xi,\eta_D) .
\label{eq:relhamsol}
\end{equation}

Equipped with the Hamiltonian vector fields the Poisson bracket can be conveniently expressed in terms of the symplectic structure,
\begin{equation}
(E,D)=2\omega_{\Sigma}\left(x_E,x_D\right) .
\end{equation}
So we can write the canonical commutation relations (\ref{eq:diracgbf}) as follows,
\begin{equation}
\rho_{\hat{\Sigma}}^E\aglue_{\Sigma} \rho_{\hat{\Sigma}}^D - \rho_{\hat{\Sigma}}^D\aglue_{\Sigma} \rho_{\hat{\Sigma}}^E=-2\im\,\omega_{\Sigma}\left(x_E,x_D\right)\rho_{\hat{\Sigma}} .
\label{eq:diracgbf2}
\end{equation}
In order to prove that they hold it will be convenient to explore the operator product of observables first for Weyl observables.

\begin{prop}
\label{prop:woprod}
Let $\Sigma$ be a hypersurface and $\hat{\Sigma}$ the associated slice region. Let $D,E\in\cobsl_{\hat{\Sigma}}$ and $x_D,x_E\in L_{\Sigma}$ the associated Hamiltonian vector fields. Define $F\defeq \exp(\im\, D)$, $G\defeq\exp\left(\im\, E\right)$ and the order of composition as specified previously. Then,
\begin{equation}
\rho_{\hat{\Sigma}}^F\aglue_{\Sigma} \rho_{\hat{\Sigma}}^G=
\exp\left(-\frac{\im}{2}D(\eta_E)-\frac{\im}{2}E(\eta_D)+\im\,\omega_{\Sigma}(x_D,x_E)\right)
\rho_{\hat{\Sigma}}^{F\cdot G} .
\label{eq:woprod}
\end{equation}
\end{prop}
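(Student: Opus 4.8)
The plan is to verify the identity on normalized coherent states, since these span a dense subspace and both sides are observable maps on $\hat{\Sigma}$. Writing out the left-hand side means composing the two copies of the empty region along $\Sigma$ in the prescribed order. Concretely, I would use axiom (O2b) to express $\rho_{\hat{\Sigma}}^F\aglue_{\Sigma}\rho_{\hat{\Sigma}}^G$ as a sum over an orthonormal basis $\{\xi_i\}$ of $\cH_\Sigma$ of the disjointly composed map $\rho_{\hat{\Sigma}}^F\aglue\rho_{\hat{\Sigma}}^G$, with $\xi_i$ inserted on the $\Sigma$-side of the first copy and $\iota_\Sigma(\xi_i)$ on the $\Sigma'$-side of the second, and then convert this sum into an integral over coherent states $\cohn_\sigma$, exactly as in the proof of the preceding proposition and following \cite{Oe:holomorphic}. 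Because $\aglue$ factorizes the argument, this reduces the computation to a single Gaussian-type integral over $\sigma\in L_\Sigma$ of the product of two coherent-state observable amplitudes, one from each copy of $\hat{\Sigma}$.

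The key simplification is to strip the observables off each factor using the factorization identity of Lemma~\ref{lem:relomampl}. Applying it to each of $\rho_{\hat{\Sigma}}^F$ and $\rho_{\hat{\Sigma}}^G$ rewrites them as the plain amplitudes $\rho_{\hat{\Sigma}}$ evaluated at arguments shifted by $\eta_D$ and $\eta_E$ respectively, times scalar factors of the form $\exp(\frac{\im}{2}D(\eta_D)+\im\,\omega_{\partial\hat{\Sigma}}(\cdot,\eta_D))$ and its $E$-analogue. The purely amplitude part then glues back together by axiom (T5b) to give $\rho_{\hat{\Sigma}}$, with the gluing anomaly factor $c(M;\Sigma,\overline{\Sigma'})$ produced by (T5b) cancelling the one appearing in (O2b). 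What survives beyond $\rho_{\hat{\Sigma}}(\cohn_\tau)$ and the Weyl factors of $F$ and $G$ is a residual scalar coming from the interaction of the two shifts along the glued hypersurface.

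To extract that residual scalar I would use relation (\ref{eq:hvectrel}), which locates $\eta_D=(\tfrac{1}{2} x_D,-\tfrac{1}{2} x_D)$ and $\eta_E=(\tfrac{1}{2} x_E,-\tfrac{1}{2} x_E)$ inside $L_{\partial\hat{\Sigma}}=L_\Sigma\times L_{\overline{\Sigma'}}$, together with the splitting of $\omega_{\partial\hat{\Sigma}}$ into its $\Sigma$- and $\overline{\Sigma'}$-contributions, which carry opposite orientation. The two cross-terms $\omega_{\partial\hat{\Sigma}}(\cdot,\eta_D)$ and $\omega_{\partial\hat{\Sigma}}(\cdot,\eta_E)$, once the glued variable is identified on the two sides and (\ref{eq:lobprop}) of axiom (C5) is used, produce an overall factor $\exp(\im\,\omega_\Sigma(x_D,x_E)-g_{\partial\hat{\Sigma}}(\eta_D,\eta_E))$ beyond the product $\rho_{\hat{\Sigma}}^F(\cohn_0)\,\rho_{\hat{\Sigma}}^G(\cohn_0)$ of the two vacuum factors. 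I would then re-express this product in terms of the single vacuum factor $\rho_{\hat{\Sigma}}^{F\cdot G}(\cohn_0)$, using $\eta_{D+E}=\eta_D+\eta_E$ and expanding $(D+E)(\eta_D+\eta_E)$; this conversion generates precisely the terms $-\frac{\im}{2}D(\eta_E)-\frac{\im}{2}E(\eta_D)$, while the $g_{\partial\hat{\Sigma}}(\eta_D,\eta_E)$ pieces cancel, and the remaining coherent-state dependence reassembles into $\rho_{\hat{\Sigma}}^{F\cdot G}$ by Proposition~\ref{prop:sfowfp}.

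The main obstacle is exactly this last bookkeeping step: isolating the symplectic cross-term $\im\,\omega_\Sigma(x_D,x_E)$ with the correct coefficient and, crucially, the correct sign. This sign is what encodes the noncommutativity of the composition, and it is fixed by the prescribed ordering of the gluing (the $\Sigma$-side of the first copy glued to the $\Sigma'$-side of the second) together with the opposite orientations of $\Sigma$ and $\overline{\Sigma'}$ in $\omega_{\partial\hat{\Sigma}}$. Keeping these orientation conventions consistent throughout the two applications of Lemma~\ref{lem:relomampl} and the single gluing integral is where the computation is most error-prone.
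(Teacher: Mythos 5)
Your overall strategy is the same as the paper's: evaluate on coherent states, write the composition as a gluing integral over coherent states on $\Sigma$, strip the observables off both factors with Lemma~\ref{lem:relomampl}, reassemble the plain amplitude via (T5a)/(T5b), and apply Lemma~\ref{lem:relomampl} once more to recognize $\rho_{\hat{\Sigma}}^{F\cdot G}$. The one genuine gap lies in your choice of affine representatives when stripping. Lemma~\ref{lem:relomampl} holds for \emph{any} $\eta\in A^D_{\tilde{M}}$, and you take the distinguished antidiagonal elements $\eta_D=(\tfrac{1}{2}x_D,-\tfrac{1}{2}x_D)$ and $\eta_E=(\tfrac{1}{2}x_E,-\tfrac{1}{2}x_E)$. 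With that choice the two stripped amplitudes carry the glued integration variable shifted by different amounts (the shifts differ by $\tfrac{1}{2}(x_D+x_E)$), so the ``purely amplitude part'' does \emph{not} simply glue back together by (T5b): the arguments inserted on $\Sigma$ and $\overline{\Sigma'}$ no longer match. Repairing this requires an additional shift of the integration variable (Proposition~3.11 of \cite{Oe:holomorphic}), which generates further Gaussian factors that must be folded into your residual scalar; your write-up does not account for these, and they are exactly where the sign of the symplectic cross-term you worry about would get corrupted. The paper's proof avoids the issue by exploiting the freedom in Lemma~\ref{lem:relomampl}: it strips $G$ with the representative $(x_E,0)\in A^E_{\hat{\Sigma}}$ and $F$ with $(0,-x_D)\in A^D_{\hat{\Sigma}}$, both of which vanish on the glued boundary component, so the integration variable appears unshifted and identically in both plain amplitudes and (T5a)/(T5b) apply verbatim; correspondingly it evaluates the identity on $\cohn_{\tau_1+x_E}\tens\cohn_{\tau_2-x_D}$, which is still a generic coherent state. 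If you either adopt that choice of representatives or carry out the extra shift explicitly, the rest of your outline --- including $\eta_{D+E}=\eta_D+\eta_E$ and the conversion between $\rho_{\hat{\Sigma}}^F(\cohn_0)\rho_{\hat{\Sigma}}^G(\cohn_0)$ and $\rho_{\hat{\Sigma}}^{F\cdot G}(\cohn_0)$ --- goes through.
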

\begin{proof}
As usual we demonstrate this on coherent states. We remark that $(x_E,0)\in A_{\hat{\Sigma}}^E$ and $(0,-x_D)\in A_{\hat{\Sigma}}^D$ and thus by linearity of $s_{\hat{\Sigma}}$, $(x_E,-x_D)\in A_{\hat{\Sigma}}^{D+E}$. Then, given $\tau_1,\tau_2\in L_{\Sigma}$ we have,
\begin{align}
& \left(\rho_{\hat{\Sigma}}^F\aglue_{\Sigma} \rho_{\hat{\Sigma}}^G\right)
\left(\cohn_{\tau_1+x_E}\tens\cohn_{\tau_2-x_D}\right) \label{eq:wop1} \\
& = \int_{\hat{L}_{\Sigma}} \rho_{\hat{\Sigma}}^G\left(\cohn_{\tau_1+x_E}\tens\cohn_\xi\right)
 \rho_{\hat{\Sigma}}^F\left(\cohn_{\xi}\tens\cohn_{\tau_2-x_D}\right)\exp\left(\frac{1}{2}g_{\Sigma}(\xi,\xi)\right)\xd\nu_\Sigma(\xi)  \label{eq:wop2} \\
& = \exp\left(\frac{\im}{2}E\left((x_E,0)\right)+\frac{\im}{2}D\left((0,-x_D)\right)\right. \nonumber\\
&\qquad +\im\,\omega_{\partial\hat{\Sigma}}\left((\tau_1,\xi),(x_E,0)\right)
+\im\,\omega_{\partial\hat{\Sigma}}\left((\xi,\tau_2),(0,-x_D)\right)\bigg) \nonumber\\
&\qquad \int_{\hat{L}_{\Sigma}} \rho_{\hat{\Sigma}}\left(\cohn_{\tau_1}\tens\cohn_\xi\right)
 \rho_{\hat{\Sigma}}\left(\cohn_{\xi}\tens\cohn_{\tau_2}\right)\exp\left(\frac{1}{2}g_{\Sigma}(\xi,\xi)\right)\xd\nu_\Sigma(\xi)  \label{eq:wop3} \\
& = \exp\left(-\frac{\im}{2}D\left((x_E,0)\right)-\frac{\im}{2}E\left((0,-x_D)\right)\right)  \nonumber\\
&\qquad
 \exp\left(\frac{\im}{2}(D+E)\left((x_E,-x_D)\right)+\im\,\omega_{\partial\hat{\Sigma}}\left((\tau_1,\tau_2),(x_E,-x_D)\right)\right)
\rho_{\hat{\Sigma}}\left(\cohn_{\tau_1}\tens\cohn_{\tau_2}\right) \label{eq:wop4} \\
& = \exp\left(-\frac{\im}{2}D\left((x_E,0)\right)-\frac{\im}{2}E\left((0,-x_D)\right)\right) \rho_{\hat{\Sigma}}^{F\cdot G}\left(\cohn_{\tau_1+x_E}\tens\cohn_{\tau_2-x_D}\right)  \label{eq:wop5} \\
& =  \exp\left(-\frac{\im}{2}D\left(\left(\frac{1}{2}x_E,-\frac{1}{2}x_E\right)\right)-\frac{\im}{2}E\left(\left(\frac{1}{2} x_D,-\frac{1}{2}x_D\right)\right)\right. \nonumber\\
& \qquad \left.
-\frac{\im}{4}D\left(\left(x_E,x_E\right)\right)+\frac{\im}{4}E\left(\left(x_D, x_D\right)\right)\right)
 \rho_{\hat{\Sigma}}^{F\cdot G}\left(\cohn_{\tau_1+x_E}\tens\cohn_{\tau_2-x_D}\right) \label{eq:wop6} \\
& =  \exp\left(-\frac{\im}{2}D\left(\eta_E\right)-\frac{\im}{2}E\left(\eta_D\right)\right. \nonumber\\
& \qquad \left.
-\frac{\im}{2}\omega_{\Sigma}\left(x_E,x_D\right)+\frac{\im}{2}\omega_{\Sigma}\left(x_D, x_E\right)\right)
 \rho_{\hat{\Sigma}}^{F\cdot G}\left(\cohn_{\tau_1+x_E}\tens\cohn_{\tau_2-x_D}\right)  \label{eq:wop7}
\end{align}
The first step from (\ref{eq:wop1}) to (\ref{eq:wop2}) is given by the definition of the gluing operation for observable map. Here this is really the combination of two operations. The first is the gluing of two copies of $\hat{\Sigma}$ to their disjoint union (defined as in axiom (O2a)), the second is the gluing of this region to itself (defined as in axiom (O2b)) to obtain a single copy of $\hat{\Sigma}$. Note that the gluing anomaly factor in this case is unity as follows from core axiom (T3x). (\ref{eq:wop3}) is obtained by applying Lemma~\ref{lem:relomampl} to both observable maps. To obtain (\ref{eq:wop4}) core axiom (T5a) combined with (T5b) is applied. At the same time the argument of the exponential factor is manipulated in a straightforward way. Then, Lemma~\ref{lem:relomampl} is applied again to yield expression (\ref{eq:wop5}). In the following steps relation (\ref{eq:relhamsol}) is used as well as equation (\ref{eq:hamlo}).
\end{proof}

Before proceeding, let us remark on the possible impression of a tension between Proposition~\ref{prop:woprod} and the Dirac quantization condition on the one hand and composition correspondence on the other. Indeed, it might appear that composition correspondence implies that a modified version of equation (\ref{eq:woprod}) should hold, where the exponential factor is simply removed. Actually, composition correspondence does imply a similar relation, which we may write as,
\begin{equation}
\rho_{\hat{\Sigma}}^F\aglue_{\Sigma} \rho_{\hat{\Sigma}}^G=\rho_{\hat{\Sigma}}^{\tilde{F}\cdot \tilde{G}} .
\end{equation}
This follows from axioms (X2a) and (X2b). The definition of $\tilde{F}$ is given by,
\begin{equation}
 \tilde{F}=l_{(\hat{\Sigma}\cup\hat{\Sigma};\Sigma,\overline{\Sigma'})}\circ l_{(\hat{\Sigma};\hat{\Sigma}\cup\hat{\Sigma})}(F) .
\end{equation}
(The discussion of $\tilde{G}$ is analogous.) Here we have used the notation defined in axioms (CO2a) and (CO2b) of Section~\ref{sec:cobsax}. To explain, $l_{(\hat{\Sigma};\hat{\Sigma}\cup\hat{\Sigma})}$ extends $F$ from $K_{\hat{\Sigma}}$ to $K_{\hat{\Sigma}}\times K_{\hat{\Sigma}}$ trivially, i.e., by ignoring the second factor. Then, $l_{(\hat{\Sigma}\cup\hat{\Sigma};\Sigma,\overline{\Sigma'})}$ restricts to a dependence on the subspace $K_{\hat{\Sigma}}\subseteq K_{\hat{\Sigma}}\times K_{\hat{\Sigma}}$, obtained as the subspace of those configurations that match at the boundaries to be glued. This subspace is determined by the inclusion map $k_{\hat{\Sigma}\cup\hat{\Sigma};\Sigma,\overline{\Sigma'}}:K_{\hat{\Sigma}}\toi K_{\hat{\Sigma}}\times K_{\hat{\Sigma}}$ given in axiom (C7). Here, the fact that our observables depend on more general configurations and not merely on classical solutions is crucial. For, restricting to the latter, the map $k_{\hat{\Sigma}\cup\hat{\Sigma};\Sigma,\overline{\Sigma'}}$ takes the simple form
\begin{equation}
k_{\hat{\Sigma}\cup\hat{\Sigma};\Sigma,\overline{\Sigma'}}(\xi)=(\xi,\xi)\qquad \forall\xi\in L_{\hat{\Sigma}} .
\label{eq:keel}
\end{equation}
This is due to the second of the exact sequences (\ref{eq:c7d1}) in axiom (C7). As a consequence,
\begin{equation}
\tilde{F}(\xi)=F(\xi)\qquad \forall\xi\in L_{\hat{\Sigma}} .
\end{equation}
That is, restricted to classical solutions $F$ and $\tilde{F}$ are identical. This does not mean, however, that they are identical on all configurations nor that their quantizations coincide (as is made clear by the presence of the factor (\ref{eq:sfovev}) in Proposition~\ref{prop:sfowfp}). Indeed, take an element $\xi\in K_{\hat{\Sigma}}$ that has a boundary image not in the subspace $L_{\tilde{\hat{\Sigma}}}$. Then, $k_{\hat{\Sigma}\cup\hat{\Sigma};\Sigma,\overline{\Sigma'}}$ cannot take the form (\ref{eq:keel}) for this element, as that would violate the first of the exact sequences (\ref{eq:c7d1}) in axiom (C7). Thus, $\tilde{F}$ and $F$ are in general not identical. Indeed, Proposition~\ref{prop:woprod} precisely quantifies part of this difference.

An immediate consequence of Proposition~\ref{prop:woprod} are the Weyl relations,\footnote{Note that the Weyl relations are formulated here for quantizations of exponentials of imaginary linear observables rather than for exponentials of imaginary quantized linear observables as is customary.}
\begin{equation}
\rho_{\hat{\Sigma}}^F\aglue_{\Sigma} \rho_{\hat{\Sigma}}^G=
\rho_{\hat{\Sigma}}^G\aglue_{\Sigma} \rho_{\hat{\Sigma}}^F\exp\left(2\im\,\omega_{\Sigma}(x_D,x_E)\right) .
\end{equation}
It is also straightforward now, using the derivative method of Section~\ref{sec:genobs}, to extract the corresponding result for linear observables from Proposition~\ref{prop:woprod}.
\begin{prop}
\label{prop:loprod}
Let $\Sigma$ be a hypersurface and $\hat{\Sigma}$ the associated slice region. Let $D,E\in\cobsl_{\hat{\Sigma}}$ and $x_D,x_E\in L_{\Sigma}$ be the associated Hamiltonian vector fields. Then,
\begin{equation}
\rho_{\hat{\Sigma}}^D\aglue_{\Sigma} \rho_{\hat{\Sigma}}^E=
\rho_{\hat{\Sigma}}^{D\cdot E}+\left(\frac{\im}{2}D(\eta_E)+\frac{\im}{2}E(\eta_D)-\im\,\omega_{\Sigma}(x_D,x_E)\right)\rho_{\hat{\Sigma}} .
\label{eq:loprod}
\end{equation}
\end{prop}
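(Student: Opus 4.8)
The plan is to reduce Proposition~\ref{prop:loprod} to Proposition~\ref{prop:woprod} via the generating/derivative technique of Section~\ref{sec:genobs}, treating the linear observables $D,E$ as first derivatives of scaled Weyl observables. Concretely, I introduce real parameters $\lambda,\mu$ and set $F_\lambda\defeq\exp(\im\lambda D)$ and $G_\mu\defeq\exp(\im\mu E)$, so that $\rho_{\hat{\Sigma}}^D=-\im\,\partial_\lambda\rho_{\hat{\Sigma}}^{F_\lambda}|_{\lambda=0}$ and $\rho_{\hat{\Sigma}}^E=-\im\,\partial_\mu\rho_{\hat{\Sigma}}^{G_\mu}|_{\mu=0}$. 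Since the composition $(O_1,O_2)\mapsto\aglue_{\Sigma}(O_1\aglue O_2)$ is bilinear by axioms (O2a) and (O2b), and everything is evaluated on coherent states through the smooth closed-form expression (\ref{eq:wobscohdef}), the two one-variable derivatives pass through the gluing and combine into the single mixed derivative
\begin{equation}
\rho_{\hat{\Sigma}}^D\aglue_{\Sigma}\rho_{\hat{\Sigma}}^E
=(-\im)^2\,\partial_\lambda\partial_\mu\left(\rho_{\hat{\Sigma}}^{F_\lambda}\aglue_{\Sigma}\rho_{\hat{\Sigma}}^{G_\mu}\right)\Big|_{\lambda=\mu=0} .
\end{equation}

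Next I would apply Proposition~\ref{prop:woprod} to $F_\lambda$ and $G_\mu$. The key preliminary observation is that $x_D$ and $\eta_D$ scale linearly in the observable: since $s_{\hat{\Sigma}}$ is linear (axiom (C4)) and $\eta_{\lambda D}$ is characterized as the unique element of $A_{\tilde{\hat{\Sigma}}}^{\lambda D}\cap J_{\partial\hat{\Sigma}}L_{\partial\hat{\Sigma}}$, one gets $x_{\lambda D}=\lambda x_D$ and $\eta_{\lambda D}=\lambda\eta_D$, and likewise for $E$. Each of the three terms in the exponent of (\ref{eq:woprod}) therefore acquires a factor $\lambda\mu$, so that with $F_\lambda\cdot G_\mu=\exp(\im(\lambda D+\mu E))$ Proposition~\ref{prop:woprod} gives
\begin{equation}
\rho_{\hat{\Sigma}}^{F_\lambda}\aglue_{\Sigma}\rho_{\hat{\Sigma}}^{G_\mu}
=\exp\left(\lambda\mu\,c\right)\,\rho_{\hat{\Sigma}}^{\,\exp(\im(\lambda D+\mu E))},
\qquad
c\defeq-\frac{\im}{2}D(\eta_E)-\frac{\im}{2}E(\eta_D)+\im\,\omega_{\Sigma}(x_D,x_E) .
\end{equation}

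Finally I would extract the mixed derivative at the origin by the product rule. Writing $A_{\lambda\mu}=\exp(\lambda\mu c)$ and $P_{\lambda\mu}=\rho_{\hat{\Sigma}}^{\exp(\im(\lambda D+\mu E))}$, at $\lambda=\mu=0$ one has $A=1$, $\partial_\lambda A=\partial_\mu A=0$ and $\partial_\lambda\partial_\mu A=c$, so only two terms survive,
\begin{equation}
\partial_\lambda\partial_\mu(A_{\lambda\mu}P_{\lambda\mu})\big|_{0}
= c\,P_{00}+\partial_\lambda\partial_\mu P_{\lambda\mu}\big|_{0}
= c\,\rho_{\hat{\Sigma}}+\partial_\lambda\partial_\mu P_{\lambda\mu}\big|_{0},
\end{equation}
where $P_{00}=\rho_{\hat{\Sigma}}^{\one}=\rho_{\hat{\Sigma}}$ by axiom (X1). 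The second term is precisely the quadratic generating formula of Section~\ref{sec:genobs}, i.e.\ $(-\im)^2\partial_\lambda\partial_\mu P_{\lambda\mu}|_0=\rho_{\hat{\Sigma}}^{D\cdot E}$. Multiplying through by $(-\im)^2=-1$ and substituting $-c=\frac{\im}{2}D(\eta_E)+\frac{\im}{2}E(\eta_D)-\im\,\omega_{\Sigma}(x_D,x_E)$ then reproduces exactly the asserted identity (\ref{eq:loprod}).

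I expect the only genuinely delicate point to be the justification of the first displayed equation, namely that the two one-variable derivatives may be interchanged with the bilinear gluing $\aglue_{\Sigma}\circ\aglue$ to yield a single mixed derivative of the glued expression. This follows from bilinearity of the composition together with the smooth (indeed holomorphic) dependence of (\ref{eq:wobscohdef}) on the defining data, but it is the step that must be stated carefully. Everything else — the linear scaling of $x_D$ and $\eta_D$, and the identification of the single surviving cross-term of the product rule with the commutator-type coefficient — is routine bookkeeping.
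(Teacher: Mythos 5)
Your proposal is correct and follows essentially the same route as the paper's own (much terser) proof: replace $D$ by $\lambda D$ and $E$ by $\mu E$ in Proposition~\ref{prop:woprod}, apply $-\partial_\lambda\partial_\mu$ at $\lambda=\mu=0$, and identify the surviving cross-term of the product rule with the coefficient of $\rho_{\hat{\Sigma}}$. The details you supply — the linear scaling $x_{\lambda D}=\lambda x_D$, $\eta_{\lambda D}=\lambda\eta_D$, the interchange of the derivatives with the bilinear gluing, and the identification of $-\partial_\lambda\partial_\mu P|_0$ with $\rho_{\hat{\Sigma}}^{D\cdot E}$ via Section~\ref{sec:genobs} — are exactly the bookkeeping the paper leaves implicit.
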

\begin{proof}
We replace in expression (\ref{eq:woprod}) of Proposition~\ref{prop:woprod}, $D$ with $\lambda D$ and $E$ with $\mu E$. Then we take on both sides the derivative
\begin{equation}
-\frac{\partial}{\partial \lambda}\frac{\partial}{\partial \mu}
\end{equation}
and evaluate at $\lambda=\mu=0$.
\end{proof}
Finally, from expression (\ref{eq:loprod}) it is immediate to derive the canonical commutation relations in the form (\ref{eq:diracgbf2}).

\subsection{Comparison with other quantization schemes}
\label{sec:compquant}

In this section we shall compare the Schrödinger-Feynman quantization scheme for observables with two other natural quantization schemes in the same context of linear field theory. These other quantization schemes are based on viewing observables as functions on phase space rather than on spacetime configuration space. Thus, as explained in Section~\ref{sec:obsqft}, composition correspondence must fail for these quantization schemes. We shall be able to quantify this failure.

The first quantization scheme to be considered is \emph{Berezin-Toeplitz quantization}. In the context of the GBF this was introduced in \cite{Oe:obsgbf}. It takes an extremely simple form in the holomorphic representation. Let $M$ be a region and $F:L_{\tilde M}\to\C$ be a function (with suitable analyticity properties, see \cite{Oe:obsgbf}). Then its Berezin-Toeplitz quantization, denoted here $\rho_M^{\ano{F}}:\cH_{\partial M}\to\C$, is given for a state $\psi\in\cH_{\partial M}$ by the integral formula,
\begin{equation}
 \rho_M^{\ano{F}}(\psi)\defeq\int_{\hat{L}_{\tilde M}} \psi^\hr(\xi) F(\xi)\,\xd\nu_{\tilde M}(\xi) .
\end{equation}
Here $\psi^\hr$ denotes the wave function of the state $\psi$ in the holomorphic representation. As shown in \cite{Oe:obsgbf}, this quantization scheme reduces in the case of a slice region precisely to anti-normal ordering. It was also shown in \cite{Oe:obsgbf} that this quantization scheme satisfies the coherent factorization property, i.e., the analogue of Proposition~\ref{prop:sfocfp}. Restricting to Weyl observables yields the following analogue of the factorization property of Proposition~\ref{prop:sfowfp}.

\begin{prop}
\label{prop:btwfp}
Let $M$ be a region, $D:L_{M}\to\R$ linear and continuous and $\tau\in L_{\partial M}$. Define $F:L_{M}\to\C$ by $F\defeq \exp(\im\, D)$. Define $\hat{\tau}\in L_{\partial M}^\C$ as $\hat{\tau}\defeq \tau^{\mathrm{R}}-\im\tau^{\mathrm{I}}$, where $\tau=\tau^{\mathrm{R}}+J_{\partial M}\tau^{\mathrm{I}}$ and $\tau^{\mathrm{R}},\tau^{\mathrm{I}}\in L_{\tilde M}$. Then,
\begin{equation}
 \rho_M^{\ano{F}}\left(\cohn_{\tau}\right)=\rho_M(\cohn_{\tau}) F\left(\hat{\tau}\right) \rho_M^{\ano{F}}\left(\cohn_0\right) .
\label{eq:btfactid}
\end{equation}
Moreover, we have
\begin{equation}
\rho_M^{\ano{F}}\left(\cohn_0\right)=\exp\left(-g_{\partial M}(\eta_D,\eta_D)\right) ,
\label{eq:btvev}
\end{equation}
where $\eta_D\in J_{\partial M}L_{\partial M}$ is uniquely determined by $D(\xi)=2\omega_{\partial M}(\xi,\eta_D)$ for all $\xi\in L_M$.
\end{prop}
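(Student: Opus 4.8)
The plan is to establish the two identities separately: the first purely formally from the coherent factorization property of Berezin-Toeplitz quantization (stated above to hold by \cite{Oe:obsgbf}, as the analogue of Proposition~\ref{prop:sfocfp}), and the second by an explicit Gaussian integration against $\nu_{\tilde M}$. This mirrors the structure of Proposition~\ref{prop:sfowfp} for the Schrödinger-Feynman scheme, and in particular the reversibility remark following Section~\ref{sec:factorization}.

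For the first identity (\ref{eq:btfactid}) I would run that argument in reverse. The coherent factorization property for $\ano{\cdot}$ gives $\rho_M^{\ano{F}}(\cohn_{\tau})=\rho_M(\cohn_{\tau})\,\rho_M^{\ano{F'}}(\cohn_0)$ with $F'(\xi)\defeq F(\xi+\hat{\tau})$. Extending $D$ complex-linearly to $L_{\partial M}^\C$ and using the Weyl structure $F=\exp(\im\,D)$, one has $F'(\xi)=\exp(\im\,D(\xi+\hat{\tau}))=F(\hat{\tau})\,F(\xi)$, so $F'=F(\hat{\tau})\,F$ as observables up to the fixed scalar $F(\hat{\tau})$. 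Since the defining integral for $\rho_M^{\ano{\cdot}}$ is $\C$-linear in the observable, $\rho_M^{\ano{F'}}=F(\hat{\tau})\,\rho_M^{\ano{F}}$; evaluating at $\cohn_0$ and substituting back yields (\ref{eq:btfactid}). No analyticity issue arises, as $F(\hat{\tau})$ is merely a constant.

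For the second identity (\ref{eq:btvev}) I would compute directly from the integral formula for $\rho_M^{\ano{F}}$. Inserting $\psi=\cohn_0$, whose holomorphic wave function is the constant $1$, reduces the expression to the pure Gaussian integral $\int_{\hat{L}_{\tilde M}}\exp(\im\,D(\xi))\,\xd\nu_{\tilde M}(\xi)$, i.e.\ the characteristic functional of $\nu_{\tilde M}$ on the real-linear functional $D|_{L_{\tilde M}}$. Using $D(\xi)=2\omega_{\partial M}(\xi,\eta_D)=g_{\partial M}(\xi,-J_{\partial M}\eta_D)$ and the uniqueness of $\eta_D$, which forces $\eta_D\in J_{\partial M}L_{\tilde M}$ (any $L_{\tilde M}$-component would be invisible to $\omega_{\partial M}(\xi,\cdot)$ for $\xi\in L_{\tilde M}$ by the Lagrangian property), I would set $w\defeq -J_{\partial M}\eta_D\in L_{\tilde M}$, so that $D(\xi)=g_{\partial M}(\xi,w)$ and $g_{\partial M}(w,w)=g_{\partial M}(\eta_D,\eta_D)$. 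The Gaussian characteristic functional of $\nu_{\tilde M}$ then produces $\exp(-g_{\partial M}(\eta_D,\eta_D))$.

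The one genuinely delicate point will be fixing the normalization of $\nu_{\tilde M}$ so that the coefficient in (\ref{eq:btvev}) is exactly $1$ rather than $\tfrac12$ in front of $g_{\partial M}(\eta_D,\eta_D)$; this factor of two is a hallmark of the holomorphic-representation conventions (compare $\omega=\tfrac12\Im\{\cdot,\cdot\}$ and $D=2\omega_{\partial M}(\cdot,\eta_D)$). I would pin it down from the Gaussian-integration conventions of \cite{Oe:holomorphic,Oe:obsgbf}, cross-checking against the reproduction of the plain amplitude (\ref{eq:amplcoh}) on coherent states (the $F=1$ specialization of the very same integral) to confirm the covariance is calibrated correctly. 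Everything else is routine, and the resulting vacuum factor $\exp(-g_{\partial M}(\eta_D,\eta_D))$ is to be contrasted with the Schrödinger-Feynman value (\ref{eq:sfovev}), the discrepancy encoding the difference between the two quantization prescriptions.
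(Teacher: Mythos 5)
Your proposal follows essentially the same route as the paper's own proof: the first identity is obtained exactly as you describe, from the coherent factorization property (Proposition~4.1 of \cite{Oe:obsgbf}) together with linearity of the quantization and the multiplicative identity $F'=F(\hat{\tau})\,F$ for Weyl observables, and the second identity is likewise computed in the paper as the Gaussian integral $\int_{\hat{L}_{\tilde M}}\exp\left(-\im\,g_{\partial M}(\xi,J_{\partial M}\eta_D)\right)\xd\nu_{\tilde M}(\xi)$, evaluated via the replace-$\im$-by-a-complex-variable holomorphy trick and Proposition~3.11 of \cite{Oe:holomorphic}, which is precisely the normalization reference that settles the factor-of-two issue you flag. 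The only difference is presentational (characteristic functional language versus the explicit analytic-continuation computation), so your argument is correct and essentially identical to the paper's.
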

\begin{proof}
The coherent factorization property (i.e., the analogue of relation (\ref{eq:sfocfp}) of Proposition~\ref{prop:sfocfp}) for the Berezin-Toeplitz quantization scheme was proven in Proposition~4.1 of \cite{Oe:obsgbf}. Linearity of the quantization scheme then implies relation (\ref{eq:btfactid}). We proceed to demonstrate relation (\ref{eq:btvev}).
\begin{align}
\rho_M^{\ano{F}}\left(\cohn_0\right) & =\int_{\hat{L}_{\tilde M}} F(\xi)\,\xd\nu_{\tilde{M}}(\xi)\\
& =\int_{\hat{L}_{\tilde M}} \exp\left(2\im\, \omega_{\partial M}(\xi,\eta_D)\right)\,\xd\nu_{\tilde{M}}(\xi)\\
& =\int_{\hat{L}_{\tilde M}} \exp\left(-\im\, g_{\partial M}(\xi,J_{\partial M}\eta_D)\right)\,\xd\nu_{\tilde{M}}(\xi) \label{eq:bt3} \\
& =\exp\left(-g_{\partial M}(J_{\partial M}\eta_D,J_{\partial M}\eta_D)\right) \label{eq:bt4} \\
& =\exp\left(-g_{\partial M}(\eta_D,\eta_D)\right)
\end{align}
The notation of the integrals here is as in \cite{Oe:holomorphic}. The step from (\ref{eq:bt3}) to (\ref{eq:bt4}) is performed with techniques as used in that paper. Concretely, the $\im$ in the integrand is replaced with a complex variable, it is noted that the integrand is holomorphic in this variable. The integral is then performed for real values of the variable. Since the integral must also be holomorphic, the variable in the resulting expression is replaced again by $\im$. The integral itself is evaluated using Proposition~3.11 of \cite{Oe:holomorphic}.
\end{proof}

The other quantization scheme to be compared here is \emph{normal ordered quantization}. This was adapted to the GBF in \cite{Oe:obsgbf}. It takes a particularly simple form for coherent states. Thus, given a region $M$, a function $F:L_{\tilde M}\to\C$ (again with sufficient analyticity properties) we denote the quantization of $F$ by $\rho_M^{\no{F}}:\cH_{\partial M}\to\C$. Given $\tau\in L_{\partial M}$ we then have by definition,
\begin{equation}
\rho_M^{\no{F}}\left(\cohn_{\tau}\right)\defeq \rho_M\left(\cohn_{\tau}\right) F(\hat{\tau}) .
\end{equation}
In the special case of a slice region this yields precisely the usual concept of normal ordering. Also this quantization scheme satisfies the coherent factorization property. Moreover, it is immediate to see that for a Weyl observable $F$ we have the ``vacuum expectation value'',
\begin{equation}
\rho_M^{\no{F}}\left(\cohn_{0}\right)=1 .
\label{eq:novev}
\end{equation}

Unsurprisingly, all three quantization schemes coincide for linear observables. Indeed, on coherent states the quantum observable map is in all cases given by the analogue of expression (\ref{eq:qlinobscoh}). Thus, in particular, all satisfy the canonical commutation relations for linear observables, (\ref{eq:diracgbf2}). For Weyl observables the difference between the schemes is given by a constant, depending only on the observable. This follows from the coherent factorization property. Moreover, considering a suitable class of observables, both, the Berezin-Toeplitz scheme and the normal ordered scheme are easily seen to satisfy axioms (O1) and (O2a) of Section~\ref{sec:obsgbf} as well as axioms (X1) and (X2a) of Section~\ref{sec:compcor}. We shall now consider axioms (O2b) and (X2b) (composition correspondence). To this end we fix the observable algebras to consist of linear combinations of Weyl observables. More precisely, we shall assume, for each quantization scheme, an adapted version of the axioms of Section~\ref{sec:classlinobs}.

\begin{prop}
Let $M$ be a region with its boundary decomposing as a disjoint union $\partial M=\Sigma_1\cup\Sigma\cup\overline{\Sigma'}$ and $M_1$ as given in (T5b). Let $D\in \cobsl_M$ and $D_1\in \cobsl_{M_1}$ be the induced observable. Define $\eta_D\in J_{\partial M}L_{\tilde{M}}$ and $\eta_{D_1}\in J_{\partial M_1}L_{\tilde{M}_1}$ as previously. Let $F$, $F_1$ be the corresponding Weyl observables. Then, there exists $\eta_\Sigma\in L_\Sigma$ such that $(\eta_{D_1},\eta_\Sigma,\eta_\Sigma)-\eta_D\in L_{\tilde M}$. Moreover,
\begin{align}
& \aglue_{\Sigma}\left(\rho_M^{\ano{F}}\right)=\rho_{M_1}^{\ano{F_1}} \nonumber\\
& \quad \exp\left(\im\omega_{\partial M}\left((\eta_{D_1},\eta_{\Sigma},\eta_{\Sigma}),\eta_D\right)+\frac{1}{2} g_{\partial M_1}(\eta_{D_1},\eta_{D_1})-\frac{1}{2}g_{\partial M}(\eta_{D},\eta_{D})\right) ,
\label{eq:ccvano} \\
& \aglue_{\Sigma}\left(\rho_M^{\no{F}}\right)=\rho_{M_1}^{\no{F_1}} \nonumber\\
& \quad \exp\left(\im\omega_{\partial M}\left((\eta_{D_1},\eta_{\Sigma},\eta_{\Sigma}),\eta_D\right)-\frac{1}{2} g_{\partial M_1}(\eta_{D_1},\eta_{D_1})+\frac{1}{2}g_{\partial M}(\eta_{D},\eta_{D})\right) .
\label{eq:ccvno}
\end{align}
\end{prop}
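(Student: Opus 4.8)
The plan is to bypass the gluing Gaussian integral entirely and instead exploit the coherent factorization property, which shows that on coherent states each of the two schemes is just a \emph{constant} multiple of the Schrödinger-Feynman quantization. Comparing the factorization identity (\ref{eq:btfactid}) of Proposition~\ref{prop:btwfp} with (\ref{eq:sfofactid}) of Proposition~\ref{prop:sfowfp}, the common factor $\rho_M(\cohn_\tau)\,F(\hat\tau)$ cancels, so on the dense span of coherent states $\rho_M^{\ano{F}}=C_M^{\mathrm{a}}\,\rho_M^F$ with $C_M^{\mathrm{a}}\defeq\rho_M^{\ano{F}}(\cohn_0)/\rho_M^F(\cohn_0)$ a constant; similarly, using (\ref{eq:novev}), $\rho_M^{\no{F}}=C_M^{\mathrm{n}}\,\rho_M^F$ with $C_M^{\mathrm{n}}\defeq 1/\rho_M^F(\cohn_0)$, and the analogous relations hold for $M_1,F_1,D_1$. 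Since $\aglue_{\Sigma}$ is linear and composition correspondence is already known for the Schrödinger-Feynman scheme, i.e.\ $\aglue_{\Sigma}(\rho_M^F)=\rho_{M_1}^{F_1}$ by (\ref{eq:obsglueax}), I would obtain at once
\begin{equation}
\aglue_{\Sigma}\left(\rho_M^{\ano{F}}\right)=\frac{C_M^{\mathrm{a}}}{C_{M_1}^{\mathrm{a}}}\,\rho_{M_1}^{\ano{F_1}},\qquad
\aglue_{\Sigma}\left(\rho_M^{\no{F}}\right)=\frac{C_M^{\mathrm{n}}}{C_{M_1}^{\mathrm{n}}}\,\rho_{M_1}^{\no{F_1}} ,
\end{equation}
reducing the whole proposition to computing the two constant ratios. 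Here I use that the distinguished element $\eta_D$ from Proposition~\ref{prop:sfowfp} lies in $A_{\tilde M}^D$ and coincides, by uniqueness via (\ref{eq:lobprop}), with the $\eta_D$ of Proposition~\ref{prop:btwfp}.

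First I would produce $\eta_\Sigma$ and the geometric identity behind the ratios. Lifting $\eta_{D_1}\in A_{\tilde M_1}^{D_1}$ to a boundary solution of the $S_M+D$ theory via the gluing map $k_{M;\Sigma,\overline{\Sigma'}}$ of axiom (C7), its $\Sigma$- and $\overline{\Sigma'}$-components agree by the left exact sequence of (\ref{eq:c7d1}); calling this common value $\eta_\Sigma$, the lift is $\zeta\defeq(\eta_{D_1},\eta_\Sigma,\eta_\Sigma)\in A_{\tilde M}^D$. As $\eta_D\in A_{\tilde M}^D$ too, the difference $\zeta-\eta_D\in L_{\tilde M}$, which is the asserted existence. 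The compatibility of $D$ with its induced observable $D_1$ (diagram (\ref{eq:c7d3}) together with the action relation of axiom (C7)) gives $D(\zeta)=D_1(\eta_{D_1})$, exactly as in the step $D(\eta)=D_1(\eta_1)$ of the proof of composition correspondence. Applying (\ref{eq:lobprop}) to $\zeta-\eta_D\in L_{\tilde M}$ then yields the key identity
\begin{equation}
D(\eta_D)=D(\zeta)-2\omega_{\partial M}(\zeta,\eta_D)=D_1(\eta_{D_1})-2\omega_{\partial M}\left((\eta_{D_1},\eta_\Sigma,\eta_\Sigma),\eta_D\right) .
\end{equation}

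Finally I would assemble the constants. For Berezin-Toeplitz, (\ref{eq:btvev}) and (\ref{eq:sfovev}) give $C_M^{\mathrm{a}}=\exp\!\left(-\tfrac12 g_{\partial M}(\eta_D,\eta_D)-\tfrac{\im}{2}D(\eta_D)\right)$, and substituting the key identity into $C_M^{\mathrm{a}}/C_{M_1}^{\mathrm{a}}$ converts the combination $-\tfrac{\im}{2}D(\eta_D)+\tfrac{\im}{2}D_1(\eta_{D_1})$ into $\im\,\omega_{\partial M}(\zeta,\eta_D)$, reproducing (\ref{eq:ccvano}). For normal ordering only the vacuum value changes, giving $C_M^{\mathrm{n}}=\exp\!\left(+\tfrac12 g_{\partial M}(\eta_D,\eta_D)-\tfrac{\im}{2}D(\eta_D)\right)$; this flips the signs of the two $g$-terms while leaving the $\omega$-term untouched, yielding (\ref{eq:ccvno}). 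The main obstacle is not computational but structural: one must verify that the same distinguished elements $\eta_D,\eta_{D_1}$ enter both the vacuum-value formulas and the geometric identity, and carefully establish $D(\zeta)=D_1(\eta_{D_1})$ from axioms (C6)/(C7); once this is secured the Gaussian integrals are entirely avoided and only the bookkeeping of the half-factors of $g$ remains.
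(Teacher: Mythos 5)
Your proposal is correct and follows essentially the same route as the paper: both reduce the claim to the Schrödinger--Feynman composition correspondence (\ref{eq:obsglueax}) via the coherent factorization property, so that only the ratio of vacuum values $\rho_{M_1}^{F_1}(\cohn_0)/\rho_M^{F}(\cohn_0)$ remains, which is then evaluated with (\ref{eq:sfovev}), (\ref{eq:btvev}), (\ref{eq:novev}) and the identity $\frac{\im}{2}D((\eta_{D_1},\eta_\Sigma,\eta_\Sigma)-\eta_D)=\im\,\omega_{\partial M}((\eta_{D_1},\eta_\Sigma,\eta_\Sigma),\eta_D)$ coming from (\ref{eq:lobprop}). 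The only (cosmetic) difference is that you obtain $\eta_\Sigma$ by lifting an interior solution through the gluing maps of axiom (C7), whereas the paper establishes its existence abstractly by solving the symplectic condition $D(\xi)=2\omega_{\partial M}(\xi,(\eta_{D_1},\eta_\Sigma,\eta_\Sigma))$ on $L_{\tilde M}$; both arguments are valid.
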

\begin{proof}
Since the Schrödinger-Feynman quantization scheme satisfies axiom (X2b), that is the equation (\ref{eq:obsglueax}), we derive the corresponding equations for the other quantization schemes by relating them to this scheme through the coherent factorization property. In this manner of proceeding we make the implicit assumption that the observable $D$ extends to an observable on the configuration space $K_M$. However, we shall see that this extension is irrelevant.

As for the existence of $\eta_\Sigma$ observe that it needs to satisfy the condition
\begin{equation}
D(\xi)=2\omega_{\partial M}(\xi,(\eta_{D_1},\eta_\Sigma,\eta_\Sigma))
\label{eq:ccv1}
\end{equation}
for all $\xi\in L_{\tilde{M}}$. This is automatic for those $\xi$ taking the form $(\xi_1,\xi_\Sigma,\xi_\Sigma)$ as these are in $L_{\tilde{M}_1}$. It is thus sufficient to satisfy this for $\xi$ taking the form $(\xi_1,\xi_\Sigma,-\xi_\Sigma)$. Then equation (\ref{eq:ccv1}) takes the form
\begin{equation}
D(\xi)=2\omega_{\Sigma_1}(\xi_1,\eta_{D_1})
 +4\omega_\Sigma(\xi_\Sigma,\eta_\Sigma) .
\label{eq:ccv2}
\end{equation}
We note that if $\xi=(\xi_1,\xi_\Sigma,-\xi_\Sigma)$ and $\xi'=(\xi_1',\xi_\Sigma,-\xi_\Sigma)$ are both in $L_{\tilde{M}}$ we would have $\xi_1-\xi_1'\in L_{\tilde{M}_1}$ and hence $D(\xi)=D(\xi')$ in the above expression. Hence, there exists $\eta_\Sigma$ such that (\ref{eq:ccv2}) is satisfied for all $\xi=(\xi_1,\xi_\Sigma,-\xi_\Sigma)\in L_{\tilde{M}}$ and thus (\ref{eq:ccv1}) for all $\xi\in  L_{\tilde{M}}$. We proceed to observe that in the Schrödinger-Feynman setting $(\eta_{D_1},\eta_\Sigma,\eta_\Sigma)\in A_{\tilde{M}}^D$.

We start by demonstrating (\ref{eq:ccvno}). Using the coherent factorization property we obtain,
\begin{equation}
\aglue_{\Sigma}\left(\rho_M^{\no{F}}\right)=
\aglue_{\Sigma}\left(\rho_M^{F}\right)
\frac{\rho_M^{\no{F}}(\cohn_0)}{\rho_M^{F}(\cohn_0)}
= \rho_{M_1}^{F_1}
\frac{\rho_M^{\no{F}}(\cohn_0)}{\rho_M^{F}(\cohn_0)}
=
\rho_{M_1}^{\no{F_1}} \frac{\rho_M^{\no{F}}(\cohn_0)\rho_{M_1}^{F_1}(\cohn_0)}{\rho_{M_1}^{\no{F_1}}(\cohn_0)\rho_M^{F}(\cohn_0)} .
\end{equation}
Noting (\ref{eq:novev}) it remains to evaluate the quotient,
\begin{align}
& \frac{\rho_{M_1}^{F_1}(\cohn_0)}{\rho_M^{F}(\cohn_0)}
  = \exp\left(\frac{\im}{2}D_1(\eta_{D_1})-\frac{\im}{2}D(\eta_{D})
 -\frac{1}{2}g_{\partial M_1}(\eta_{D_1},\eta_{D_1})
 +\frac{1}{2}g_{\partial M}(\eta_{D},\eta_{D})\right) \\
 & = \exp\left(\frac{\im}{2}D((\eta_{D_1},\eta_\Sigma,\eta_\Sigma)-\eta_{D})
 -\frac{1}{2}g_{\partial M_1}(\eta_{D_1},\eta_{D_1})
 +\frac{1}{2}g_{\partial M}(\eta_{D},\eta_{D})\right) \\
& = \exp\left(\im\omega_{\partial M}\left((\eta_{D_1},\eta_{\Sigma},\eta_{\Sigma})-\eta_D,\eta_D\right)-\frac{1}{2} g_{\partial M_1}(\eta_{D_1},\eta_{D_1})+\frac{1}{2}g_{\partial M}(\eta_{D},\eta_{D})\right) \\
& = \exp\left(\im\omega_{\partial M}\left((\eta_{D_1},\eta_{\Sigma},\eta_{\Sigma}),\eta_D\right)-\frac{1}{2} g_{\partial M_1}(\eta_{D_1},\eta_{D_1})+\frac{1}{2}g_{\partial M}(\eta_{D},\eta_{D})\right) .
\end{align}
Showing (\ref{eq:ccvano}) is now straightforward by taking account of the difference between (\ref{eq:btvev}) and (\ref{eq:novev}). 
\end{proof}

While this result quantifies the violation of axiom (X2b) by Berezin-Toeplitz quantization and by normal ordered quantization, at the same time it shows that these schemes do satisfy axiom (O2b). Indeed, relations (\ref{eq:ccvano}) and (\ref{eq:ccvno}) show that the gluing operation of (T5b) applied to a Weyl observable in $\cobsw_M$ yields the multiple of a Weyl observable in $\cobsw_{M_1}$. Thus, taking the spaces $\cobs_M$ to consist of all linear combinations of elements of $\cobsw_M$, i.e.\ Weyl observables, yields the closure condition embodied by axiom (O2b). Moreover, this extends to more general observables generated as in Section~\ref{sec:genobs}.


\section{Conclusions and outlook}
\label{sec:conclusions}

The main result of Section~\ref{sec:sfaffine} of this paper is the constructive demonstration that the Feynman path integral, together with the Schrödinger representation, can be made into a rigorous and functorial quantization scheme for linear and affine field theory in the context of the general boundary formulation (GBF). Moreover, the so defined Schrödinger-Feynman quantization scheme is shown to be equivalent to the holomorphic quantization scheme introduced in \cite{Oe:holomorphic,Oe:affine} and based on geometric quantization. Even though we have not provided the details here it should be clear that this result may also serve as a rigorous underpinning for previous, more heuristic applications of Schrödinger-Feynman quantization in the GBF \cite{Oe:timelike,Oe:kgtl,CoOe:spsmatrix,CoOe:smatrixgbf,CoOe:2deucl,Col:desitterletter,CoOe:unitary,Col:desitterpaper,CoDo:smatrixcsp}.

In the spirit of the remarks made at the beginning of the introduction, one might expect the most useful applications of the presented quantization scheme to arise precisely in those areas where the conceptual basis of the standard approach to quantum field theory becomes shaky. One such area is quantum field theory in curved spacetime. Here we have in mind in particular settings where spacetime fails to be globally hyperbolic, such as Anti-de~Sitter space \cite{CDO:adsproc}.

In the second part of the paper, Section~\ref{sec:obs}, we have extended the Schrödinger-Feynman quantization scheme to include observables. Crucially, the relevant concept of quantum observable here is that of an observable map, introduced in \cite{Oe:obsgbf}, and not the more conventional one of an operator on Hilbert space (although the latter can be recovered from the former). As was argued already in \cite{Oe:obsgbf} and was made rigorous here, the former allows us to capture operationally relevant properties of observables in quantum field theory which the latter does not. These properties concern the composition of observables and are manifest in the time-ordered product and conveniently encoded through the Feynman path integral. In the present framework these properties are captured through what we have called here \emph{composition correspondence}, formalized in Section~\ref{sec:compcor}. We recall that this basically means that the composition of the quantization of classical observables with disjoint spacetime support equals the quantization of the ordinary product of the classical observables.

For the Feynman quantization of observables we focused first on Weyl observables (i.e., observables that are exponentials of imaginary linear observables). On the one hand these have the advantage that a much larger class of observables can be generated from them by functional differentiation (Section~\ref{sec:genobs}). On the other hand, they are simple enough so that we could derive in Section~\ref{sec:feynweyl} the closed formula (\ref{eq:wobscoh}) for their quantization. The crucial observation here was that from the point of view of the path integral, the insertion of a Weyl observable is essentially the same thing as modifying the action by adding a linear term. This allowed us to recur to a corresponding result of Section~\ref{sec:feynquant} from the first part of the paper. Based on these ingredients we were able in Section~\ref{sec:gbqftlow} to give a rigorous proof that the so defined quantization of classical linear field theory with observables defined in Section~\ref{sec:classlinobs} satisfies not only the core axioms of the GBF, but also the axioms concerning quantum observables and their relation to classical observables, including composition correspondence.

Another property of quantum field theory that generalizes nicely in the present setting is the form of the ``generating function'' or ``kernel'' of the S-matrix. We showed in Section~\ref{sec:factorization} that this is an example of a much more general factorization formula (Proposition~\ref{prop:sfowfp}). As in ordinary quantum field theory this suggests itself as the starting point of a perturbation theory. Incidentally, this means that even in (general boundary) quantum theories quite unlike ordinary quantum field theory, e.g., which are not based on a metric background spacetime, interactions can be expanded in terms of some kind of Feynman diagrams.

A version (Proposition~\ref{prop:sfocfp}) of the mentioned factorization formula which we call the \emph{coherent factorization property} was already introduced in \cite{Oe:obsgbf} and shown there to hold for two other quantization schemes: Berezin-Toeplitz quantization and normal ordering. The latter are ``simpler'' quantization schemes than the one presented here in the sense that observables there are only functions on phase space and not on spacetime configuration space. However, as we argued in Section~\ref{sec:obsqft} this implies that they cannot satisfy composition correspondence. Indeed, we were able to quantify this failure here in Section~\ref{sec:compquant}.

As explained in the paper, observables on a slice region, i.e., a region that is ``infinitesimally thin'' can be identified with operators on the associated Hilbert space. We were able to prove in Section~\ref{sec:diracquant} that linear observables of this type satisfy the canonical commutation relations in the presented quantization scheme. In the special case of quantum field theory with the slice region determined by an equal-time hypersurface these are precisely the ordinary equal-time commutation relations. This means in particular that we recover the ordinary operator product for observables in those ``infinitesimal'' regions.

If there is a well-defined notion of time-evolution, such as in Minkowski space, it is also possible to recover the operator product of general observables. Fix an equal-time hypersurface $\Sigma_1$. For simplicity suppose an observable $F$ of interest has support only in the future of the hypersurface. Say $\Sigma_2$ is an equal-time hypersurface in the future of the support of $F$. Both, the observable map associated to $F$ and the amplitude map for the so defined region can be converted to maps $\cH_{\Sigma_1}\to\cH_{\Sigma_2}$ between the boundary component Hilbert spaces. Call these $\check{\rho}^F$ and $\check{\rho}$ respectively. Then the composition $\check{\rho}^{-1}\circ \check{\rho}^F$ yields an operator on $\cH_{\Sigma_1}$. The operator product induced in this way is precisely the usual operator product of quantum field theory in the case of Minkowski space. Note that this method of composing forward and backward propagation in time is well known in standard quantum theory \cite{Sch:brownianosc}.

In any case, it is clear that the operator product is a less fundamental notion from the present perspective than the time-ordered product. Rather than a reason for worry this confirms that we are on the right track in extracting the (operationally relevant) features of observables in quantum field theory. One may take this as an indication to speculate that perhaps the canonical commutation relations are not such a natural quantization condition to impose after all. Perhaps, one should rather take the composition correspondence property more seriously. Of course, this in itself does not seem to be sufficient to single out ``good'' quantizations. As a first thought, it might be complemented by some kind of surjectivity and injectivity conditions for the quantization map $\quant_M$ from classical to quantum observable spaces, (with injectivity possibly restricted to the behavior of observables on classical solutions).

A slightly strange aspect of classical observables on a slice region $\hat{\Sigma}$ is that it is unclear what the associated spacetime configuration space $K_{\hat{\Sigma}}$ should be. It can definitely not be made equal to the phase space $L_{\hat{\Sigma}}$. One may circumvent this problem by simply not defining classical observables on slice regions. This would have no further effect on any of the axioms or results obtained, with the exception of the canonical commutation relations of Section~\ref{sec:diracquant}. However, these could then be recovered by a limit applied to the procedure outlined above for obtaining operators for regular regions. Even for regular regions we have made no attempt here to further specify the nature of the spacetime-configuration spaces $K_M$. However, as shown in the explicit examples in Sections~\ref{sec:factorization} and \ref{sec:genobs}, we may be able to clearly characterize observables of interest without actually specifying the relevant space $K_M$ explicitly. Another approach to this problem would be to consider $K_M$ as the union of all the spaces $A_M^D$ of solutions associated to all possible linear observables $D$ in the region $M$.

Let us also mention a weakness of the axiomatic system presented in Section~\ref{sec:classlinobs}. This has to do with the relation between interior and boundary data for observables. In most physically sensible situations one should expect a classical solution near the boundary to have at most one unique prolongation to all of the interior of the region.\footnote{Recall that we are in a setting without gauge symmetries here.} Indeed, it is conceivable that physically realistic theories can be formulated exclusively with regions that satisfy this requirement. Nevertheless, the axioms for linear or affine classical field theory presented here and in \cite{Oe:holomorphic,Oe:affine} allow for the possibility of a degeneracy, i.e., distinct classical solutions in the interior of a region may not be distinguishable near the boundary. One motivation for this degeneracy is to admit certain topological quantum field theories that are of great mathematical interest. The mentioned situation arises in particular when gluing regions with boundaries to a region without boundary, for example to obtain invariants of manifolds. Now axiom (C5) of Section~\ref{sec:classlinobs} requires that the value of observables on solutions in the interior of a region shall nevertheless only depend on the behavior of those solutions near the boundary. Indeed, the general validity of the composition correspondence axiom (X2b) (Section~\ref{sec:compcor}) hinges on that assumption. This may severely and unreasonably limit the set of admissible observables in theories where degeneracies are important. One way to remedy this would be to not require the relation (\ref{eq:lobprop}) of axiom (C5) for all observables in a region, but only for a specified subclass. The commutative diagram of axiom (X2b) would then only hold for those observables which are in this special class, both for the unglued as well as for the glued region. On the other hand, axiom (O2b) may still be valid more generally. (Compare the methods of Section~\ref{sec:compquant} for an indication how this might be the case.)

Finally, in case this has not become sufficiently clear, we emphasize again that all the obtained results apply to ordinary quantum field theory in particular (at least with admissible hypersurfaces defined to be equal-time hypersurfaces in Minkowski space). On the other hand, they also apply to a potentially much larger class of theories, including theories not based on a metric background spacetime.

\subsection*{Acknowledgments}

I would like to thank Daniele Colosi for many stimulating discussions before and during preparation of this paper. Moreover, I thank Daniele Colosi and Max Dohse for reading a draft version of this paper and alerting me to various typos.

\bibliographystyle{stdnodoi} 
\bibliography{stdrefsb}
\end{document}